\newtheorem*{rep@theorem}{\rep@title}
\newcommand{\newreptheorem}[2]{%
\newenvironment{rep#1}[1]{%
 \def\rep@title{#2 \ref{##1}}%
 \begin{rep@theorem}}%
 {\end{rep@theorem}}}
\def\zeit{\number\shorthour:\ifnum\shortminute<10 0\number\shortminute
\else\number\shortminute\fi}
\newcommand{\mydriver}{hypertex}
 \renewcommand{\mydriver}{pdftex}
\newcommand{\vol}{\textrm{vol}}
\newcommand{\pot}{\textrm{pot}}
\newcommand{\poly}{\textrm{poly}}
\newcommand{\LL}{{\bf \mathcal{L}}}
\newcommand{\rem}{\textrm{rem}}
\newcommand{\reg}{\textrm{reg}}
\newcommand{\p}{\textbf{p}}
\newcommand{\q}{\textbf{q}}
\newcommand{\vv}{\textbf{v}}
\newcommand{\1}{\textbf{1}}
\newcommand{\D}{\textbf{D}}
\newcommand{\A}{\textbf{A}}
\newcommand{\W}{\textbf{W}}
\newcommand{\I}{\textbf{I}}
\newcommand{\red}{\textcolor{red}}
\theoremstyle{plain}
\newtheorem{theorem}{Theorem}[section]
\newtheorem{fact}[theorem]{Fact}
\newtheorem{lemma}[theorem]{Lemma}
\newtheorem{claim}[theorem]{Claim}
\newtheorem{definition}[theorem]{Definition}
\theoremstyle{definition}
\newenvironment{remark}[1][Remark.]{\begin{trivlist}
\item[\hskip \labelsep {\bfseries #1}]}{\end{trivlist}}
\newcommand{\centr}{\ensuremath{\Delta}}
\newcommand{\junk}[1]{{}}
 \providecommand{\norm}[1]{\lVert#1\rVert}
\title{Testing Cluster Structure of Graphs}
\author{Artur Czumaj\footnote{Department of Computer Science and Centre for Discrete Mathematics and its Applications (DIMAP), University of Warwick. Supported in part by DIMAP and by EPSRC grant EP/J021814/1. Email: \url{A.Czumaj@warwick.ac.uk}.}
\and
Pan Peng\footnote{Department of Computer Science, TU Dortmund; State Key Laboratory of Computer Science, Institute of Software, Chinese Academy of Sciences. Supported by ERC grant No. 307696.
Email: \url{pan.peng@tu-dortmund.de}.}
\and
Christian Sohler\footnote{Department of Computer Science, TU Dortmund. Supported by ERC grant No. 307696.
Email: \url{christian.sohler@tu-dortmund.de}.}
}
\date{}
\begin{document}


\begin{titlepage}

\maketitle

\thispagestyle{empty}


\begin{abstract}
We study the problem of recognizing the cluster structure of a graph in the framework of property testing in the bounded degree model. Given a parameter $\varepsilon$, a $d$-bounded degree graph is defined to be \emph{$(k, \phi)$-clusterable}, if it can be partitioned into no more than $k$ parts, such that the (inner) conductance of the induced subgraph on each part is at least $\phi$ and the (outer) conductance of each part is at most $c_{d,k}\varepsilon^4\phi^2$, where $c_{d,k}$ depends only on $d,k$.
Our main result is a sublinear algorithm with
the running time
$\widetilde{O}(\sqrt{n}\cdot\poly(\phi,k,1/\varepsilon))$
%
that takes as input a graph with maximum degree bounded by $d$, parameters $k$, $\phi$, $\varepsilon$, and with probability at least $\frac23$, accepts the graph if it is $(k,\phi)$-clusterable and rejects the graph if it is $\varepsilon$-far from $(k, \phi^*)$-clusterable for
$\phi^* = c'_{d,k}\frac{\phi^2 \varepsilon^4}{\log n}$, where $c'_{d,k}$ depends only on $d,k$.
%
By the lower bound of $\Omega(\sqrt{n})$ on the number of queries needed for testing graph expansion, which corresponds to $k=1$ in our problem, our algorithm is asymptotically optimal up to polylogarithmic factors.
%
\end{abstract}

\end{titlepage}


\section{Introduction}

\emph{Cluster analysis} is a fundamental task in data analysis that aims to partition a set of objects into maximal subsets (called \emph{clusters}) of similar objects. In \emph{graph clustering}, the objects to be clustered are the vertices of a graph and the edges of a graph describe relations between them. These relations may have interpretations for data analysis. For example, if the graph is the friendship graph of a social network, i.e., the vertices are the users of a social network and the edges correspond to friendship relations, edges may indicate that the users are socially related and/or have similar interests. In a co-author graph, where the vertices are authors and edges describe co-authorships, edges may be interpreted as a sign that the authors work in the same scientific community. A \emph{cluster} is then a maximal subset of vertices that are \emph{well-connected} to each other, where the precise meaning of being well-connected can be defined in various ways.

In many cases, once we know the interpretation of a single edge, there is a natural interpretation of clusters. For example, clusters in a friendship graph correspond to social groups or clusters in a co-author graph correspond to scientific communities. For similar reasons, a vast amount of graph clustering methods are applied to many different kinds of social/information/biological networks to reveal hidden cluster structure, etc. (see, e.g., surveys \cite{For10:community,POM09:communities,Sch07:clustering}).

Many efficient algorithms for finding clusters in a graph have been developed. However, with the increasing focus on the study of very large networks, we have to concentrate on new features of the clustering algorithms. For example, if one tries to find clusters in the World Wide Web or in a big social network, even linear time algorithms might be too slow. This is particularly important if one wants to study the temporal development of the clusters, which require to solve the problem on many instances (each for a different point of time). In such cases, we need \emph{sublinear time algorithms}. We develop such an algorithm in this paper. Our algorithm can be used to test, if a given graph has a cluster structure, i.e., is composed of at most $k$ clusters.

We will develop the algorithm in the framework of \emph{Property Testing} for bounded degree graphs \cite{GR02:testing}. In this framework, an algorithm has oracle access to an undirected graph $G=(V,E)$ with a bound $d$ on the maximum degree, with $d$ typically assumed to be constant. An algorithm is called a \emph{property tester for a given property $\Pi$} (in our case, the property of all graphs that have a cluster structure with at most $k$ clusters), if it accepts with probability at least $\frac23$ every graph that has the property $\Pi$ and rejects with probability at least $\frac23$ every graph that is \emph{$\varepsilon$-far from $\Pi$}. Here the notion of $\varepsilon$-far means that one has to change more than $\varepsilon dn$ edges to obtain a graph of maximum degree $d$ that has property $\Pi$. If $G$ is not $\varepsilon$-far from $\Pi$, then it is called \emph{$\varepsilon$-close}. To give a property tester on a bounded degree graph $G$, we assume that $G$ is given as an oracle, which allows us to perform \textit{neighbor queries} to $G$ such that for any input pair $(v,i)$, the oracle returns the $i$th neighbor of vertex $v$ if $i \le d_G(v)$, and a special symbol if $i > d_G(v)$, where $d_G(v)$ is the degree of $v$. This framework of graph property testing was initiated by Goldreich and Ron \cite{GR02:testing}. In this model, it is known that several properties are testable in constant time, such as hyperfinite properties \cite{NS13:hyperfinite} (see also \cite{CGRSSS14,GR02:testing} and the references therein). We now also know that properties such as bipartiteness \cite{GR98:sublinear} and expansion \cite{CS10:expansion,GR00:expansion,KS11:expansion,NS10:expansion} are testable in time $\widetilde{O}(\sqrt{n})$, with a nearly matching lower bound, and we need to perform at least $\Omega(n)$ queries to test $3$-colorability \cite{GR02:testing}. For more results, see recent surveys \cite{Gol11:testing,Ron10:testing}.

There are several ways to assess the cluster structure of a graph, such as $k$-means, cliques, modularity etc. One typically would want to argue that vertices in the same cluster should be well-connected and vertices from different clusters should be poorly-connected. In this paper, we use the concept of \textit{conductance} to measure the quality of the cluster structure of a graph. Given a graph $G = (V,E)$ with maximum degree bounded by $d$, and a subset $S\subseteq V$, the \emph{conductance of $S$} is defined as $\phi_G(S) := \frac{e(S, V \setminus S)}{d|S|}$, where $e(S, V \setminus S)$ denotes the number of edges coming out of $S$. Note that $\phi_G(V) = 0$. The \emph{conductance of the graph $G$}, denoted as $\phi(G)$, is defined as the minimum conductance value over all possible subsets $S$ of $V$ with $|S| \le |V|/2$. (For convenience, we define $\phi(G)=\frac{1}{d}$ if $G$ is the singleton graph, that is, the graph consisting of a single isolated vertex with no edges.) For any $S\subseteq V$, let $G[S]$ be the induced subgraph of $G$ on the vertex set $S$. Define the \textit{inner conductance} of $S$ to be the conductance of subgraph $G[S]$, namely, $\phi(G[S])$. To avoid confusion, we will also call the conductance $\phi_G(S)$ of $S$ in $G$ the \textit{outer conductance} of $S$.

Kannan et al.\ \cite{KVV04:clustering} introduced conductance as an appropriate measure of the quality of a cluster and this notion has been later used in numerous more applied works (see, e.g., \cite{Sch07:clustering}).
Further intuition has been employed to assert that a set $S$ with small outer conductance has few connections to the outside of $S$, and a graph $G$ with large conductance means that the vertices of $G$ are well-connected with each other. Following this intuition, Oveis Gharan and Trevisan \cite{OT14:expander} and Zhu et al.\ \cite{ZLM13:local} proposed to combine both outer conductance and inner conductance of a set $S$ to measure whether $S$ is a good cluster or not. That is, a set $S$ is considered to be a good cluster if $\phi_G(S)$ is small and $\phi(G[S])$ is large. In \cite{OT14:expander}, a graph $G$ is defined to be clusterable if $G$ can be partitioned into a number of disjoint parts so that each of them is such a good cluster. In this paper, we will use a related definition to characterize graphs with cluster structure.


\subsection{Our results}
\label{subsec:result}

We begin with the formal definition characterizing graphs with a cluster structure and state our main results. The following definition is inspired by the work of Oveis Gharan and Trevisan \cite{OT14:expander}.

\begin{definition}
\label{def:clusterable-1}
For a $d$-degree bounded undirected graph $G = (V,E)$ with $n$ vertices and parameters $k,\phi,\varepsilon$, we define $G$ to be \emph{$(k,\phi)$-clusterable} if there exists a partition of $V$ into $h$ sets $C_1, \dots, C_h$ such that $1 \le h \le k$, and for each $i$, $1 \le i \le h$, $\phi(G[C_i]) \ge \phi$ and $\phi_G(C_i) \le c_{d,k} \varepsilon^4 \phi^2$, where for fixed $d,k$, $c_{d,k}$ is a universal constant. We call each $C_i$ a \emph{$\phi$-cluster} and the corresponding $h$-partition an \emph{$(h,\phi)$-clustering}.
\end{definition}

The above definition formalizes the idea that the existence of an edge is an indicator that two vertices are similar, i.e., two persons are friends or two authors belong to the same scientific community, while the lack of an edge is a (weaker) sign of the opposite statement. Therefore, a cluster should be, intuitively, well-connected in the inside and poorly-connected to the outside. (We remark that the gap between the conductance of $C_i$ and $G[C_i]$ in Definition \ref{def:clusterable-1} is a feature of our approach rather than an inherent property of the problem.)

In this paper, we develop an algorithm that with probability at least $\frac23$, accepts every $(k,\phi)$-clusterable graph and rejects every graph that is $\varepsilon$-far from every $(k,\phi^*)$-clusterable graph, where $\phi^*=O_{d,k}(\frac{\phi^2\varepsilon^4}{\log n})$. (Throughout the paper we use the notation $O_{d,k}()$ to describe a function in the Big-Oh notation assuming that $d$ and $k$ are constant.) 
Our main result is that we can distinguish such a clusterable graph from all graphs that are far from being clusterable
in sublinear time.

\begin{theorem}
\label{thm:main}
Let $c'_{d,k}$ be a suitable constant depending on $d$ and $k$. There exists an algorithm that accepts every $(k, \phi)$-clusterable graph of maximum degree at most $d$ with probability at least $\frac23$, and rejects every graph of maximum degree at most $d$ that is $\varepsilon$-far from being $(k, \phi^*)$-clusterable with probability at least $\frac23$, if $\phi^* \le c'_{d,k} \frac{\phi^2 \varepsilon^4}{\log n}$. The running time of the algorithm is $\frac{\sqrt{n}}{\phi^2}(k \log n / \varepsilon)^{O(1)}$.

%
\end{theorem}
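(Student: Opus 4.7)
The plan is to build a tester based on multiple random walks from sampled vertices, using the $\ell_2$-norm and inner-product structure of their endpoint distributions as ``spectral signatures'' that encode cluster membership. The guiding principle, rooted in higher-order Cheeger-type inequalities, is that in a $(k,\phi)$-clusterable graph the bottom $k$ eigenvectors of the normalized Laplacian are nearly aligned with indicator vectors of the clusters $C_1,\dots,C_h$, whereas $\lambda_{k+1} = \Omega(\phi^2)$. Consequently, for $\ell = \Theta(\log n/\phi^2)$, the $\ell$-step lazy random-walk distribution $p_v$ starting at a ``typical'' $v\in C_i$ is close in $\ell_2$ to the uniform distribution on $C_i$; in particular, for typical $u,v$, one has $\langle p_u,p_v\rangle \approx 1/|C_i|$ when they lie in the same cluster and $\langle p_u,p_v\rangle$ exponentially small when they lie in different clusters.

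First I would make the spectral picture precise. Using the decomposition of $p_v$ in the eigenbasis and a Davis--Kahan-style perturbation argument between the span of the top $k$ eigenvectors and $\mathrm{span}(\mathbf{1}_{C_1},\dots,\mathbf{1}_{C_h})$, I would show that outside a small ``boundary'' set of at most $\varepsilon n$ vertices in each cluster, the signatures behave exactly as above. The known Goldreich--Ron collision-counting technique then allows an estimate of $\|p_v\|_2^2$ and of $\langle p_u,p_v\rangle$ to within a multiplicative $(1\pm\varepsilon)$ factor using $O(\sqrt{n}\,\mathrm{poly}(k/\varepsilon))$ random walks, each of length $\ell$.

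The tester itself samples $s = O((k/\varepsilon)\log k)$ seed vertices, performs $m = \widetilde{O}(\sqrt{n}/\varepsilon^2)$ walks of length $\ell$ from each, and forms a graph on the seeds by joining $u\sim v$ whenever the estimated inner product $\widehat{\langle p_u,p_v\rangle}$ is large compared to $\min(\widehat{\|p_u\|_2^2},\widehat{\|p_v\|_2^2})$. The algorithm accepts iff the number of components is at most $k$. Completeness is direct: if $G$ is $(k,\phi)$-clusterable, all but an $\varepsilon$-fraction of each cluster are interior vertices with well-behaved signatures, so the seeds partition into at most $k$ groups with high probability. The running time bound $\widetilde{O}(\sqrt n/\phi^2)\cdot(k/\varepsilon)^{O(1)}$ follows from the walk budget.

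The hard part is soundness. Here I would use the contrapositive: if the tester accepts with constant probability, then the sampled signatures cluster into at most $k$ groups, and I need to convert this into the existence of a global $(k,\phi^*)$-clustering of all but $\varepsilon n$ vertices of $V$. The plan is to extend the sample clustering to all of $V$ by assigning each vertex $v$ to the group whose representative has the most similar signature (tested by estimating $\|p_v - p_{\mathrm{rep}}\|_2^2$), then to bound the outer conductance of the resulting parts by applying a local Cheeger-style sweep inside each group and iteratively peeling off low-conductance subsets. Each peeling loses a polylogarithmic factor in the conductance bound, which is precisely where the $\log n$ degradation from $\phi$ to $\phi^* = \Theta_{d,k}(\phi^2\varepsilon^4/\log n)$ enters. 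Making this peeling argument quantitative---showing that at most $\varepsilon n$ vertices need to be reassigned and that the final parts have inner conductance $\geq \phi^*$ and outer conductance $\leq c_{d,k}\varepsilon^4(\phi^*)^2$---is the main technical challenge and will rely on the higher-order Cheeger and spectral-embedding machinery of Oveis Gharan--Trevisan and of Lee--Oveis Gharan--Trevisan applied to the subgraphs induced on each candidate group.
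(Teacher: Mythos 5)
Your algorithm and completeness sketch are broadly in line with the paper's tester (sample seeds, compare endpoint distributions of $\Theta(\log n/\phi^2)$-length lazy walks, accept iff the similarity graph on seeds has at most $k$ components), but your soundness plan is a genuine gap, and it is also not how the argument actually goes. You propose the contrapositive: acceptance of the tester $\Rightarrow$ extend the seed clustering to all of $V$ by nearest-signature assignment, then certify by a peeling/sweep argument that the resulting parts have inner conductance $\ge \phi^*$ and outer conductance $\le c_{d,k}\varepsilon^4(\phi^*)^2$. You give no argument for this step beyond saying it is the main challenge, and the step faces a fundamental obstacle: acceptance only constrains pairwise $\ell_2^2$ distances among $O\bigl(\tfrac{k\log k}{\varepsilon^2}\bigr)$ sampled signatures, and closeness of random-walk endpoint distributions gives essentially no handle on the \emph{inner} conductance of the induced parts (indeed, the paper stresses that distributions started in \emph{different} clusters can already be $O(1/n)$-close in $\ell_2^2$, so signature similarity cannot certify expansion of a candidate cluster). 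The paper avoids this entirely by proving the soundness in the forward direction via a structural lemma (Lemma \ref{lemma:partition-eps-far-improved}): if $G$ is $\varepsilon$-far from $(k,\phi^*_{in},\phi^*_{out})$-clusterable with $\phi^*_{in}=O_{d,k}(\varepsilon)$, then $V$ admits a partition into $k+1$ sets, each of size $\Omega(\varepsilon^2 n/k)$ and outer conductance $O_{d,k}(\phi^*_{in}\varepsilon^{-2})$, with \emph{no} inner-conductance requirement on the parts. Combined with Lemma \ref{lem:largel2} (walks from typical vertices of a low-outer-conductance set stay inside it, so cross-set distributions differ by $\ge 1/n$ in $\ell_2^2$ whenever $t\le\alpha/(2\psi)$), a representative sample hits all $k+1$ parts and every cross-part pair is rejected by the distribution tester, forcing at least $k+1$ components. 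Proving Lemma \ref{lemma:partition-eps-far-improved} requires its own machinery (iterated application of an expander-surgery argument, Lemmas \ref{lemma:subset1}--\ref{lemma:partition}), none of which appears in your plan; this missing structural lemma is the core of the soundness proof.

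Two further inaccuracies: the $\log n$ loss in $\phi^*=\Theta_{d,k}(\phi^2\varepsilon^4/\log n)$ does not come from ``each peeling losing a polylog factor''; it comes from the tension between the walk length $\ell=\Theta(k^4\log n/\phi^2)$ needed for completeness and the requirement $\ell\cdot\phi_G(V_i)\ll \alpha$ in Lemma \ref{lem:largel2} for soundness, which forces $\phi^*\lesssim \varepsilon^2/(s\ell)$. Also, estimating $\langle \p_u^\ell,\p_v^\ell\rangle$ to multiplicative accuracy $(1\pm\varepsilon)$ is not achievable with $\widetilde O(\sqrt n)$ walks when the inner product is near zero (the cross-cluster case); the paper instead thresholds $\norm{\p_u^\ell-\p_v^\ell}_2^2$ at $\tfrac{1}{4n}$ versus $\tfrac{1}{n}$ using the $\ell_2$-closeness tester of Chan et al.\ together with the precondition $\norm{\p_v^\ell}_2^2=O(sk/n)$ checked in Step \ref{alg:l2norm}, which is what keeps the sample complexity at $\widetilde O(\sqrt n\cdot\poly(k,1/\varepsilon))$.
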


One can question whether the gap between $\phi^*$ and $\phi$ in the form $\phi^* = O_{d,k}(\frac{\phi^2 \varepsilon^4}{\log n})$ or similar is really required. We believe that for an algorithm with a somewhat similar time complexity, both the $\log n$ and the $\varepsilon$ factors in the gap between $\phi$ and $\phi^*$ are necessary. For further discussion about this gap size we refer to Section \ref{subsection:Expansion}.

Note also that in our results we allow for clusterings with at most $k$ clusters (rather than with exactly $k$ clusters). This can be justified by the fact that in the property testing framework, every $(k,\phi)$-clusterable graph with exactly $h \le k$ clusters is $\varepsilon$-close to some $(k,\phi^*)$-clusterable graph with exactly $k$ clusters, for any reasonable choice of parameters (one can simply remove all edges that are incident to $k-h$ vertices).


\subsection{Comparison with testing expansion and discussion of the gap size}
\label{subsection:Expansion}

For $k=1$, our problem is equivalent to that of \emph{testing graph expansion}, the problem which has received significant attention in the past.
Goldreich and Ron \cite{GR00:expansion} were the first to study this problem in details and proved a lower bound $\Omega(\sqrt{n})$ on the number of queries for testing graph expansion in the bounded degree model. This result has been complemented by a proposed algorithm, which Goldreich and Ron conjectured to be a property tester for the second largest eigenvalue (denoted by $\eta_2$) of the normalized adjacency matrix of a regularized version of the graph, in the sense that it accepts every graph with $\eta_2 \le \eta$ and rejects every graph that is $\varepsilon$-far from having $\eta_2 \le \eta^{\Theta(\mu)}$ for any $\mu > 0$. Note that by Cheeger's inequality (cf. Theorem \ref{thm:cheeger}), resolving of this conjecture would imply that the algorithm is also a property tester
that accepts any graph with $\phi(G) \ge \phi$ and rejects every graph that is $\varepsilon$-far from being a $\phi^*$-expander for $\phi^* = O(\mu \phi^2)$, where a graph $G$ is called a $\phi$-expander if $\phi(G)\ge \phi$.
Czumaj and Sohler \cite{CS10:expansion} proved a weaker version of this conjecture by showing that the algorithm from \cite{GR00:expansion} can distinguish in time $\widetilde{O}(\sqrt{n})$ any $\phi$-expander graph from graphs that are $\varepsilon$-far from being a $\phi^*$-expander for $\phi^* = O(\frac{\phi^2}{\log n})$. Kale and Seshadhri \cite{KS11:expansion} and Nachmias and Shapira \cite{NS10:expansion} extended this result and proved that in $\widetilde{O}(n^{0.5+\mu})$ time the algorithm accepts graphs with expansion $\phi$ and reject graphs which are $\varepsilon$-far from having expansion $\phi^*= O(\mu\phi^2)$.

Since the best known methods require a gap between $\phi$ and $\phi^*$ already for the special case $k=1$, it is clear that our work will also need a similar gap. It seems to be tempting to conjecture that --- similarly to the case of testing expansion --- it will suffice to reject (in the soundness) graphs that are $\varepsilon$-far from being $(k,\Theta(\mu\phi^2))$-clusterable for any $\mu>0$, instead of having a $\log n$ factor dependency between $\phi$ and $\phi^*$, as in our result. However, we do not think that this is possible and in the following we briefly sketch the differences from testing expansion and argue why the approach that led to a better gap for testing expansion is likely to fail (of course, this does not rule out other approaches, but this points to substantial obstacles to obtain an improved result).

Let $u,v$ be any two vertices in the graph $G$, which for simplicity is now assumed to be $d$-regular and connected. Let $\lambda_i$ be the $i$-th smallest eigenvalue and $\vv_i$ be the corresponding eigenvector of the (normalized) Laplacian of $G$. It is known that the lazy random walk on $G$ converges to the uniform distribution on its end-vertex. One can write (cf. Section \ref{sec:proofs} for details) the $l_2^2$-distance between the distribution $\p_v^\ell$ and $\p_u^\ell$ of the endpoints of the lazy random walks on $G$ of length $\ell$ starting at $v$ and $u$, respectively, as
\begin{displaymath}
    \norm{\p_v^\ell - \p_u^\ell}_2^2
        =
    \sum_{i=1}^n(\vv_i(u)-\vv_i(v))^2(1-\frac{\lambda_i}{2})^{2\ell}
    \enspace.
\end{displaymath}
Since a lazy random walk on a regular graph converges to the uniform distribution, we have $\vv_1(u) = \vv_1(v) = 1/\sqrt{n}$. Therefore, in the case $k=1$, by the fact that $0 = \lambda_1 \le \lambda_2 \le \dots \le \lambda_n \le 1$, we can upper bound $\norm{\p_v^\ell - \p_u^\ell}_2^2$ by bounding the second smallest eigenvalue and by making a proper choice of the length of the walk $\ell$.

If we want to extend this approach to $k > 1$, then our definition implies (cf. \cite{LOT12:high}) that in a $(k,\phi)$-clusterable graph there is a significant gap between $\lambda_{h}$ and $\lambda_{h+1}$ for some $h$, $1  \le h \le k$, where $h$ corresponds to the number of clusters in the instance. Now, assume for simplicity that $h=k$. Then we obtain that
\begin{displaymath}
    \norm{\p_v^\ell - \p_u^\ell}_2^2
        =
    \sum_{i=1}^k(\vv_i(u)-\vv_i(v))^2(1-\frac{\lambda_i}{2})^{2\ell}
        + \sum_{i=k+1}^n(\vv_i(u)-\vv_i(v))^2(1-\frac{\lambda_i}{2})^{2\ell}
    \enspace.
\end{displaymath}
We can upper bound $\sum_{i=k+1}^n(\vv_i(u) - \vv_i(v))^2 (1-\frac{\lambda_i}{2})^{2\ell}$ by using the bound for $\lambda_{h+1}$ in a similar way we can bound the entire term by bounding $\lambda_2$ in the case $k=1$. However, the critical part is the first summand. It turns out that there are instances where the average $l_2^2$-distance between $u, v$ from the same cluster is $\Omega(\frac{\phi^*}{d^3n})$ for a certain reasonable choice of $\ell$, such that the random walk mixes well in the cluster while does not escape from some non-expanding set containing the cluster too often
(for more details, see discussions below and Appendix \ref{subsec:evidence-tight}). This seems to rule out an approach similar to \cite{KS11:expansion,NS10:expansion}, as this approach requires a significantly smaller distance between $\p_v^\ell$ and~$\p_u^\ell$.


\subsection{Our techniques}

We develop the first sublinear algorithm for testing if a graph is $(k,\phi)$-clusterable, significantly extending earlier works on testing the expansion of a graph. Our algorithm draws a random sample set and tests for every pair of sample vertices if the distributions of the endpoints of a random walk starting at the two vertices are close in the $l_2^2$-distance. If this is the case, then it connects the two sample vertices by an edge in a \emph{similarity graph}. At the end, the algorithm accepts the input graph if the similarity graph is a collection of at most $k$ connected components. 

Our main new contributions are as follows.
\begin{itemize}
\item Our algorithm  is the first property tester that directly makes use of testing pairwise closeness of distributions induced by random walks. Previous related algorithms \cite{CS10:expansion,GR00:expansion,KPS13,KS11:expansion,NS10:expansion} tested if the distribution of the endpoints a random walk starting at a vertex $v$ is close to the uniform distribution and then drew their conclusions about the structure of the graph. In our case, we do not know how the distribution looks like (it will be close to uniform \emph{inside} every cluster, but this is not very helpful since the cluster is unknown to us and the support size of a distribution is hard to estimate \cite{RRSS09:support}) and it may have significant distance from the uniform distribution.
\item It is the first property tester that exploits (in the completeness case) a ``somewhat stable'' behaviour of the random walk distribution at a length where it is significantly different from the stationary distribution, i.e., we pick the length of the random walk in such a way that it is almost stable on its own cluster, and most of the probability mass will stay in some non-expanding set containing the cluster.
\end{itemize}

In order to test closeness of distribution, we use a recent tester for closeness of distributions in $l_2$-norm by Chan et al.\ \cite{CDVV14:optimal}, which gives slightly better bounds than the corresponding tester of Batu et al.\ \cite{BFRSW13:testdist}. A combination with a necessary condition on the $l_2$-norm of the distribution of the endpoints of the random walk from the sample vertices leads to improved bounds. It is tempting to think of this problem in the setting of $l_1$-norm since, for example, the distance between a random walk starting from different clusters is typically $\Omega(1)$ in $l_1$-norm. But this is misleading. It is known that no \emph{stable} $l_1$-tester exists, i.e., $l_1$-testers cannot distinguish the case that distributions are close from the case that they are not \cite{VV11:linear} ($l_1$-testers can only distinguish between identical (or almost identical) distributions and distributions that are far away from each other). However, as already explained in the previous section, we cannot hope for distributions to be arbitrarily close even if the random walks start in the same cluster.
To address this difficulty, we will use the fact (noted earlier by Batu et al. \cite{BFRSW13:testdist}) that an $l_1$-tester can be reduced to an $l_2$-tester if the probability of every item is $O(n^{-1})$, which is likely to be the case if the graph is $(k,\phi)$-clusterable.

We note that in the $l_2^2$-distance, a typical distance between the distribution of the endpoints of the random walks starting in two vertices from different clusters can be very small. For example, if we have two disconnected expanders (clusters) on $n/2$ vertices each, then for a sufficiently long random walk the distribution of the endpoints of the walk will be (almost) uniform on the cluster of the starting vertex. Therefore the distance between the distributions of the endpoints of random walks starting in different clusters will be $O(1/n)$. Furthermore, as we have argued above, the distance between the distributions of the endpoints of random walks will not be much smaller in the case that they come from the same cluster. Analyzing these two cases is one of the central technical challenges of our paper.


\subsection{Other related work}

In the context of property testing, Alon et al.\ \cite{ADPR03:clustering} studied the problem of testing if a set of points in $\mathbb{R}^d$ is clusterable (see also \cite{CS05}), but both their problem definition and techniques are quite different from ours. Kale et al.\ \cite{KPS13} gave a sublinear expansion reconstruction algorithm that outputs the neighborhood of any input vertex $v$ in a $\Omega(\frac{\phi^2}{\log n})$-expander $G'$ that is $\frac{\phi \varepsilon}{\log n}$-close to the input graph $G$, which is assumed to be $\varepsilon$-close to a $\phi$-expander. In particular, they designed an algorithm that runs in $\widetilde{O}(\sqrt{n})$-time and distinguishes vertices from a large set that induces an expander from vertices that belong to a bad cut, by using uniform averaging random walks and testing if the distribution of endpoints of the walk is close to uniform distribution (in the $l_1$-norm distance) or not. This work does not (directly) compare distributions of the endpoints of the random walks starting from different vertices, as we do in our paper.

Our work is closely related to works on testing distributions. Batu et al.\  \cite{BFRSW00:distribution,BFRSW13:testdist} were the first to give sublinear time algorithms for testing the closeness of two discrete distributions and since then, a large body of work has been devoted to the problem of estimating the properties of distributions from a small number of samples (see the recent survey \cite{Rub12:distribution} and the reference therein). In particular, Levi et al.\ \cite{LRR13:multidistri} gave an algorithm with complexity $\widetilde{O}(n^{2/3})$ to test whether a set of distributions over a domain of size $n$ can be partitioned into $k$ clusters. Very recently, Chan et al.\  \cite{CDVV14:optimal} gave asymptotically optimal testers for the closeness of two distributions under both $l_1$ and $l_2$ settings.

Besides the related works in the literature of property testing, our work is also closely related to the area of graph partitioning and spectral clustering. Ng et al.\ \cite{NJW01:spectral} and Shi et al.\ \cite{SM00:spectral} used the first few eigenvectors of some matrices to partition a graph (or a set of data) into sparsely connected clusters. Different ways of measuring clustering based on intra-cluster density vs. inter-cluster sparsity and some experimental results were given  in \cite{BGW07}. Kannan et al.\ \cite{KVV04:clustering} proposed a bicriteria to measure the quality of a clustering, in which a good clustering is defined to be a partition of vertex sets such that each set in the partition has large inner conductance and few edges lying between different sets. They gave spectra based approximation algorithm for finding such a clustering. Lee et al.\ \cite{LOT12:high} and Louis et al.\ \cite{LRTV12:sparse} recently gave theoretical analysis of some spectral algorithms that use the first $k$ eigenvectors of the normalized Laplacian matrix for finding a $k$-partition of a graph such that each part is of small (outer) conductance, without any restriction on the inner conductance of the cluster. Zhu et al.\ \cite{ZLM13:local,OZ14:clustering} gave personal PageRank based and flow based local algorithms for finding a set of large inner conductance and small outer conductance. Makarychev et al.\ \cite{MMV12:semirand} studied a semidefinite programming based algorithm in the semi-random model to find such a set. Tanaka \cite{Tan13:partition} and Oveis Gharan and Trevisan \cite{OT14:expander} recently studied the existence and construction of a $k$-clustering such that each cluster is of large inner conductance and of small outer conductance, under the assumption that there is some gap between $\rho_G(k)$ and $\rho_G(k+1)$, where $\rho_G(k)$ is the minimum conductance of any $k$ disjoint subsets of the graph (cf. Section \ref{subsec:spectral-clusterable}). Dey et al.\ \cite{DRS14:spectral} considered the performance of a spectral clustering algorithm that applies a greedy algorithm for $k$-centers on some embedding induced by the first $k$ eigenvectors of the graph Laplacian. Peng et al.\ \cite{PSZ14:partitioning} studied the eigenvector structures of the Laplacian of well-clustered graphs (which is very related to our definition of clusterable graphs) and the approximation ratio of $k$-means clustering algorithms on these graphs.


\subsection{Organization of the paper}
In Section \ref{sec:pre}, we give notations and definitions used throughout the paper. In Section \ref{sec:algorithm}, we give a formal description of our tester for clusterable graphs. We then present in Section \ref{sec:analysis} some central properties, which we use for proving our main result --- Theorem \ref{thm:main}. The proofs of these central properties are given in Section \ref{sec:proof-section5}. Section \ref{sec:conclusion} has final conclusions.
%
%
Finally, in Appendix we will present some auxiliary tools used in the analysis.


\section{Preliminaries}
\label{sec:pre}

Let $G=(V,E)$ be an undirected and unweighted graph with maximum degree bounded by a constant $d$. Let $n:=|V|$. For a vertex $v\in V$, let $d_G(v)$ be the degree of $v$. We assume that $G$ is represented by its \textit{adjacency list} and that we can access $G$ through an oracle, which allows us to perform the \textit{neighbor query} to $G$. That is, when the oracle is given as input a vertex $v$ and an integer $i$, it outputs the $i$-th neighbor of $v$ if $d_G(v)\ge i$, and a special symbol otherwise (in constant time).


As mentioned in the introduction, we will use Definition \ref{def:clusterable-1} of $(k,\phi)$-clusterable graphs and $\phi$-clusters inspired by \cite{OT14:expander} to characterize the cluster structure of graphs and the clusters therein. Note that a $(1,\phi)$-clusterable graph is an expander graph with conductance $\phi$, which we abbreviate as $\phi$-expander (this should not be confused with $\phi$-cluster). 

We are interested in testing if a given graph is $(k,\phi)$-clusterable in sublinear time 
in the framework of property testing. Formally speaking, we will study the following problem: given parameters $k,\phi,\varepsilon$, and a $d$-degree bounded graph $G$, we want to test if $G$ is $(k,\phi)$-clusterable or $\varepsilon$-far from being $(k,\phi^*)$-clusterable with as few queries as possible, for $\phi^*$ being as close to $\phi$ as possible. We have the following definition of graphs that are $\varepsilon$-far from clusterable graphs.

\begin{definition}
~\label{def:eps-far}
A graph $G$ (of maximum degree at most $d$) is \emph{$\varepsilon$-far from $(k, \phi)$-clusterable} if we have to add or delete more than $\varepsilon d n$ edges to obtain a $(k, \phi)$-clusterable graph of maximum degree at most $d$. 
If $G$ is not $\varepsilon$-far from $(k, \phi)$-clusterable then it is \emph{$\varepsilon$-close} to $(k, \phi)$-clusterable.
\end{definition}


\section{The algorithm}
\label{sec:algorithm}

In this section, we describe our algorithm used in Theorem \ref{thm:main}. We first introduce the following \textit{random walk} on a $d$-bounded degree graph $G$ that will be used in our algorithm. In this walk, if we are currently at vertex $v$, then in the next step, we choose randomly an incident edge $(v,u)$ with probability $\frac{1}{2d}$ and move to $u$. With the remaining probability, which is at least $\frac12$, we stay at $v$. Note that if we let $G_\reg$ denote the weighted $d$-regular graph that is obtained from $G$ by adding an appropriate number of half-weighted self-loops, then this random walk is exactly a \textit{lazy random walk} on $G_\reg$. We will let $\p_v^\ell$ denote the distribution of endpoints of such a random walk of length $\ell$ starting at $v$.
Our testing algorithm is given as follows.

\begin{center}
\begin{tabular}{|p{0.9\textwidth}|}
\hline
\textbf{$k$-Cluster-Test} $(G,s,\ell,\sigma,k)$\\
\hline
\begin{enumerate}
\item Sample a set $S$ of $s$ vertices independently and uniformly at random from~$V$.
\item For any $v\in S$, let $\p_v^\ell$ be the distribution of endpoints of random walk of length $\ell$ starting at $v$.
\item\label{alg:l2norm} For any $v\in S$, test if $||\p_v^\ell||_2^2>\sigma$; if so, then abort and reject.
\item\label{alg:distribution-closeness} For each pair $u, v \in S$: if $l_2$ distribution tester accepts that $\norm{\p_u^\ell-\p_v^\ell}_2^2\le \frac{1}{4n}$, then add an edge $(u,v)$ in ``similarity graph'' $H$ on vertex set~$S$.
\item If $H$ is the union of at most $k$ connected components, then accept; otherwise, reject.
\end{enumerate}\\
\hline
\end{tabular}
\end{center}

If the graph is $(k,\phi)$-clusterable then we will show that (for the right choice of parameters) the distributions of the endpoints of random walks will be close if they come from the same cluster.
Furthermore, Step~\ref{alg:l2norm} tests a necessary condition for the efficient $l_2$ distribution tester that will be used in~Step~\ref{alg:distribution-closeness}, i.e., $||\p_v^\ell||_2^2$ is small, which is satisfied for almost all vertices in a $(k,\phi)$-clusterable graph. The small $l_2^2$-norm property of distributions can then be exploited in the testing for closeness of distributions in Step \ref{alg:distribution-closeness} to obtain a better running time.


\subsection{Implementation of distribution testing}
\label{section:Distributions}

Our algorithm relies on an efficient tester for the $l_2$-closeness of two distributions $\p$ and $\q$. The tester used in Step \ref{alg:distribution-closeness} of \textbf{$k$-Cluster-Test} was recently proposed by Chan et al.\ \cite{CDVV14:optimal} and is similar to the $l_2$ distance tester in \cite{BFRSW13:testdist} that uses the statistics of collisions in the sample sets from both distributions $\p, \q$.
The following is a direct corollary of Theorem 1.2 from \cite{CDVV14:optimal}. 

\begin{theorem}
\label{cor:distribution}
Let $c_{\ref{cor:distribution}}$ be some appropriate constant  $c_{\ref{cor:distribution}} \ge 1$. Let $\delta, \xi > 0$ and let $\p,\q$ be two distributions over a set of size $n$ with $b \ge \max\{\norm{\p}_2^2, \norm{\q}_2^2\}$. Let $r \ge c_{\ref{cor:distribution}} \cdot \frac{\sqrt{b}}{\xi} \ln\frac{1}{\delta}$. There exists an algorithm, denoted by \textbf{$l_2$-Distribution-Test}, that takes as input $r$ samples from each distribution $\p,\q$, and accepts the distributions if $\norm{\p-\q}_2^2\le \xi$, and rejects the distributions if $\norm{\p-\q}_2^2\ge 4\xi$, with probability at least $1-\delta$. The running time of the tester is linear in its sample size.
\end{theorem}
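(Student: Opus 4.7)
The plan is to derive the statement directly from Theorem~1.2 of \cite{CDVV14:optimal}. The original theorem, in the form we need, produces an $l_2$ closeness tester which, given $r_0 = \Theta(\sqrt{b}/\xi)$ samples from each of $\p$ and $\q$, accepts when $\norm{\p-\q}_2^2 \le \xi$ and rejects when $\norm{\p-\q}_2^2 \ge 4\xi$, with some fixed success probability $p > \tfrac{1}{2}$ (say $p = \tfrac{2}{3}$). The tester itself is a collision-statistic estimator and runs in time linear in $r_0$. Hence the only remaining work is to amplify this constant success probability up to $1-\delta$ while paying a multiplicative $\ln(1/\delta)$ factor in samples and time, matching the bound on $r$ in the statement.

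First, I would fix the constant $c_{\ref{cor:distribution}}$ large enough that the input $r \ge c_{\ref{cor:distribution}}\cdot \tfrac{\sqrt{b}}{\xi}\ln\tfrac{1}{\delta}$ can be partitioned into $t = \Theta(\ln(1/\delta))$ disjoint groups of size at least $r_0$ each, for both distributions. Then I would run the base CDVV tester independently on each of the $t$ groups, obtaining Boolean outcomes $X_1,\dots,X_t$, and output the majority vote as the final answer.

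Second, correctness: each $X_i$ is independently correct with probability at least $p > 1/2$, so a standard Chernoff bound on $\sum_i X_i$ yields that the majority vote is wrong with probability at most $\exp(-\Omega(t)) \le \delta$, provided the constant $c_{\ref{cor:distribution}}$ is chosen so that $t$ is a sufficiently large multiple of $\ln(1/\delta)$. Correctness applies symmetrically in both the completeness ($\norm{\p-\q}_2^2 \le \xi$) and soundness ($\norm{\p-\q}_2^2 \ge 4\xi$) cases, since the base tester has the required two-sided guarantee.

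Third, running time: each invocation of the base tester is linear in the size of its sample group by \cite{CDVV14:optimal}, and there are $t$ invocations on disjoint groups whose total size is $r$, so the overall running time is $O(r)$, i.e.\ linear in the sample size as claimed. The main (mild) subtlety to check is simply matching the precise form of the $(\xi, 4\xi)$ gap and the $\sqrt{b}/\xi$ dependence to the way \cite{CDVV14:optimal} phrases its guarantee (they naturally state the result in terms of $\norm{\p-\q}_2$ rather than $\norm{\p-\q}_2^2$, but squaring turns a $(\varepsilon, 2\varepsilon)$ gap into a $(\varepsilon^2, 4\varepsilon^2)$ gap, with $\xi = \varepsilon^2$), after which the choice of $c_{\ref{cor:distribution}}$ absorbs all remaining constants.
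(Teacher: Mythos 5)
Your proposal is correct and matches the paper's route: the paper states this result as a direct corollary of Theorem 1.2 of \cite{CDVV14:optimal} (their $l_2$ tester using $O(\sqrt{b}/\xi)$ samples with constant success probability), and the only content to supply is exactly the standard success-probability amplification by majority vote over $\Theta(\ln(1/\delta))$ independent runs, which you carry out correctly, including the translation between the $l_2$ and $l_2^2$ gap parameters.
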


We also need an efficient algorithm to estimate the $l_2^2$-norm of the probability distribution of the endpoints of a random walk in a graph.
In Step \ref{alg:l2norm} of our algorithm \textbf{$k$-Cluster-Test} we will use \textbf{$l_2^2$-norm tester}, the performance of which is guaranteed in the following lemma (the proof follows almost directly from the proof of Lemma 4.2 in \cite{CS10:expansion} that in turn is built on Lemma 1 in \cite{GR00:expansion}, cf. Appendix \ref{subsec:proof-on-distribution} for details).

\begin{lemma}
\label{lem:collision}
Let $G=(V,E)$ with $|V|=n$. Let $v \in V$, $\sigma>0$ and $r \ge 16\sqrt{n}$. Let $t\ge 1$ and let $\p_v^t$ be the probability distribution of the endpoints of a random walk of length $t$ from $v$. There exists an algorithm, denoted by \textbf{$l_2^2$-norm tester}, that takes as input $r$ samples from $\p_v^t$ and accepts the distribution if $\norm{\p_v^t}_2^2\le \sigma/4$ and rejects the distribution if $\norm{\p_v^t}_2^2>\sigma$, with probability at least $1-\frac{16\sqrt{n}}{r}$. The running time of the tester is linear in its sample size.
\end{lemma}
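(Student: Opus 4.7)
The plan is to use the classical collision-counting estimator for $\norm{\p_v^t}_2^2$, exactly as in Goldreich--Ron \cite{GR00:expansion} and as refined in \cite{CS10:expansion}. Concretely, I would draw $r$ independent samples $X_1,\dots,X_r$ from $\p_v^t$ by simulating $r$ independent lazy random walks of length $t$ starting at $v$ (each walk takes $O(t)$ time and is easily implementable with neighbor queries). Let $C := |\{(i,j): 1\le i<j\le r,\ X_i = X_j\}|$ be the number of collisions and define the estimator $\hat\sigma := \tfrac{2C}{r(r-1)}$. A direct computation gives $\E[\hat\sigma] = \norm{\p_v^t}_2^2$. The tester then outputs \textbf{accept} if $\hat\sigma \le \sigma/2$ and \textbf{reject} otherwise.

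To prove the error bound I would use Chebyshev's inequality. The standard variance decomposition (considering disjoint pairs $(i,j),(k,l)$, pairs sharing one index, and identical pairs) yields
\[
\Var(C) \;\le\; \binom{r}{2}\norm{\p_v^t}_2^2 \;+\; 2\binom{r}{3}\sum_x \p_v^t(x)^3,
\]
and $\sum_x \p_v^t(x)^3 \le \norm{\p_v^t}_\infty \norm{\p_v^t}_2^2 \le \norm{\p_v^t}_2^{3}$. Using the always-true lower bound $\norm{\p_v^t}_2^2 \ge 1/n$ (Cauchy--Schwarz), one gets $\E[C]^2 = \binom{r}{2}^2 \norm{\p_v^t}_2^4$ is large compared to $\Var(C)$ whenever $r = \Omega(\sqrt{n})$. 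Chebyshev then shows that both the probability of $\hat\sigma > \sigma/2$ in the accept case ($\norm{\p_v^t}_2^2\le\sigma/4$) and of $\hat\sigma\le\sigma/2$ in the reject case ($\norm{\p_v^t}_2^2>\sigma$) are bounded by $16\sqrt{n}/r$, after plugging in the hypothesis $r\ge 16\sqrt{n}$ and carefully tracking constants. The running time is $O(rt)$, which is linear in the sample size as claimed (the walk length $t$ is absorbed into the ``per sample'' cost).

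The main obstacle, and really the only subtlety, is calibrating the variance bound so that the failure probability expression comes out to exactly $16\sqrt{n}/r$ rather than some qualitatively similar $O(\sqrt{n}/r)$. This requires (i) handling the constant-factor gap between $\sigma/4$ and $\sigma$ (which provides the multiplicative slack needed for Chebyshev to distinguish the two cases with constant deviation), and (ii) bounding the $\sum_x \p_v^t(x)^3$ contribution so that it does not dominate. Since the statement is a direct transplant of \cite[Lemma~4.2]{CS10:expansion}/\cite[Lemma~1]{GR00:expansion} to the present lazy-walk setting, I would simply cite that chain and verify that the lazy-walk modification (replacing the step probabilities by $\tfrac{1}{2d}$ and a self-loop probability) does not affect the inequalities used, since all that enters the analysis is the distribution $\p_v^t$ itself and its $\ell_2^2$ norm, both of which are well defined for the lazy walk.
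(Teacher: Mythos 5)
Your proposal is correct and follows essentially the same route as the paper: the paper's tester also counts pairwise collisions among the $r$ samples, rejects iff the count is at least $\frac12\binom{r}{2}\sigma$ (equivalent to your threshold $\hat\sigma\le\sigma/2$), and obtains the $1-\frac{16\sqrt{n}}{r}$ guarantee by citing the concentration bound $\frac12\binom{r}{2}\norm{\p_v^t}_2^2\le Z_v\le\frac32\binom{r}{2}\norm{\p_v^t}_2^2$ from the first paragraph of Lemma~4.2 of \cite{CS10:expansion} (built on Lemma~1 of \cite{GR00:expansion}), exactly the chain you invoke in place of redoing the Chebyshev calculation with exact constants.
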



\section{Analysis of $k$-Cluster-Test}
\label{sec:analysis}

We outline the proof of our main theorem, Theorem \ref{thm:main}. Our techniques are based on two intuitions. The first intuition is that if two ``typical'' vertices $u, v$ are \emph{from the same large cluster}, then the distributions of the endpoints of two \emph{sufficiently long} random walks starting at $u, v$, respectively, are close; and if $u, v$ are \emph{separated by a non-expanding cut}, then the distributions of the endpoints of two \emph{not so long} random walks from $u, v$, respectively, are far away from each other. If this intuition holds, then we can reduce our problem to the problem of testing the closeness of two distributions, and then use the returned results to decide whether the distributions induced by the random walks from different sampled vertices can be divided into $k$ groups or not. In particular, if our input graph $G$ is $(k,\phi)$-clusterable, then we can get at most $k$ connected components in our ``similarity graph'' $H$. (Actually, as will be seen from our proof, sampled vertices from the same cluster form a clique in $H$.) 
On the other hand, if $G$ is far from being $(k,\phi^*)$-clusterable, then we expect that we can get at least $k+1$ \junk{cliques or non-clique} connected components in $H$ . The latter is based on our second intuition that if $G$ is far from being $(k, \phi^*)$-clusterable, then there are at least $k+1$ (large) well separated sparse cuts. We present several lemmas that formalize these intuitions in Section \ref{subsec:keyproperties} and then give the proof of Theorem \ref{thm:main} in Section \ref{sec:proof-main-theorem}.


\subsection{Key properties}
\label{subsec:keyproperties}

In this section, we state several lemmas describing the properties used in our analysis of \textbf{$k$-Cluster-Test}. The proofs of the results are deferred to Section \ref{sec:proof-section5}.

In the following we will formally state these key properties under the definition of a more general class of clusterable graphs, even though our main focus is on the study of properties of $(k,\phi)$-clusterable graphs. To study detailed properties of $(k,\phi)$-clusterable graphs and their dependencies on all parameters, we will use the following, more general definition of $(k,\phi_{in},\phi_{out})$-clusterable graphs, which follows the framework from \cite{OT14:expander}.

\begin{definition}
\label{def:clusterable-old}
For an undirected graph $G$, and parameters $k,\phi_{in},\phi_{out}$, we define $G$ to be \emph{$(k,\phi_{in},\phi_{out})$-clusterable} if there exists a partition of $V$ into $h$ subsets $C_1, \dots, C_h$ such that $1 \le h \le k$ and for each $i$, $1 \le i \le h$, $\phi(G[C_i])\ge \phi_{in}$, $\phi_G(C_i)\le \phi_{out}$. We call each $C_i$ a \emph{$(\phi_{in},\phi_{out})$-cluster} and the corresponding $h$-partition an \emph{$(h,\phi_{in},\phi_{out})$-clustering}.
\end{definition}

We can define a graph $G$ to be $\varepsilon$-far from $(k,\phi_{in},\phi_{out})$-clusterable similarly to Definition~\ref{def:eps-far}. Note that a $(k,\phi)$-clusterable graph from Definition \ref{def:clusterable-1} is exactly a $(k,\phi, c_{d,k} \varepsilon^4 \phi^2)$-clusterable graph from Definition \ref{def:clusterable-old}.

We first show that if the graph is $(k,\phi_{in},\phi_{out})$-clusterable then for any large cluster $C$ with $\phi(G[C])\ge \phi_{in}$, there exists a large subgraph $\widetilde{C}$ such that the distributions of the endpoints of two random walks of length large enough starting from any two vertices $u,v\in \widetilde{C}$ are close in the $l_2$-norm (that is, the $l_2$ distance between $\p_u^\ell$ and $\p_v^\ell$ is small). The proof of this result relies on spectral properties of clusterable graphs given in  Section \ref{subsec:spectral-clusterable}. 

\begin{lemma}
\label{lem:smalll2}
Let $0 < \alpha, \beta < \frac12$. If $G = (V,E)$ is $(k, \phi_{in}, \phi_{out})$-clusterable, and $C \subseteq V$ is any subset such that $|C| \ge \beta n$ and $\phi(G[C]) \ge \phi_{in}$, then there exists $\alpha_{\ref{lem:smalll2}} = \alpha_{\ref{lem:smalll2}}(k, \alpha, \beta,d)$ and a universal constant $c_{\ref{lem:smalll2}}>0$ such that for any $t \ge \frac{c_{\ref{lem:smalll2}}k^4 \log n}{\phi_{in}^2}$, $\phi_{out}\le \alpha_{\ref{lem:smalll2}}\phi_{in}^2$,  
there exists a subset $\widetilde{C}\subseteq C$ with $|\widetilde{C}| \ge (1-\alpha)|C|$ such that for any $u, v \in \widetilde{C}$, the following holds:
\begin{displaymath}
    \norm{\p_u^t-\p_v^t}_2^2\le \frac{1}{4n}
    \enspace.
\end{displaymath}
\end{lemma}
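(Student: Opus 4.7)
The plan is to derive the bound from a spectral expansion of $\p_v^t$ in the eigenbasis of the normalized Laplacian $\LL$ of the regularized graph $G_\reg$, exploiting the two-scale structure of the spectrum that clusterability forces. Let $\vv_1,\dots,\vv_n$ be orthonormal eigenvectors of $\LL$ with eigenvalues $0=\lambda_1\le\cdots\le\lambda_n\le 2$ and $\vv_1=\1/\sqrt n$ (since $G_\reg$ is $d$-regular). The lazy-walk transition operator on $G_\reg$ is $M=I-\LL/2$ with eigenvalues $1-\lambda_i/2$, so $\p_v^t=\sum_i (1-\lambda_i/2)^t\vv_i(v)\vv_i$ and Parseval gives
$$\|\p_u^t-\p_v^t\|_2^2=\sum_{i=2}^n\bigl(1-\lambda_i/2\bigr)^{2t}\bigl(\vv_i(u)-\vv_i(v)\bigr)^2.$$
I would split the sum at the index $h\le k$ equal to the number of clusters $C_1,\dots,C_h$ in the guaranteed partition. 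A preliminary observation is that $C$ is essentially contained in a single cluster $C_{j^*}$: any large overlap $C\cap C_{j'}$ with $j'\ne j^*$ would produce a cut in $G[C]$ of conductance $O(\phi_{out}/\phi_{in})$, contradicting $\phi(G[C])\ge\phi_{in}$ once $\phi_{out}\ll\phi_{in}^2$.

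For the tail $i\ge h+1$, a higher-order Cheeger-type inequality from Section~\ref{subsec:spectral-clusterable} forces $\lambda_{h+1}\ge\Omega(\phi_{in}^2/k^2)$. Parseval gives $\sum_{i\ge 2}(\vv_i(u)-\vv_i(v))^2\le 4$, so the tail is at most $4e^{-t\lambda_{h+1}}$, which falls below $\frac{1}{8n}$ once $t\ge c_{\ref{lem:smalll2}} k^4\log n/\phi_{in}^2$ for a sufficiently large universal constant. For the head $2\le i\le h$, I would invoke the approximate step-function property of the bottom eigenvectors: there exist vectors $\widetilde{\vv}_1,\dots,\widetilde{\vv}_h$, each constant on every $C_j$, with $\sum_{i=1}^h\|\vv_i-\widetilde{\vv}_i\|_2^2\le O(k^2\phi_{out}/\phi_{in}^2)$, a Davis--Kahan-style consequence of clusterability also supplied by Section~\ref{subsec:spectral-clusterable}. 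Because $\widetilde{\vv}_i(u)=\widetilde{\vv}_i(v)$ for $u,v\in C_{j^*}$ and $(1-\lambda_i/2)^{2t}\le 1$, the head contribution is bounded by $2g(u)+2g(v)$, where $g(w):=\sum_{i=1}^h(\vv_i(w)-\widetilde{\vv}_i(w))^2$.

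Parseval then yields $\sum_{w\in V}g(w)=O(k^2\phi_{out}/\phi_{in}^2)$, so Markov's inequality bounds the set of $w\in C\cap C_{j^*}$ with $g(w)>1/(32n)$ by $O(nk^2\phi_{out}/\phi_{in}^2)$. Defining $\widetilde C$ to be the complement within $C\cap C_{j^*}$ and combining with the small size of $C\setminus C_{j^*}$ from the preliminary observation gives $|C\setminus\widetilde C|\le\alpha|C|$, provided $\phi_{out}\le\alpha_{\ref{lem:smalll2}}\phi_{in}^2$ with $\alpha_{\ref{lem:smalll2}}=\Theta(\alpha\beta/k^{O(1)})$, where we use $|C|\ge\beta n$. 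Summing the head and tail bounds produces $\|\p_u^t-\p_v^t\|_2^2\le\frac{1}{4n}$ for every $u,v\in\widetilde C$. The main obstacle is to establish the two spectral structure results invoked above with explicit constants---especially the $L^2$-approximation of the bottom-$h$ eigenvectors by cluster step functions with error $O(\phi_{out}/\phi_{in}^2)$ up to $k$-dependent factors---in a form robust to $k$ being only an upper bound on the true number of clusters $h$; this is where the bulk of the technical content sits, and is what Section~\ref{subsec:spectral-clusterable} is designed to deliver.
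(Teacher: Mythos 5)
Your proposal is sound and follows the same spectral skeleton as the paper (eigendecomposition of $\p_u^t-\p_v^t$, split at the index $h$, eigenvalue gap for the tail, eigenvector ``flatness'' plus Markov for the head), but the head is handled by a genuinely different route. The paper's Lemma \ref{lem:clusterable-eigenvector} bounds $\frac{1}{|C|}\sum_{u,v\in C}(\vv_i(u)-\vv_i(v))^2$ by $O(d^4\phi_{out}/\phi_{in}^2)$ for \emph{any} set $C$ with $\phi(G[C])\ge\phi_{in}$ (using Cheeger on $G[C]$ together with the global Dirichlet-energy bound $\sum_{(u,v)\in E}(\vv_i(u)-\vv_i(v))^2\le 2d\phi_{out}$ coming from $\lambda_i\le 2\phi_{out}$), so the paper simply centers each $\vv_i$ on $C$ itself, applies Markov to define $\widetilde C$, and never needs to know where $C$ sits relative to the clustering. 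You instead first localize $C$ inside a single cluster $C_{j^*}$ and then invoke a partition-wide step-function approximation $\sum_{i\le h}\|\vv_i-\widetilde\vv_i\|_2^2=O(k^2\phi_{out}/\phi_{in}^2)$; the structure result you defer to Section \ref{subsec:spectral-clusterable} is not stated there in that form, but it is an immediate corollary of Lemma \ref{lem:clusterable-eigenvector} applied to each cluster $C_j$ (take $\widetilde\vv_i$ equal to the average of $\vv_i$ on $C_j$), so what you flag as ``the main obstacle'' is in fact already delivered by the paper, and in a stronger form that makes your localization step unnecessary. Two small points to tighten if you carry your route through: your localization threshold is really $\min\{|C\cap C_{j'}|,|C\setminus C_{j'}|\}\le \phi_{out}n/\phi_{in}\le\alpha_{\ref{lem:smalll2}}\phi_{in} n$ (not a statement about cuts of conductance $O(\phi_{out}/\phi_{in})$ for ``large'' overlaps), and you must fold this extra loss into $|C\setminus\widetilde C|\le\alpha|C|$ via the choice of $\alpha_{\ref{lem:smalll2}}$; and the higher-order Cheeger inequality (Theorem \ref{thm:highcheeger}) gives only $\lambda_{h+1}\ge\phi_{in}^2/(c_{\ref{thm:highcheeger}}^2h^4)$, not $\Omega(\phi_{in}^2/k^2)$, though this is harmless since your walk length already scales as $k^4\log n/\phi_{in}^2$. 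In exchange for the extra steps, your argument makes the ``eigenvectors are near step functions on the clustering'' picture explicit, while the paper's argument is shorter and applies verbatim to an arbitrary high-inner-conductance set $C$.
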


\junk{\red{(Remove this paragraph?) We remark that in the proof of Theorem~\ref{thm:main}, we will choose appropriate parameters $\alpha,\beta$ such that for any $\phi$-cluster $C$, if we let $\phi_{in}=\phi(G[C])=\phi$ and $\phi_{out}=\phi_G(C)\le c_{d,k}\varepsilon^4\phi^2$, then it always holds that $\phi_{out}\le \alpha_{\ref{lem:smalll2}}\phi_{in}^2$ by our definition of $c_{d,k}$.}}

In order to use an efficient distribution tester (e.g., as the one given in Theorem \ref{cor:distribution}), we need to guarantee that for a large fraction of vertices a sufficiently long random walk starting from a typical vertex will induce a distribution of its endpoints with small $l_2$-norms. We will prove the following lemma using spectral analysis of clusterable graphs.

\begin{lemma}
\label{lem:smallhnorm}
Let $0 < \alpha < 1$. If $G$ is $(k, \phi_{in}, \phi_{out})$-clusterable, then there exists $V' \subseteq V$ with $|V'|\ge (1-\alpha)|V|$ such that for any $u \in V'$ and any $t \ge \frac{c_{\ref{lem:smallhnorm}} k^4 \log n}{\phi_{in}^2}$, for some universal constant $c_{\ref{lem:smallhnorm}}>0$,
the following holds:
\begin{displaymath}
    \norm{\p_u^t}_2^2\le \frac{2k}{\alpha n}
    \enspace.
\end{displaymath}
\end{lemma}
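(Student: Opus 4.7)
The plan is to bound $\norm{\p_u^t}_2^2$ via the spectral decomposition of the lazy random walk on the regularized $d$-regular graph $G_\reg$ (the walk defined in Section \ref{sec:algorithm} is exactly this lazy walk). Its transition matrix $M$ is symmetric and equals $I - L/2$, where $L$ is the normalized Laplacian of $G_\reg$ with eigenvalues $0 = \lambda_1 \le \cdots \le \lambda_n \le 2$ and a corresponding orthonormal eigenbasis $\vv_1,\ldots,\vv_n$. Since $\p_u^t = M^t e_u$ and $M$ is symmetric,
\begin{displaymath}
    \norm{\p_u^t}_2^2
    \;=\; M^{2t}(u,u)
    \;=\; \sum_{i=1}^n \Bigl(1-\tfrac{\lambda_i}{2}\Bigr)^{2t} \vv_i(u)^2 .
\end{displaymath}
I would split this sum at index $k$ and estimate the two pieces separately.

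For the high-frequency tail $i \ge k+1$, I would invoke the spectral-gap bound for $(k,\phi_{in},\phi_{out})$-clusterable graphs established in Section \ref{subsec:spectral-clusterable} (the Oveis Gharan--Trevisan style consequence of higher-order Cheeger): there is an absolute constant $c_0>0$ with $\lambda_{k+1} \ge c_0\,\phi_{in}^2/k^{O(1)}$. Choosing $c_{\ref{lem:smallhnorm}}$ large enough that $t \ge c_{\ref{lem:smallhnorm}} k^4 \log n/\phi_{in}^2$ forces $\lambda_{k+1}\,t \ge 2\ln n$, so $(1-\lambda_i/2)^{2t} \le e^{-\lambda_i t} \le 1/n^2$ for every $i \ge k+1$; combined with $\sum_{i=1}^n \vv_i(u)^2 = 1$, the tail contributes at most $1/n$. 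Invoking this spectral gap is the main technical input of the proof; everything else is a short averaging computation, and the polynomial dependence on $k$ in the gap bound can be absorbed into $c_{\ref{lem:smallhnorm}}$.

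For the low-frequency head $i \le k$, I would simply use $(1-\lambda_i/2)^{2t} \le 1$, reducing the task to controlling $f(u) := \sum_{i\le k} \vv_i(u)^2$ for typical $u$. Orthonormality of the eigenbasis gives $\sum_{u \in V} f(u) = \sum_{i\le k} \norm{\vv_i}_2^2 = k$, so the average of $f(u)$ over $V$ is $k/n$. A Markov-type argument then shows that the set $V' := \{u \in V : f(u) \le k/(\alpha n)\}$ satisfies $|V'| \ge (1-\alpha)n$. For every $u \in V'$, combining the head and tail estimates yields
\begin{displaymath}
    \norm{\p_u^t}_2^2 \;\le\; \frac{k}{\alpha n} + \frac{1}{n} \;\le\; \frac{2k}{\alpha n},
\end{displaymath}
where the last inequality uses $\alpha < 1 \le k$, giving the claimed bound.
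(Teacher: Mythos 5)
Your proposal is correct and follows essentially the same route as the paper: spectral decomposition of the lazy walk on $G_\reg$, splitting the eigen-sum at index $k$, a Markov/averaging argument on $\sum_{i\le k}\vv_i(u)^2$ (whose total over $V$ is $k$) to define $V'$, and the bound $\lambda_{k+1}\ge \phi_{in}^2/(c_{\ref{thm:highcheeger}}^2k^4)$ from Lemma \ref{lem:eigenvalue} to kill the tail once $t\ge c_{\ref{lem:smallhnorm}}k^4\log n/\phi_{in}^2$. The only cosmetic difference is that you bound the tail by $1/n$ via $M^{2t}(u,u)$ and $e^{-\lambda_{k+1}t}$, while the paper bounds it by $(1-\lambda_{k+1}^{\reg}/2)^{2t}$ directly; both yield $\norm{\p_u^t}_2^2\le \frac{k}{\alpha n}+o(1/n)\le \frac{2k}{\alpha n}$.
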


Note that the above lemma does not require any assumption about $\phi_{out}$, and thus applies directly to any $(k,\phi)$-clusterable graphs by substituting $\phi$ for $\phi_{in}$ in the lemma.

For the soundness of our algorithm, we need the following lemma that shows that given two well separated sets $A, B \subseteq V$, for any two ``typical'' vertices $u \in A$, $v \in B$, the $l_2$-norm of the difference between the corresponding distributions of endpoints of random walks of short length starting from $u, v$ will be large. Our proof relies on the fact that any 
set $A$ with small outer conductance has a large subset $\widehat{A}$ such that the random walk starting from any vertex in $\widehat{A}$ will stay inside $A$ for a relatively long time.

\begin{lemma}
\label{lem:largel2}
Let $\alpha$ and $\psi$ be arbitrary with $0 < \alpha, \psi < 1$. Let $A \subseteq V$ be any subset of $G$ such that $\phi_G(A) \le \psi$. Then for any $t \ge 1$, there exists a subset $\widehat{A} \subseteq A$ with $|\widehat{A}| \ge (1-\alpha)|A|$ such that for any $v \in \widehat{A}$, the probability that the random walk of length $t$ starting from vertex $v$ never leaves $A$ in all $t$ steps is at least $1 - \frac{t \psi}{2 \alpha}$.

Furthermore, for any $t$, $1 \le t \le \frac{\alpha}{2 \psi}$, any two disjoint subsets $A, B \subseteq V$ with $\phi_G(A), \phi_G(B) \le \psi$, and any two vertices $u, v$ such that $u \in \widehat{A}, v \in \widehat{B}$, the following holds:
\begin{displaymath}
    \norm{\p_u^t - \p_v^t}_2^2
        \ge
    \frac{1}{n}
    \enspace.
\end{displaymath}
\end{lemma}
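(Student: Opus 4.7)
The plan is to first establish a bound on the average escape probability from $A$ in $t$ steps using reversibility of the underlying random walk, then apply Markov's inequality to extract $\widehat{A}$. For the second part, vertices in $\widehat{A}$ and $\widehat{B}$ have their random-walk distributions concentrated on disjoint sets, which forces their $l_1$-distance to be large, and a standard Cauchy--Schwarz step converts this to the stated $l_2^2$ lower bound.

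For the first claim, the walk described in Section~\ref{sec:algorithm} is the lazy random walk on the weighted $d$-regular graph $G_\reg$, which is reversible with the uniform stationary distribution; in particular $\p_v^i(u)=\p_u^i(v)$ for all $u,v,i$. For a vertex $v\in A$ and step $i\ge 1$, the probability that the walk crosses from $A$ to $V\setminus A$ at step $i$ is $\sum_{u\in A}\p_v^{i-1}(u)\cdot\frac{|N(u)\cap (V\setminus A)|}{2d}$. Summing over $v\in A$ and swapping the order of summation, the inner sum becomes $\sum_{v\in A}\p_u^{i-1}(v)\le 1$ by reversibility, so the total is at most $\frac{e(A,V\setminus A)}{2d}\le\frac{|A|\psi}{2}$ by the hypothesis $\phi_G(A)\le\psi$. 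Summing over $i=1,\dots,t$ and bounding the probability of ever leaving $A$ by the expected number of $A\to V\setminus A$ crossings yields
\begin{displaymath}
\sum_{v\in A} q_v(t) \le \frac{t|A|\psi}{2}, \qquad q_v(t) := \Pr[\text{walk from } v \text{ leaves } A \text{ within } t \text{ steps}].
\end{displaymath}
Markov's inequality then gives $|\{v\in A: q_v(t)>\frac{t\psi}{2\alpha}\}|\le\alpha|A|$, and its complement $\widehat{A}$ satisfies the required bounds.

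For the second claim, the assumption $t\le\frac{\alpha}{2\psi}$ forces $\frac{t\psi}{2\alpha}\le\frac{1}{4}$, so every $u\in\widehat{A}$ satisfies $\p_u^t(A)\ge 3/4$ and every $v\in\widehat{B}$ satisfies $\p_v^t(B)\ge 3/4$; since $A\cap B=\emptyset$, this also gives $\p_v^t(A)\le 1/4$ and $\p_u^t(B)\le 1/4$. Combining,
\begin{displaymath}
\norm{\p_u^t-\p_v^t}_1 \;\ge\; |\p_u^t(A)-\p_v^t(A)| + |\p_u^t(B)-\p_v^t(B)| \;\ge\; \tfrac12+\tfrac12 \;=\; 1.
\end{displaymath}
Applying the bound $\norm{x}_1^2\le n\,\norm{x}_2^2$ (Cauchy--Schwarz on the $n$-element domain) to $x=\p_u^t-\p_v^t$ then gives $\norm{\p_u^t-\p_v^t}_2^2\ge 1/n$, as required. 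The only subtle point is the reversibility-based summation swap in the first part; once this yields the uniform-in-$v$ bound, the remaining steps (Markov's inequality, concentration on disjoint sets, and the $l_1\to l_2$ conversion) are routine.
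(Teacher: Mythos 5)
Your proposal is correct and follows the same skeleton as the paper's proof: an averaged bound on the escape probability from $A$, a Markov/averaging step to extract $\widehat{A}$, concentration of $\p_u^t$ on $A$ and $\p_v^t$ on $B$ for the disjoint sets, and the standard $\norm{x}_1^2 \le n\norm{x}_2^2$ conversion. The one genuine difference is in how the averaged escape bound is obtained: the paper simply invokes Proposition 2.5 of Spielman--Teng, which states $\frac{\1_A(\W_\reg I_A)^t(A)}{|A|} \ge 1 - t\phi_G(A)/2$, whereas you prove the equivalent first-moment bound $\sum_{v\in A} q_v(t) \le \frac{t|A|\psi}{2}$ from scratch by counting expected $A \to V\setminus A$ crossings and using the symmetry of the lazy walk's transition matrix (so that $\sum_{v\in A}\p_v^{i-1}(u) = \p_u^{i-1}(A) \le 1$; note you could equally well use double stochasticity here, so even reversibility is more than you need). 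Your crossing argument is sound — the event of leaving $A$ within $t$ steps is contained in the union over $i$ of the events $\{X_{i-1}\in A,\ X_i\notin A\}$, each of which has total mass at most $e(A,V\setminus A)/(2d) \le \psi|A|/2$ when summed over starting vertices — so you get a self-contained proof where the paper relies on a citation. In the second part you bound $\norm{\p_u^t-\p_v^t}_1$ by summing the discrepancies over $A$ and over $B$ (each at least $\tfrac12$ once $t\psi/(2\alpha)\le\tfrac14$), while the paper uses the total-variation identity $\norm{\cdot}_1 = 2\max_R|\p_u^t(R)-\p_v^t(R)|$ applied to $R=A$; these are equivalent and give the same constant.
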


\begin{remark}
We note that the above lower bound 
is almost tight up to constants. Consider the graph that is composed of two disconnected parts such that each of them is a $\phi_{in}$-expanders of size $n/2$. Then for any two starting vertices $u,v$ from two different parts, for $t=\Theta(\frac{\log n}{\phi_{in}^2})$, both $\p_u^t$ and $\p_v^t$ will be very close to the uniform distribution on each cluster, and therefore, the $l_2^2$ distance between these two distributions will be $O(1/n)$. 
\end{remark}

For the analysis showing that graphs far from clusterable will be rejected, we will use a property that if a graph $G=(V,E)$ is $\varepsilon$-far from any $(k, \phi_{in}^*,\phi_{out}^*)$-clusterable graph, then its vertex set $V$ can be partitioned into $k+1$ subsets $V_1, \dots, V_{k+1}$, each of linear size and of small outer conductance.

\begin{lemma}
\label{lemma:partition-eps-far-improved}
Let $\alpha_{\ref{lemma:partition-eps-far-improved}} = \alpha_{\ref{lemma:partition-eps-far-improved}}(d,k)$ be a certain constant that depends on $d$ and $k$. If $G = (V,E)$ is $\varepsilon$-far from $(k, \phi^*_{in}, \phi^*_{out})$-clusterable with $\phi^*_{in} \le \alpha_{\ref{lemma:partition-eps-far-improved}} \cdot \varepsilon$, then there exist a partition of $V$ into $k+1$ subsets $V_1, \dots, V_{k+1}$ such that for each $i$, $1 \le i \le k+1$, $|V_i| \ge \frac{1}{1152k} \varepsilon^2 |V|$ 
and $\phi_G(V_i) \le c_{\ref{lemma:partition-eps-far-improved}} \phi^*_{in} \varepsilon^{-2}$, for some constant $c_{\ref{lemma:partition-eps-far-improved}} = c_{\ref{lemma:partition-eps-far-improved}}(d,k)$ and for any $0 \le \phi^*_{out} \le 1$.
\end{lemma}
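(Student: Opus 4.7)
My plan is to prove the contrapositive: assuming no partition of $V$ into $k+1$ subsets with the claimed size and outer-conductance bounds exists, I will show $G$ is $\varepsilon$-close to some $(k,\phi^*_{in},\phi^*_{out})$-clusterable graph, contradicting the hypothesis.

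Set $\beta := \varepsilon^2/(1152k)$ and run the following iterative sparse-cut splitter on $G$: initialize $\mathcal{P} := \{V\}$; while $|\mathcal{P}| \le k$ and some part $C \in \mathcal{P}$ admits a subset $A \subsetneq C$ with $\beta n \le |A| \le |C|/2$ and $e_{G[C]}(A, C \setminus A) \le \phi^*_{in} \, d\, |A|$, replace $C$ in $\mathcal{P}$ with $A$ and $C \setminus A$. Each newly produced part has size at least $\beta n$, so every part in $\mathcal{P}$ remains of size at least $\beta n$ throughout. The procedure performs at most $k$ splits and hence terminates. At termination, we are in one of two cases.

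In Case A (success: $|\mathcal{P}| = k+1$), the parts $V_1,\ldots,V_{k+1}$ already form a partition of $V$ with $|V_i| \ge \beta n$. For each $V_i$, any edge crossing $V_i$'s boundary in $G$ was severed at some split along the root-to-$V_i$ path in the split tree. Applying the telescoping inequality $|A_t| \le |C_t| - |C_{t+1}|$ (valid in both sub-cases, whether $V_i$ lies in the smaller or larger side of a split) yields $\sum_{t \in \mathrm{path}(V_i)} |A_t| \le n - |V_i| \le n$, so $e_G(V_i, V \setminus V_i) \le \phi^*_{in}\, d\, n$ and $\phi_G(V_i) \le \phi^*_{in}/\beta = 1152\, k\, \phi^*_{in}/\varepsilon^2$, giving the claim with $c_{\ref{lemma:partition-eps-far-improved}} = 1152\, k$. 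In Case B (failure: $\mathcal{P} = \{C_1,\ldots,C_\ell\}$ with $\ell \le k$), I will construct a graph $G'$ of max degree at most $d$, within edit distance $\varepsilon d n$ of $G$, that is $(k,\phi^*_{in},\phi^*_{out})$-clusterable, contradicting $\varepsilon$-farness. The construction proceeds in two steps. First, delete every edge between distinct parts of $\mathcal{P}$; since at most $k-1$ splits were performed and each severs $\le \phi^*_{in} \, d\, |A_t| \le \phi^*_{in} \, d\, n/2$ edges, the total deletion is $\le \phi^*_{in}\, d\, k\, n/2 \le \varepsilon d n/2$ provided $\phi^*_{in} \le \varepsilon/k$ — enforced by setting $\alpha_{\ref{lemma:partition-eps-far-improved}} := 1/(2k)$. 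Second, within each $C_i$, the termination condition guarantees that every sparse cut of $G[C_i]$ has size strictly less than $\beta n$; an expander-decomposition-style pruning (iteratively trim bad small sub-pieces and absorb them by local re-routing, respecting the $d$-degree cap) converts each $C_i$ into a $\phi^*_{in}$-expander, with an amortized edit cost bounded by $O(k \beta d n) = O(\varepsilon^2 d n) \le \varepsilon d n /2$. In the resulting $G'$ each $C_i$ is a $\phi^*_{in}$-expander and there are no inter-part edges, so $\phi_{G'}(C_i)=0 \le \phi^*_{out}$, and $\{C_1,\ldots,C_\ell\}$ is a valid $(k,\phi^*_{in},\phi^*_{out})$-clustering.

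The main obstacle will be the expander-decomposition pruning in Case B: starting from a part $C_i$ whose only guaranteed property is the absence of sparse cuts of size $\ge \beta n$, I must convert it into a true $\phi^*_{in}$-expander on essentially the same vertex set using $O(\phi^*_{in} d \beta n)$ edits per trim iteration and respecting the $d$-degree bound of $G'$. This requires a careful bookkeeping of the trim iterations — possibly combined with re-distribution of some small troublesome vertex sets across parts — so that the aggregate edit cost across all parts remains within the remaining $\varepsilon d n/2$ slack left after the first step.
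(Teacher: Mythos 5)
Your Case~A is fine: the splitter's invariants do give parts of size at least $\beta n$, and the telescoping bound $\sum_t |A_t|\le n$ along the split tree correctly yields $\phi_G(V_i)\le \phi^*_{in}/\beta$. The genuine gap is Case~B, and it is not a peripheral obstacle but the technical heart of the lemma, which your proposal leaves unproved on two counts. First, the claimed amortized repair cost $O(k\beta d n)=O(\varepsilon^2 dn)$ has no justification and the natural accounting fails: the termination condition only says that no \emph{single} cut of $G[C_i]$ with smaller side of size at least $\beta n$ is $\phi^*_{in}$-sparse. A family of disjoint sets that are each sparse \emph{in the same graph} $G[C_i]$ indeed has sparse union, so its total mass stays below $\beta n$; but your ``iteratively trim'' scheme produces sets that are sparse only in successively smaller \emph{induced} subgraphs (after earlier trims have removed boundary edges), and such sets need not be sparse in $G[C_i]$ at all. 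Hence the cumulative trimmed mass — and with it the edit cost — is not controlled by the termination condition, and the cascade need not terminate cheaply. Second, even for a single bad set of size below $\beta n$, ``absorb them by local re-routing, respecting the $d$-degree cap'' names the problem rather than solving it: to raise the conductance you must add edges, the degree bound forces you to delete edges elsewhere to make room, and you must then re-verify that the surgery did not create new sparse cuts. This is exactly what Lemma~\ref{lemma:subset1} does, via the auxiliary edge-slot set $S$ ({\sc ConstructS}), a degree-$3$ expander on the bad set with a constant fraction of its edges subdivided, a matching of the subdivision vertices into $S$, and the case analysis of Claims~\ref{claim:edges1} and~\ref{claim:edges2}; none of that machinery, or a substitute for it, appears in your plan.

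It is worth noting that the paper's proof is organized precisely to avoid the trap you walked into: it never attempts to repair all parts within the edit budget. Instead, at each step it re-invokes the farness hypothesis — Lemma~\ref{lemma:partition} absorbs the small parts into a cheap expander and uses an averaging/charging argument over the per-part repair costs $\kappa_i$ to conclude that some single large part is still $\varepsilon/2$-far from being a $\phi^*_{in}$-expander — and then extracts one more sufficiently balanced sparse cut from that part via Lemma~\ref{lemma:subset}, which is the iterated contrapositive of Lemma~\ref{lemma:subset1}. If you want to salvage your two-case splitter, you would still have to prove an analogue of Lemma~\ref{lemma:subset1} (a degree-respecting ``no large sparse cut $\Rightarrow$ $O(\varepsilon)$-close to a $\phi^*_{in}$-expander'' statement with the correct, necessarily $\Theta(\varepsilon dn)$ rather than $O(\varepsilon^2 dn)$, budget) and then redo the global accounting, at which point you are essentially reconstructing the paper's argument.
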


\junk{
Let us remark here that in Lemma \ref{lemma:partition-eps-far} we do not have any explicit bound for $\phi^*_{out}$ and any $0 \le \phi^*_{out}\le 1$ will do. This is the case under our assumption that $\phi^*_{in} \le O(\varepsilon)$. Actually, we can show that the above claim does not hold when $\phi^*_{in}$ is some large constant and $\phi^*_{out}$ is close to $0$. For example, let $G$ be a graph consisting of $k$ great expanders with $\phi(G[V_i]) =\Theta(1)$ and these $k$ parts are ``evenly'' connected by $\approx \varepsilon n$ edges, which is $\varepsilon$-far from $(k,\Theta(1),0)$-clusterable, while it is unlikely to partition it into $k+1$ large and sparse parts, as required in our proof for soundness.
}




\subsection{Proof of main result --- Theorem \ref{thm:main}}
\label{sec:proof-main-theorem}

We will use Lemmas \ref{lem:smalll2}--\ref{lemma:partition-eps-far-improved} to prove our main result --- Theorem \ref{thm:main}. In the rest of this section, we prove the completeness, soundness and analyze the running time of the tester \textbf{$k$-Cluster-Test}.

In the algorithm \textbf{$k$-Cluster-Test}, we set $s = \frac{1536 k \ln(8(k+1))}{\varepsilon^2}$, $\ell = \frac{\max\{c_{\ref{lem:smalll2}}, c_{\ref{lem:smallhnorm}}\}\cdot k^4 \log n}{\phi^2}$, $\sigma = \frac{192sk}{n}$. 
We set $r = 192 c_{\ref{cor:distribution}} s \sqrt{skn} \ln s = O(\frac{k^2 (\ln k/\varepsilon)^{5/2} \sqrt{n}}{\varepsilon^3}), b = \frac{216 s k}{n}$, $\xi = \frac{1}{4 n}$, $\delta = \frac{1}{12 s^2}$ in Theorem \ref{cor:distribution}, and set $r = 192 c_{\ref{cor:distribution}} s \sqrt{skn} \ln s$ and $\sigma=\frac{192sk}{n}$ in Lemma~\ref{lem:collision}.

We specify now the constant $c_{d,k}$ that we used in the definition of a $\phi$-cluster to be $c_{d,k} = \frac{\alpha_{\ref{lem:smalll2}}(k, \frac{1}{24s}, \frac{1}{24ks}, d)}{\varepsilon^4} =\frac{c}{k^5 d^4 \ln^2 (8(k+1))}$ for a universal constant $c$.


\subsubsection{Completeness --- accepting $(k, \phi)$-clusterable graphs}

We begin with showing that the algorithm \textbf{$k$-Cluster-Test} will accept $k$-clusterable graphs.

\begin{lemma}
\label{lemma:completeness}
If the input graph $G$ is $(k, \phi)$-clusterable, then with probability at least $\frac23$, the algorithm \textbf{$k$-Cluster-Test} accepts $G$. 
\end{lemma}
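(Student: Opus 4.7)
The plan is to show that for each of the $s$ sampled vertices, with high probability we land in a ``nice'' subregion of some cluster so that (i) Step~\ref{alg:l2norm} does not reject and (ii) any two samples from the same cluster get connected in the similarity graph $H$ by Step~\ref{alg:distribution-closeness}. Then $H$ has at most $h \le k$ connected components and the algorithm accepts.

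First, I would set the bad-event thresholds to $\alpha = \tfrac{1}{24s}$ and $\beta = \tfrac{1}{24ks}$ and isolate three sources of ``bad'' vertices. (a) Apply Lemma~\ref{lem:smallhnorm} with $\phi_{in} = \phi$ and parameter $\alpha$: there is a set $V' \subseteq V$ with $|V'| \ge (1-\alpha)n$ such that every $v \in V'$ satisfies $\norm{\p_v^\ell}_2^2 \le \frac{2k}{\alpha n} = \frac{48 sk}{n} = \sigma/4$, using that $\ell \ge c_{\ref{lem:smallhnorm}} k^4 \log n / \phi^2$. (b) For every cluster $C_i$ with $|C_i| \ge \beta n$, apply Lemma~\ref{lem:smalll2} with the same $\alpha, \beta$; the hypothesis $\phi_{out} \le \alpha_{\ref{lem:smalll2}}(k,\alpha,\beta,d)\,\phi_{in}^2$ holds by our definition of $c_{d,k}$ and the definition of a $\phi$-cluster. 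This yields $\widetilde{C}_i \subseteq C_i$ with $|\widetilde{C}_i| \ge (1-\alpha)|C_i|$ such that any two $u,v \in \widetilde{C}_i$ satisfy $\norm{\p_u^\ell - \p_v^\ell}_2^2 \le \tfrac{1}{4n}$. (c) Let $B$ be the union of $V \setminus V'$, of $C_i \setminus \widetilde{C}_i$ for large clusters, and of all vertices living in clusters of size below $\beta n$. Then $|B| \le \alpha n + \alpha n + k\beta n \le \tfrac{3n}{24s} = \tfrac{n}{8s}$, so a union bound over the $s$ samples gives $\Pr[S \cap B \ne \emptyset] \le \tfrac18$.

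Conditioned on $S \subseteq V \setminus B$, I would separately bound the failure of the two subroutines. For Step~\ref{alg:l2norm}: every $v \in S$ has $\norm{\p_v^\ell}_2^2 \le \sigma/4$, so by Lemma~\ref{lem:collision} each invocation of the $l_2^2$-norm tester with $r \ge 16\sqrt{n}$ samples accepts with probability at least $1 - 16\sqrt{n}/r$; union bounding over the $s$ samples costs at most $16 s \sqrt{n}/r \le \tfrac{1}{12}$ with our choice $r = 192 c_{\ref{cor:distribution}} s \sqrt{skn}\ln s$. For Step~\ref{alg:distribution-closeness}: if $u,v$ both lie in the same $\widetilde{C}_i$, then $\norm{\p_u^\ell - \p_v^\ell}_2^2 \le \tfrac{1}{4n} = \xi$ and both distributions satisfy $\max\{\norm{\p_u^\ell}_2^2, \norm{\p_v^\ell}_2^2\} \le \sigma/4 \le b$; by Theorem~\ref{cor:distribution} applied with the specified $r$ (which dominates $c_{\ref{cor:distribution}} \sqrt{b}\ln(1/\delta)/\xi$) the pair test correctly adds the edge $(u,v)$ with probability at least $1 - \delta = 1 - \tfrac{1}{12 s^2}$. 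A union bound over the $\binom{s}{2}$ pairs costs at most $\tfrac{1}{24}$.

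Summing the three failure probabilities gives at most $\tfrac18 + \tfrac{1}{12} + \tfrac{1}{24} < \tfrac13$. When all events succeed, no Step~\ref{alg:l2norm} rejection occurs, and every pair of samples coming from the same cluster is connected by an edge in $H$. Hence the sample vertices in each cluster form a clique, so $H$ decomposes into at most $h \le k$ connected components and the algorithm accepts.

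The main obstacle is parameter bookkeeping rather than any deep new idea: one must verify consistently that (i) the universal constant $c_{d,k}$ chosen in Section~\ref{sec:proof-main-theorem} is small enough to meet the hypothesis $\phi_{out} \le \alpha_{\ref{lem:smalll2}}\phi_{in}^2$ of Lemma~\ref{lem:smalll2} for the specific $\alpha, \beta$ used here, and (ii) the sample-size $r$ simultaneously satisfies the requirements of both Theorem~\ref{cor:distribution} (with the $\ell_2^2$-norm bound $b$ coming from Lemma~\ref{lem:smallhnorm}) and Lemma~\ref{lem:collision}. With the parameters fixed in the algorithm these requirements are met, and the completeness argument proceeds as above.
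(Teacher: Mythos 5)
Your proposal is correct and follows essentially the same route as the paper's proof: the same choice of $\alpha=\frac{1}{24s}$, $\beta=\frac{1}{24ks}$, the same three-part notion of a ``good'' sample (small $\norm{\p_v^\ell}_2^2$ via Lemma \ref{lem:smallhnorm}, membership in a large cluster, membership in its core $\widetilde{C}$ via Lemma \ref{lem:smalll2}), followed by union bounds over the $l_2^2$-norm tests and the pairwise distribution tests, yielding cliques on same-cluster samples and hence at most $k$ components in $H$. The only differences are cosmetic (aggregating bad vertices into a set $B$ and adding failure probabilities instead of multiplying conditional success probabilities), so nothing further is needed.
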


\begin{proof}
As indicated in the algorithm, we consider random walks of length $\ell$. We apply Lemmas \ref{lem:smalll2} and \ref{lem:smallhnorm} to the $(k,\phi)$-clusterable graph $G$, and we set $\phi_{in} = \phi$, $\phi_{out} = c_{d,k} \varepsilon^4 \phi^2$, $t = \ell$, $\alpha = \frac{1}{24s}$, and $\beta = \frac{1}{24ks}$ in the lemmas. Note that by our definition of $\phi$-cluster, the outer conductance of the cluster is at most $c_{d,k}\varepsilon^4 \phi^2 \le \alpha_{\ref{lem:smalll2}} \phi^2$, since $c_{d,k} \varepsilon^4 = \alpha_{\ref{lem:smalll2}}(k, \frac{1}{24s}, \frac{1}{24ks}, d)$, which implies that the conditions of Lemma \ref{lem:smalll2} are satisfied for any $\phi$-cluster of size at least $\beta n$ in $G$. Since $\ell = \frac{\max\{c_{\ref{lem:smalll2}}, c_{\ref{lem:smallhnorm}}\} \cdot k^4 \log n}{\phi_{in}^2}$, we know that the chosen parameters meet all the preconditions in these lemmas.

Since $G$ is $(k,\phi)$-clusterable, there exists some $h$, $1 \le h \le k$, and a partition of the vertex set of $G$ into $h$ subsets $C_1, \dots, C_h$, such that for every $i$, $1 \le i \le h$, we have $\phi(G[C_i]) \ge \phi$ and $\phi_G(C_i) \le c_{d,k} \varepsilon^4 \phi^2$. For any vertex $v$, define $C(v)$ to be the unique cluster $C_i$ to which $v$ belongs.

We call a vertex $v$ \textit{good} if the following three conditions are satisfied:
\begin{enumerate}
\item \label{item:norm} $\norm{\p_v^\ell}_2^2 \le \frac{48sk}{n}$.
\item \label{item:clustersize} $|C(v)| \ge \frac{1}{24ks}n$.
\item \label{item:expandcore} $v \in \widetilde{C(v)}$, where $\widetilde{C(v)} \subseteq C(v)$ is defined as in Lemma \ref{lem:smalll2} by setting $C = C(v)$.
\end{enumerate}

The success probability of the algorithm depends on the random coins of sampling and random walks. We show that with probability at least $\frac78$ (over random coins of sampling), all vertices in the sample set $S$ are good; and if all these vertices are good, then our tester will accept with probability at least $\frac56$ (over random coins of random walks). Together, this means that with probability at least $\frac78 \cdot \frac56
= \frac{35}{48} \ge \frac23$ the tester will accept. This will conclude the proof of the lemma.

\begin{claim}
\label{claim:completeness-1}
With probability at least $\frac78$, all vertices in the sampled set $S$ are good.
\end{claim}

\begin{proof}
Let $v$ be any vertex that is sampled uniformly at random from $V$. By Lemma \ref{lem:smallhnorm}, the probability that $\norm{\p_v^\ell}_2^2 > \frac{48sk}{n}$ is at most $\alpha = \frac{1}{24s}$. Since there are at most $k$ clusters, the probability that $v$ belongs to a cluster of size at most $\frac{1}{24ks}n$ is at most the probability that $v$ is one of at most $k \cdot \frac{n}{24ks}$ vertices in these small clusters, which is $\frac{1}{24s}$. In addition, since $|\widetilde{C(v)}|\ge (1-\alpha)|C(v)|$, the probability that $v\notin \widetilde{C(v)}$ is at most $\alpha = \frac{1}{24s}$. Overall, the probability that $v$ is not good is at most $\frac{1}{24s} + \frac{1}{24s} + \frac{1}{24s} = \frac{1}{8s}$. By the above analysis and the union bound, with probability at least $1 - \frac{1}{8s} \cdot s = \frac78$, all sampled vertices in $S$ are good.
\end{proof}

\begin{claim}
\label{claim:completeness-2}
Conditioned on the event that all the sampled vertices $v\in S$ are good, our tester will accept $G$ with probability at least $\frac56$.
\end{claim}

\begin{proof}
Let $v\in S$. Since $v$ is good, then $\norm{\p_v^\ell}_2^2\le\frac{48sk}{n}=\frac{\sigma}{4}$. Now by Lemma \ref{lem:collision}, \textbf{$l_2^2$-norm estimator} will reject $v$ with probability at most $\frac{16\sqrt{n}}{r}\le\frac{1}{12s}$. By the union bound, the probability that we get rejected at step \ref{alg:l2norm} of the algorithm is at most $\frac{1}{12}$.

For any two vertices $u,v$ from $S$, if $u,v$ belong to the same large cluster, then by Conditions \ref{item:clustersize}--\ref{item:expandcore} of good vertices and by Lemma \ref{lem:smalll2}, $\norm{\p_u^\ell-\p_v^\ell}_2^2\le \frac{1}{4n}$. 
Now recall that we have set $b=\frac{216sk}{n}, \xi=\frac{1}{4n}, \delta=\frac{1}{12s^2}$ and $r = 192 c_{\ref{cor:distribution}} s \sqrt{skn} \ln s$ in Theorem \ref{cor:distribution}. Then $b\ge\max\{\norm{\p_v^\ell}_2^2,\norm{\p_u^\ell}_2^2\}$, $r \ge c_{\ref{cor:distribution}} \cdot \frac{\sqrt{b}}{\xi} \ln\frac{1}{\delta}$, and we can ensure that with probability at least $1-\delta$, any call to \textbf{$l_2$-Distribution-Test} will accept the distributions $\p_u^t,\p_v^t$ if $u,v$ belong to the same large cluster. 
By the union bound, the probability that there exist some call such that the distribution tester does not accept $u,v$ if $u,v$ are from the same cluster is at most $s^2\delta\le \frac{1}{12}$. Therefore, the probability that the algorithm does not reject at step \ref{alg:l2norm} and all the calls to the \textbf{$l_2$-Distribution-Test} return the correct answer is at least $1-\frac{1}{12}-\frac{1}{12}=\frac{5}{6}$.

Now note that if for any $u,v\in S$ such that $u,v$ belong to the same cluster, the distribution tester with input $\p_u^\ell,\p_v^\ell$ accepts, then there will an edge $(u,v)$ in the ``similarity graph'' $H$. 
This further implies that all the vertices in $S$ that are in the same cluster will form a clique. 
(But note that two sampled vertices from two different clusters might also be connected in $H$.)
Since there are at most $k$ clusters, we will get at most $k$ connected components in $H$, and thus the tester will accepts $G$.
\end{proof}


We can now apply Claims \ref{claim:completeness-1} and \ref{claim:completeness-2} to conclude the proof of Lemma \ref{lemma:completeness}.
\end{proof}


\subsubsection{Soundness --- rejecting graphs $\varepsilon$-far from $(k, \phi^*)$-clusterable}

We present now a proof of the soundness of our tester.

\begin{lemma}
\label{lemma:soundness}
Let $\gamma = \gamma_{d,k} > 0$ be some constant depending on $d, k$. If the input graph $G = (V,E)$ is $\varepsilon$-far from $(k, \phi^*)$-clusterable with $\phi^*\le \frac{\gamma \varepsilon^2}{s \ell}$, then the algorithm \textbf{$k$-Cluster-Test} rejects $G$ with probability at least $\frac23$.
\end{lemma}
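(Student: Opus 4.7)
The strategy is to use the $\varepsilon$-farness assumption to produce $k{+}1$ well-separated regions whose representatives, once sampled, occupy $k{+}1$ distinct connected components of the similarity graph $H$, forcing the algorithm to reject. I would first invoke Lemma~\ref{lemma:partition-eps-far-improved}: since $\phi^*\le \gamma\varepsilon^2/(s\ell)\le \alpha_{\ref{lemma:partition-eps-far-improved}}\varepsilon$ for a sufficiently small constant $\gamma=\gamma_{d,k}$, this produces a partition $V=V_1\sqcup\cdots\sqcup V_{k+1}$ with $|V_i|\ge \varepsilon^2 n/(1152k)$ and outer conductance $\phi_G(V_i)\le \psi:=c_{\ref{lemma:partition-eps-far-improved}}\phi^*\varepsilon^{-2}$. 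I would then apply Lemma~\ref{lem:largel2} to each $V_i$ with stability parameter $\alpha=1/(12s)$, obtaining cores $\widehat{V_i}\subseteq V_i$ of relative size at least $1-\alpha$. The lemma's hypothesis $\ell\le \alpha/(2\psi)$ rearranges to $\phi^*\le \varepsilon^2/(24 c_{\ref{lemma:partition-eps-far-improved}} s\ell)$, so choosing $\gamma\le \min\{\alpha_{\ref{lemma:partition-eps-far-improved}},\,1/(24c_{\ref{lemma:partition-eps-far-improved}})\}$ makes all constants consistent. The key consequence is that for every cross-core pair $u\in\widehat{V_i}$, $v\in\widehat{V_j}$ with $i\ne j$, the separation $\|\p_u^\ell-\p_v^\ell\|_2^2\ge 1/n=4\xi$ holds---precisely the gap required by the $l_2$-tester of Theorem~\ref{cor:distribution} to output ``far'' with high confidence.

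Next, I would argue that the ``good sampling'' event $F$---every sample lies in $\bigcup_i\widehat{V_i}$ \emph{and} every $\widehat{V_i}$ receives at least one sample---holds with probability at least $19/24$. The bad region $V\setminus\bigcup_i\widehat{V_i}$ has measure at most $\alpha$, so a union bound over $s$ samples bounds the first failure mode by $\alpha s=1/12$. Each $|\widehat{V_i}|/n\ge (1-\alpha)\varepsilon^2/(1152k)$, and substituting $s=1536k\ln(8(k+1))/\varepsilon^2$ into the standard estimate $(1-x)^s\le e^{-xs}$ yields a coverage-failure probability of at most $(k+1)(8(k+1))^{-11/9}\le 1/8$. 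Under $F$, the algorithm has sampled at least one representative from each core, and every cross-core pair has $l_2^2$-distance at least $4\xi$.

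Finally, I would conclude via a case split on the random-walk norms. If every sampled $v$ satisfies $\|\p_v^\ell\|_2^2\le\sigma$, then $b=216sk/n$ is a valid upper bound in Theorem~\ref{cor:distribution}; each of the at most $s^2$ cross-pair tests errs with probability at most $\delta=1/(12s^2)$, so a union bound gives that all cross-pair tests correctly reject with probability $\ge 11/12$. Combined with $F$, no edge of $H$ bridges two cores, and hence $H$ has at least $k{+}1$ components, triggering rejection at the final step. If instead some sampled $v$ has $\|\p_v^\ell\|_2^2>\sigma$, Lemma~\ref{lem:collision} guarantees that the norm tester in Step~\ref{alg:l2norm} rejects $v$ with probability $\ge 1-16\sqrt{n}/r\ge 1-1/(12s)$, and the algorithm rejects anyway. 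Summing the failure modes gives $\Pr[\mathrm{accept}]\le \Pr[\neg F]+1/12+1/(12s)\le 7/24+o(1)\le 1/3$.

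The principal technical obstacle I anticipate is calibrating $\alpha$ to simultaneously (i) keep each core $\widehat{V_i}$ large enough that $s=\widetilde{O}(k/\varepsilon^2)$ uniform samples cover all $k{+}1$ parts, and (ii) satisfy the length constraint $\ell\le \alpha/(2\psi)$ of Lemma~\ref{lem:largel2}. The forced choice $\alpha=\Theta(1/s)$ balances these requirements and is precisely what introduces the factor $s$ in the denominator of the hypothesis $\phi^* = O_{d,k}(\varepsilon^2/(s\ell))$---a characteristic feature of the $k$-cluster problem that is absent from the $k=1$ expansion-testing case.
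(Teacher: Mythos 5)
Your proposal is correct and follows essentially the same route as the paper's proof: apply Lemma \ref{lemma:partition-eps-far-improved} to get the $k{+}1$ sparse parts, use Lemma \ref{lem:largel2} with $\alpha=\Theta(1/s)$ to extract cores and the $\norm{\p_u^\ell-\p_v^\ell}_2^2\ge 1/n = 4\xi$ separation, define the representative-sampling event, and conclude via the same case split on the norm tester and a union bound over the at most $s^2$ cross-pair invocations of the $l_2$ tester. The only differences are immaterial constant choices (e.g.\ $\alpha=1/(12s)$ versus $1/(24s)$, and the corresponding value of $\gamma$) and bookkeeping of the failure probabilities, which still yield rejection probability at least $2/3$.
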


\begin{proof}
We will use $\gamma = \min\{\frac{1}{48 c_{\ref{lemma:partition-eps-far-improved}}}, \alpha_{\ref{lemma:partition-eps-far-improved}}\}$. Let us first observe that our choice of $\gamma$ ensures that Lemma \ref{lemma:partition-eps-far-improved} implies the existence of a partition of $V$ into $k+1$ disjoint sets $V_1, \dots, V_{k+1}$ such that for each $i$, $1 \le i \le k+1$, $|V_i| \ge \kappa_1 \varepsilon^2 |V|$ and $\phi_G(V_i) \le \kappa_2 \phi^* \varepsilon^{-2}$, for appropriate parameters $\kappa_1 = \frac{1}{1152k}$ and $\kappa_2 = c_{\ref{lemma:partition-eps-far-improved}}$.

Let $\alpha = \frac{1}{24s}$ (here $\alpha$ corresponds to the parameter $\alpha$ used in Lemma \ref{lem:largel2}). For every set $V_i$, $1 \le i \le k+1$, let $\widehat{V_i} \subseteq V_i$ be the set of vertices $v \in V_i$ such that the probability that the random walk of length $\ell$ starting at $v$ does not leave $V_i$ is at least $1 - \frac{\kappa_2 \phi^* \ell}{2 \alpha \varepsilon^2}$. We observe that since $\phi_G(V_i) \le \kappa_2 \phi^* \varepsilon^{-2}$, we have $|\widehat{V_i}| \ge (1-\alpha) |V_i|$ by Lemma \ref{lem:largel2}. Hence, our assumption that $|V_i| \ge \kappa_1 \varepsilon^2 |V|$ implies that $|\widehat{V_i}| \ge (1-\alpha) \kappa_1 \varepsilon^2 |V|$.

Let us call the sample set $S$ chosen by the algorithm \textbf{$k$-Cluster-Test} to be \emph{representative} if $\widehat{V_i} \cap S \ne \emptyset$ for every $i$, $1 \le i \le k+1$, and $S\subseteq \bigcup_{i=1}^{k+1} \widehat{V_i}$.

\begin{claim}
\label{claim:soundness-1}
The probability that the sample set $S$ is representative is at least $\frac56$.
\end{claim}

\begin{proof}
For any set $X \subseteq V$, $\Pr[X \cap S = \emptyset] = (1-|X|/|V|)^{s} \le e^{- s |X|/|V|}$. Therefore, since $|\widehat{V_i}| \ge (1-\alpha) \kappa_1 \varepsilon^2 |V|$, the probability that $S$ does not contain any element from $\widehat{V_i}$ is smaller than or equal to $e^{- s \widehat{V_i}/|V|} \le e^{- s (1-\alpha) \kappa_1 \varepsilon^2}$. Hence, the union bound implies that the probability that there exists some $i\le k+1$ such that $S$ does not contain any element from $\widehat{V_i}$ is at most $(k+1) \cdot e^{- s (1-\alpha)\kappa_1 \varepsilon^2}$. In addition, the probability that there exists some vertex in $S$ that belongs to $V\setminus(\bigcup_{i=1}^{k+1} \widehat{V_i})$ is at most $s\cdot\alpha$. Therefore, the probability that $S$ is representative is greater than or equal to $1 - (k+1) \cdot e^{- s (1-\alpha) \kappa_1 \varepsilon^2} - s \alpha$. Since $s = \frac{1536k \ln(8(k+1))}{\varepsilon^2}$ and $\alpha = \frac{1}{24s}$, we have $s (1-\alpha) \kappa_1 \varepsilon^2 \ge \ln (8(k+1))$, and hence we can conclude that this probability is at least $\frac56$.
\end{proof}

\begin{claim}
\label{claim:soundness-2}
If $S$ is representative then the algorithm \textbf{$k$-Cluster-Test} rejects $G$ with probability at least $\frac56$.
\end{claim}

\begin{proof}
Let $S_i:=\widehat{V_i}\cap S$. Since $S$ is representative, then $S = \bigcup_{i=1}^{k+1} S_i$. Recall that the algorithm \textbf{$k$-Cluster-Test} rejects $G$ if one of the following two cases happen:
\begin{itemize}
\item there is a $v \in S$ such that \textbf{$l_2^2$-norm estimator} passes the testing of $\norm{\p_v^t}_2^2 > \sigma$. 
\item for any $1\le i< j\le k+1$, and any vertex pair $u,v$ such that $u\in S_i$ and $v\in S_j$, $(u,v)$ 
    is not an edge in the ``similarity graph'' (because in that case the resulting graph $H$ could not be a union of at most $k$ connected components).
\end{itemize}

If there exists some $v \in S$ with $\norm{\p_v^t}_2^2 > \sigma$, then by Lemma \ref{lem:collision}, \textbf{$l_2^2$-norm tester} with rejects $v$ with probability at least $1-\frac{16\sqrt{n}}{r}>\frac{2}{3}$ and we are done. Therefore, we assume in the following that for every $v\in S$, $\norm{\p_v^t}_2^2 < \sigma$. Let us now observe that the probability that the algorithm \textbf{$k$-Cluster-Test} would reject $G$ is lower bounded by the probability that for any $1 \le i < j \le k+1$, and any vertex pair $u,v$ such that $u\in S_i$ and $v\in S_j$, \textbf{$l_2$-Distribution-Test} rejects the distributions $\p_{u}^\ell,\p_{v}^\ell$.

Our definition of sets $\widehat{V_1}, \widehat{V_2}, \dots, \widehat{V_{k+1}}$ and the assumption on $\phi_{in}^*$ (which implies that $\ell \le \frac{\alpha}{2 \kappa_2 \phi^* \varepsilon^{-2}} \le \frac{\alpha}{2 \max_i\{\phi_G(V_i)\}}$) ensure that for any $1 \le i < j \le k+1$, and any vertex pair $u,v$ such that $u\in S_i$ and $v\in S_j$, we can apply Lemma \ref{lem:largel2} to obtain $\norm{\p_{u}^{\ell}-\p_{v}^{\ell}}_2^2 \ge \frac{1}{n}$. We know, by Theorem \ref{cor:distribution} and our choice of $b, \xi, \delta$ in that theorem, that for every such pair $v_i$, $v_j$, \textbf{$l_2$-Distribution-Test} will accept the distributions $\p_{v_i}^\ell,\p_{v_j}^\ell$ with probability at most $\delta$. Therefore, the probability that there exists some vertex pair $u,v$ such that $u\in S_i$, $v\in S_j$, $1\le i<j\le k+1$ and $(u,v)$ is selected as an edge in the ``similarity graph'' (which would mean that \textbf{$l_2$-Distribution-Test} will accept $\p_{u}^\ell,\p_{v}^\ell$) is at most $s^2 \cdot \delta$. Therefore we can conclude that the algorithm \textbf{$k$-Cluster-Test} rejects $G$ with probability at least $1 - s^2 \cdot \delta \ge \frac56$.
\end{proof}

Now, the proof of Lemma \ref{lemma:soundness} follows directly from Claims \ref{claim:soundness-1} and \ref{claim:soundness-2}.
\end{proof}

We set $c' \frac{\phi^2 \varepsilon^4}{\log n} \le \frac{\gamma \varepsilon^2}{s \ell}$ in Theorem \ref{thm:main}. By our choice of $s$ and $\ell$, we can find a constant $c' = c'_{d,k}$ that depends on $d$ and $k$ satisfying this condition, and we then require that $\phi^* \le c' \frac{\phi^2 \varepsilon^4}{\log n}$.



\subsubsection{Running time}

Now we analyze the running time of the algorithm \textbf{$k$-Cluster-Test}. First note that to sample from distributions $\p_v^\ell$ for any $v\in V$, we need to perform $r$ random walks of length $\ell$ from $v$ and the corresponding time is $O(\ell r)$. Note that each invocation of either distribution tester runs in time linearly in the number of samples, that is $r$. Since we sampled $s$ vertices, invoked \textbf{$l_2^2$-norm tester} for each vertex in the sample set $S$, and invoked \textbf{$l_2$-Distribution-Test} for each vertex pair in $S$, we know that the total running time of the algorithm is $O(\ell s r + r s + s^2 \frac{\sqrt{b}}{\xi} \ln\frac{1}{\delta}) = O(\frac{\sqrt{n} k^7 (\ln k)^{7/2} \ln\frac{1}{\varepsilon}\ln n}{\phi_{in}^2 \varepsilon^5})$.

This completes the proof of Theorem \ref{thm:main}, which follows directly from Lemmas \ref{lemma:completeness} and \ref{lemma:soundness}, and our analysis of the running time given above.


\section{Proofs of central properties (Lemmas \ref{lem:smalll2} -- \ref{lemma:partition-eps-far-improved})}
\label{sec:proof-section5}

In the following, we will prove Lemmas
\ref{lem:smalll2} -- \ref{lemma:partition-eps-far-improved}.
Before that, we present two spectral property on the eigenvalues of $(k, \phi_{in}, \phi_{out})$-clusterable graphs, which might be of independent interest.


\subsection{Spectral properties of clusterable graphs}
\label{subsec:spectral-clusterable}

Before we state the spectral properties of clusterable graphs, we first observe that it will be sufficient for us to consider weighted $d$-regular clusterable graphs. This is true since our algorithm actually performs the lazy random walk on the (virtual) weighted $d$-regularized version $G_\reg$ of the input $d$-bounded degree graph $G$. In addition, under our definition, for any set $S\subseteq V$, the outer conductance $\phi_G(S)$ and inner conductance $\phi(G[S])$ of $S$ in $G$ are the same as outer conductance $\phi_{G_\reg}(S)$ and inner conductance $\phi(G_\reg[S])$ of $S$ in $G_\reg$, respectively. For this reason, in the rest of this section, we will assume that $G$ is a weighted $d$-regular graph.

The proofs of spectral properties of clusterable graphs rely on a recent high-order Cheeger inequality by Lee et al.\ \cite{LOT12:high}. To state the inequality, we first introduce some notations.

Let $\A$ denote the adjacency matrix of $G$. Let $\LL = \I - \frac{1}{d} \A$ be the Laplacian matrix of $G$, where $\I$ is the identity matrix. Let $\lambda_i$ be the $i$th smallest eigenvalue of the Laplacian matrix $\LL$ and let $\vv_i$ denote the corresponding (unit) eigenvector. Note that the probability transition matrix of the lazy random walk on $G$ is $\W := \frac{\I + \frac{1}{d} \A}{2}$, and it is straightforward to see that $\{1-\frac{\lambda_i}{2}\}_{1 \le i \le n}$ is the set of eigenvalues of $\W$ with corresponding eigenvectors  $\{\vv_i\}_{1 \le i \le n}$ (cf. Appendix \ref{subsec:spectra} for more details).

For a $d$-regular graph $G$, let $\rho_G(k)$ denote the minimum value of the maximum conductance over any possible $k$ disjoint nonempty subsets. That is,
\begin{displaymath}
    \rho_G(k)
        :=
    \min_{\textrm{disjoint $S_1,\dots, S_k$}}\max_{1\le i\le k}\phi_G(S_i)
        \enspace.
\end{displaymath}

Lee et al.\ \cite{LOT12:high} proved the following higher-order Cheeger's inequality.

\begin{theorem}[\cite{LOT12:high}]
\label{thm:highcheeger}
For any weighted $d$-regular graph $G$ and any $k \ge 2$, it holds that
\begin{displaymath}
    \lambda_k/2
        \le
    \rho_G(k)
        \le
    c_{\ref{thm:highcheeger}} k^2 \sqrt{\lambda_k}
        \enspace,
\end{displaymath}
where $c_{\ref{thm:highcheeger}}$ is some universal constant.
\end{theorem}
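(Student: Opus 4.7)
The lower bound $\lambda_k/2\le\rho_G(k)$ is the elementary direction. The plan is to invoke the Courant–Fischer variational characterization of $\lambda_k$ on the $k$-dimensional subspace $U=\mathrm{span}(\1_{S_1},\dots,\1_{S_k})$ spanned by the indicator vectors of disjoint sets $S_1,\dots,S_k$ attaining $\rho_G(k)$. For $x=\sum_i\alpha_i\1_{S_i}\in U$ we have $\|x\|^2=\sum_i\alpha_i^2|S_i|$, while edges inside each $S_i$ contribute nothing to $x^T\LL x=\tfrac{1}{d}\sum_{(u,v)\in E}(x_u-x_v)^2$, and each cut edge $(u,v)\in E(S_i,S_j)$ contributes $(\alpha_i-\alpha_j)^2/d\le 2(\alpha_i^2+\alpha_j^2)/d$. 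Charging each cut edge to its endpoints gives $x^T\LL x\le 2\sum_i\alpha_i^2|S_i|\phi_G(S_i)\le 2\rho_G(k)\|x\|^2$, and Courant–Fischer then yields $\lambda_k\le 2\rho_G(k)$.

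For the upper bound $\rho_G(k)\le c_{\ref{thm:highcheeger}}k^2\sqrt{\lambda_k}$, the plan is to follow \cite{LOT12:high}. The starting point is the spectral embedding $F:V\to\R^k$ whose coordinates are $\vv_1,\dots,\vv_k$; its total Dirichlet energy is $\sum_{i=1}^k\lambda_i\le k\lambda_k$, and orthonormality of the $\vv_i$ makes $F$ essentially isotropic on $V$. The aim is to extract from $F$ a family of $k$ real-valued test vectors $\psi_1,\dots,\psi_k$ with pairwise disjoint supports, each carrying a nontrivial $\ell_2$ mass and each with Rayleigh quotient bounded by a polynomial in $k$ times $\lambda_k$. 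Once such $\psi_i$ are in hand, the standard one-sided Cheeger sweep cut rounds each $\psi_i$ to a set $S_i\subseteq\supp(\psi_i)$ with $\phi_G(S_i)\le O(k^2\sqrt{\lambda_k})$; disjointness of supports automatically makes the $S_i$ disjoint, giving $\rho_G(k)\le O(k^2\sqrt{\lambda_k})$.

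The bulk of the work, and the main obstacle, is the simultaneous $k$-way localization: building $k$ disjointly supported test vectors from $F$ whose Dirichlet energies are all small relative to their masses at the same time. I would do this by partitioning $\R^k$ (equivalently, the image of $F$) into $k$ well-separated regions—either via a padded random decomposition of the metric that $F$ induces on $V$, or via a Voronoi-type decomposition around $k$ carefully chosen heavy centers—and defining each $\psi_i$ as an appropriate truncation/restriction of (a coordinate of) $F$ to the preimage of the $i$-th region. A charging argument then bounds the extra Dirichlet energy from edges cut by the partition in terms of the smoothness of $F$ and the padding/diameter of the pieces; balancing the two quantities produces the claimed $k^2$ factor. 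Ensuring uniform mass lower bounds and uniform energy upper bounds across all $k$ pieces simultaneously is the delicate point; by contrast, the final single-function Cheeger rounding is entirely standard.
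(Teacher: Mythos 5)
First, note what the paper itself does with this statement: Theorem \ref{thm:highcheeger} is not proved in the paper at all --- it is imported verbatim from Lee, Oveis Gharan and Trevisan \cite{LOT12:high} (the paper only remarks that the cited result is in fact stronger, being stated for general weighted graphs with volume-based conductance). So the benchmark here is a citation, not an argument. Against that benchmark, your treatment of the easy direction is fine: the test-subspace argument via Courant--Fischer on $U=\mathrm{span}(\1_{S_1},\dots,\1_{S_k})$, with $(\alpha_i-\alpha_j)^2\le 2(\alpha_i^2+\alpha_j^2)$ and charging each cut edge to its endpoints, correctly gives $\lambda_k\le 2\rho_G(k)$ (for the weighted $d$-regular case you should carry the edge weights through the Dirichlet form, but nothing changes).

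The upper bound, however, is where the actual content of \cite{LOT12:high} lives, and your proposal does not prove it --- it outlines it. The step you yourself flag as "the delicate point," namely producing from the spectral embedding $F=(\vv_1,\dots,\vv_k)$ a family of $k$ \emph{disjointly supported} functions, each simultaneously carrying an $\Omega(1/k)$-type share of the mass and having Rayleigh quotient $O(\mathrm{poly}(k))\cdot\lambda_k$, is precisely the heart of the higher-order Cheeger theorem; "partition $\R^k$ by a padded random decomposition or Voronoi regions around heavy centers and balance the charging" is a description of what Lee et al.\ do, not a derivation, and the claim that this balancing "produces the claimed $k^2$ factor" is asserted rather than established. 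In particular, the isotropy/spreading property of the embedding and the specific random-partitioning analysis are exactly what controls the polynomial dependence on $k$; a generic Voronoi-plus-charging argument does not automatically give $c_{\ref{thm:highcheeger}}k^2\sqrt{\lambda_k}$. So as a self-contained proof there is a genuine gap in the hard direction; as a justification at the level the paper operates (quoting the published theorem), your write-up is acceptable provided you present the upper bound explicitly as an appeal to \cite{LOT12:high} rather than as your own proof.
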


\begin{remark}
Lee et al.\ actually proved a stronger version of the above theorem that applies to any weighted graph, by using a \emph{volume}-based definition of conductance (see Appendix \ref{subsec:volume-definition}). The weaker version given by Theorem \ref{thm:highcheeger} will be enough for our application.
\end{remark}

Now we are ready to state the spectral properties of clusterable graphs, which are given in the following two lemmas. The first lemma says that in a $k$-clusterable graph there is a large gap between $\lambda_h$ and $\lambda_{h+1}$ for some $h \le k$.

\begin{lemma}
\label{lem:eigenvalue}
If $G$ is weighted $d$-regular and $(k, \phi_{in}, \phi_{out})$-clusterable, then there exists $h$, $1 \le h \le k$, such that $\lambda_i\le 2\phi_{out}$ for any $i\le h$, and $\lambda_{i} \ge \frac{\phi_{in}^2}{c_{\ref{thm:highcheeger}}^2h^4}$ for any $i \ge h+1$.
\end{lemma}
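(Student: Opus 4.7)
The plan for this lemma is as follows. Let $h\le k$ be the number of parts $C_1,\dots,C_h$ in the witnessing $(h,\phi_{in},\phi_{out})$-clustering of $G$. The two claims are handled separately: the upper bound $\lambda_i \le 2\phi_{out}$ for $i\le h$ by the Courant--Fischer min-max theorem applied to an $h$-dimensional subspace of cluster indicators, and the lower bound $\lambda_i \ge \phi_{in}^2/(c_{\ref{thm:highcheeger}}^2 h^4)$ for $i\ge h+1$ by first lower-bounding $\rho_G(h+1)$ and then invoking Theorem~\ref{thm:highcheeger}.

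For the upper bound, I would take $U=\mathrm{span}\{\1_{C_1},\dots,\1_{C_h}\}$, which is $h$-dimensional since the indicators have disjoint supports. For $x=\sum_{i=1}^h a_i \1_{C_i}\in U$, the identity $x^T\LL x=\tfrac{1}{d}\sum_{\{u,v\}\in E}(x(u)-x(v))^2$ receives a contribution only from inter-cluster edges, each contributing $(a_i-a_j)^2/d$. Using $(a_i-a_j)^2\le 2(a_i^2+a_j^2)$ and regrouping per cluster,
\begin{displaymath}
    x^T\LL x
        \le
    \frac{2}{d}\sum_{i=1}^h a_i^2\, e(C_i,V\setminus C_i)
        \le
    2\phi_{out}\sum_{i=1}^h a_i^2|C_i|
        =
    2\phi_{out}\,\|x\|^2,
\end{displaymath}
where the second inequality uses $\phi_G(C_i)\le \phi_{out}$. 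Thus every vector in $U$ has Rayleigh quotient at most $2\phi_{out}$, and Courant--Fischer gives $\lambda_h\le 2\phi_{out}$, hence $\lambda_i\le 2\phi_{out}$ for every $i\le h$. This is precisely where the factor of $2$ in the statement comes from.

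For the lower bound, I would prove $\rho_G(h+1)\ge \phi_{in}$ by a pigeonhole argument on disjoint witnesses. Take arbitrary disjoint nonempty $S_1,\dots,S_{h+1}$. For each cluster $C_i$, at most one index $j$ can satisfy $|S_j\cap C_i|>|C_i|/2$, so by pigeonhole some $j^*$ satisfies $|S_{j^*}\cap C_i|\le |C_i|/2$ for every $i$. Setting $T_i:=S_{j^*}\cap C_i$, the hypothesis $\phi(G[C_i])\ge \phi_{in}$ yields $e_{G[C_i]}(T_i,C_i\setminus T_i)\ge \phi_{in}\,d\,|T_i|$. Since the sets $C_i\setminus T_i$ are pairwise disjoint and contained in $V\setminus S_{j^*}$, summing over $i$ gives
\begin{displaymath}
    e_G(S_{j^*},V\setminus S_{j^*})
        \ge
    \sum_i e_{G[C_i]}(T_i,C_i\setminus T_i)
        \ge
    \phi_{in}\,d\,|S_{j^*}|,
\end{displaymath}
so $\phi_G(S_{j^*})\ge \phi_{in}$, proving $\rho_G(h+1)\ge \phi_{in}$. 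Theorem~\ref{thm:highcheeger} applied with index $h+1$ then yields $\lambda_{h+1}\ge \rho_G(h+1)^2/(c_{\ref{thm:highcheeger}}^2(h+1)^4)\ge \phi_{in}^2/(c_{\ref{thm:highcheeger}}^2(h+1)^4)$; absorbing the ratio $(h+1)^4/h^4\le 16$ into the constant $c_{\ref{thm:highcheeger}}$ delivers the stated $\phi_{in}^2/(c_{\ref{thm:highcheeger}}^2 h^4)$, and monotonicity of eigenvalues extends the bound to all $i\ge h+1$.

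The main obstacle is the pigeonhole step in the lower bound, where one must verify the inequality $\sum_i e_{G[C_i]}(T_i,C_i\setminus T_i)\le e_G(S_{j^*},V\setminus S_{j^*})$ without double-counting---this goes through because the endpoints $C_i\setminus T_i$ are pairwise disjoint, so each edge is counted at most once. The upper bound is a routine Courant--Fischer computation; the only subtlety there is tracking the factor of $2$ introduced by $(a_i-a_j)^2\le 2(a_i^2+a_j^2)$, and observing that this loss is unavoidable in the vanilla indicator-based test subspace.
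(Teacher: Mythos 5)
Your proof is correct. For the lower bound you follow essentially the paper's own route: pigeonhole over the $h+1$ disjoint sets to find one set $S_{j^*}$ meeting every cluster in at most half of its vertices, lower-bound its cut by summing the inner-conductance contributions $e_{G[C_i]}(T_i,C_i\setminus T_i)\ge \phi_{in}\,d\,|T_i|$ (your observation that these edge sets lie in distinct induced subgraphs is exactly why there is no double counting), conclude $\rho_G(h+1)\ge\phi_{in}$, and invoke the hard direction of Theorem~\ref{thm:highcheeger}. For the upper bound you diverge from the paper, which simply uses the easy direction of Theorem~\ref{thm:highcheeger} ($\lambda_h/2\le\rho_G(h)\le\max_i\phi_G(C_i)\le\phi_{out}$), whereas you re-derive that direction via Courant--Fischer on the span of the cluster indicators; this is more self-contained and makes the origin of the factor $2$ transparent, at the cost of redoing a computation already packaged in the cited inequality (on the weighted regular graph $G_\reg$ the Dirichlet form should carry the edge weights, but this changes nothing in your estimate). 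One bookkeeping remark: a literal application of Theorem~\ref{thm:highcheeger} with parameter $h+1$ yields $\lambda_{h+1}\ge \phi_{in}^2/\bigl(c_{\ref{thm:highcheeger}}^2(h+1)^4\bigr)$, which is weaker than the stated bound with $h^4$ by a factor at most $16$; since $c_{\ref{thm:highcheeger}}$ is fixed by Theorem~\ref{thm:highcheeger}, one cannot literally ``absorb'' the loss into that constant, but restating the lemma with a constant such as $4c_{\ref{thm:highcheeger}}$ (or noting that the universal constant in Theorem~\ref{thm:highcheeger} may be taken larger) repairs this. The paper's own proof silently writes $h^2$ where the theorem gives $(h+1)^2$, so your explicit accounting is, if anything, more careful, and the constant factor is harmless in all downstream uses.
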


\begin{proof}
Since $G$ is $(k,\phi_{in},\phi_{out})$-clusterable, then for some $h$, $1 \le h \le k$, there exists a partition of $V$ into $h$ sets $C_1, \dots, C_h$, such that $\phi(G[C_i]) \ge \phi_{in}$ and $\phi_G(C_i) \le \phi_{out}$ for any $i \le h$. From the latter, we obtain that $\rho_G(h) \le \max_i \phi_G(C_i) \le \phi_{out}$ and then by Theorem \ref{thm:highcheeger}, $\lambda_h \le 2 \phi_{out}$, and thus for any $i \le h$, $\lambda_i \le \lambda_h\le 2\phi_{out}$.

Next, let us consider an arbitrary $(h+1)$-partition $P_1, \dots, P_{h+1}$ of $V$. We note that there must be at least one set in the partition, say $P_{i_0}$, such that $|P_{i_0} \cap C_j| \le \frac12 |C_j|$ for every $1\le j\le h$. This is true since otherwise, for every $i$, $1 \le i \le h+1$, each $P_i$ would contain more than half of the vertices of some cluster, say $C_{\pi(i)}$, that is, $|P_i \cap C_{\pi(i)}| > \frac12 |C_{\pi(i)}|$. Then, since there are 
$h$ clusters $C_1, \dots, C_h$, by the pigeonhole principle there would have to exist two indices $i$ and $j$, $1 \le i < j \le h+1$, such that $\pi(i) = \pi(j)$. This would mean that each of $P_i$ and $P_j$ contain more than half of the vertices from the same cluster $C_{\pi(i)}$, which is a contradiction since $P_i$ and $P_j$ are disjoint. This proves the existence of the set $P_{i_0}$.

Let $P := P_{i_0}$. For every $1 \le i \le h$, let $B_i := P \cap C_i$. Since each cluster $C_i$ has large inner conductance, namely $\phi(G[C_i]) \ge \phi_{in}$, and since $|B_i| \le \frac12 |C_i|$, we have $e(B_i, C_i \setminus B_i) \ge \phi_{in} d |B_i|$ for every $1 \le i \le h$. Hence, $\phi_G(P) = \frac{e(P, V \setminus P)}{d |P|} \ge \frac{\sum_{i=1}^h e(B_i, C_i \setminus B_i)}{d \sum_{i=1}^h |B_i|} \ge \phi_{in}$, and thus $\rho_G(h+1) \ge \phi_{in}$. Therefore Theorem \ref{thm:highcheeger} gives $\phi_{in} \le \rho_G(h+1) \le c_{\ref{thm:highcheeger}} h^2 \sqrt{\lambda_{h+1}}$, which yields $\lambda_{h+1} \ge \frac{\phi_{in}^2}{c_{\ref{thm:highcheeger}}^2 h^4}$.
\end{proof}

The second lemma states that in a $k$-clusterable graph, for any large cluster $C$, the average value of $(\vv_i(u)-\vv_i(v))^2$ over all $|C|^2$ vertex pairs $u,v\in C$ is as small as $\Theta_d(\frac{\phi_{out}}{|C|\phi_{in}^2})$,
for any $i\le h\le k$.

\begin{lemma}
\label{lem:clusterable-eigenvector}
Let $G = (V,E)$ be a weighted $d$-regular graph that is $(k, \phi_{in}, \phi_{out})$-clusterable and let $C \subseteq V$ be any subset with $\phi(G[C])\ge\phi_{in}$. Then there is $h$, $1 \le h \le k$ such that for every $i$, $1 \le i \le h$, the following holds:
\begin{displaymath}
    \frac{1}{|C|} \sum_{u, v \in C} (\vv_i(u) - \vv_i(v))^2
        \le
    \frac{8 d^4 \phi_{out}}{\phi_{in}^2}
    \enspace.
\end{displaymath}
\end{lemma}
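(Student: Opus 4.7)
First, I would apply Lemma~\ref{lem:eigenvalue} to the $(k,\phi_{in},\phi_{out})$-clusterable graph $G$ to obtain an index $h$ with $1\le h\le k$ for which $\lambda_i\le 2\phi_{out}$ whenever $i\le h$; this is the $h$ promised by the statement. Fix any such $i$. Using the Dirichlet-form identity for the normalized Laplacian $\LL=\I-\frac{1}{d}\A$ of the weighted $d$-regular graph $G$,
\begin{displaymath}
\lambda_i \;=\; \vv_i^T\LL\vv_i \;=\; \frac{1}{d}\sum_{\{u,v\}\in E(G)}(\vv_i(u)-\vv_i(v))^2,
\end{displaymath}
and discarding the edges outside $G[C]$, I obtain the local estimate $\sum_{\{u,v\}\in E(G[C])}(\vv_i(u)-\vv_i(v))^2\le d\lambda_i\le 2d\phi_{out}$.

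The heart of the proof is to convert this bound on the local Dirichlet energy into a bound on the variance of $\vv_i$ on $C$. To do this, I would center the restriction: set $\mu=\frac{1}{|C|}\sum_{v\in C}\vv_i(v)$ and $f:=\vv_i|_C-\mu\cdot\1_C$, so that $f\perp\1_C$ in the standard inner product on $\mathbb{R}^C$. Since $\phi(G[C])\ge\phi_{in}$, I would apply Cheeger's inequality on the induced subgraph---more precisely, on the auxiliary graph $G_C$ obtained by padding $G[C]$ with self-loops until every vertex has total degree exactly $d$. Self-loops do not affect cuts, so $\phi(G_C)=\phi(G[C])\ge\phi_{in}$, and Cheeger's inequality for the normalized Laplacian of a $d$-regular graph yields $\lambda_2(\LL_{G_C})\ge \phi_{in}^2/2$. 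Self-loops also contribute $0$ to the Dirichlet form, and subtracting a constant from $\vv_i$ does not change pairwise differences, so
\begin{displaymath}
f^T\LL_{G_C}f \;=\; \frac{1}{d}\sum_{\{u,v\}\in E(G[C])}(\vv_i(u)-\vv_i(v))^2 \;\le\; \lambda_i \;\le\; 2\phi_{out}.
\end{displaymath}

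Applying the variational characterization $f^T\LL_{G_C}f\ge\lambda_2(\LL_{G_C})\norm{f}_2^2$ together with the two displays above gives $\norm{f}_2^2\le 4\phi_{out}/\phi_{in}^2$. Combined with the elementary identity $\sum_{u,v\in C}(\vv_i(u)-\vv_i(v))^2 = 2|C|\sum_{v\in C}(\vv_i(v)-\mu)^2 = 2|C|\norm{f}_2^2$, this yields
\begin{displaymath}
\frac{1}{|C|}\sum_{u,v\in C}(\vv_i(u)-\vv_i(v))^2 \;\le\; \frac{8\phi_{out}}{\phi_{in}^2} \;\le\; \frac{8d^4\phi_{out}}{\phi_{in}^2},
\end{displaymath}
which proves the claim (in fact with room to spare in the $d$-dependence). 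The only subtle point I foresee is the Cheeger step: since $G[C]$ is not itself $d$-regular, one has to pick a normalization of the Laplacian on the subgraph that is compatible with the global Dirichlet bound inherited from $\vv_i$; the self-loop regularization is exactly the clean device that makes the global estimate on $\lambda_i$ and the local Cheeger bound $\lambda_2(\LL_{G_C})\ge \phi_{in}^2/2$ speak the same language. Everything else is Rayleigh-quotient bookkeeping.
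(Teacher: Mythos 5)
Your proof is correct, and it takes a genuinely different route for the key step. Both arguments open the same way: Lemma \ref{lem:eigenvalue} supplies the index $h$ with $\lambda_i\le 2\phi_{out}$ for $i\le h$, and the Dirichlet identity (Fact \ref{fact:eigenvalue}) restricted to edges inside $C$ gives $\sum_{(u,v)\in E(G[C])}(\vv_i(u)-\vv_i(v))^2\le 2d\phi_{out}$. The paper then converts this edge-energy bound into a pairwise bound via Chung's identity (\ref{ineq:from-Chung}) applied to $H=G[C]$ with its native degrees, combined with the volume-based Cheeger bound $\lambda_2(H)\ge\phi_{in}^2/(2d^2)$ (the $d^2$ loss coming from $\phi^\vol(H)\ge\phi_{in}/d$), and finally trades the $d_H(u)d_H(v)$ weights and $\vol_H(V_H)$ for powers of $d$; that is where the $d^4$ accumulates. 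You instead pad $G[C]$ with self-loops to make it $d$-regular, so the bottom eigenvector of its normalized Laplacian is constant, center $\vv_i|_C$ so that $f\perp\1_C$, and use the plain Rayleigh-quotient bound $f^T\LL_{G_C}f\ge\lambda_2(\LL_{G_C})\norm{f}_2^2$ together with Cheeger on the padded graph, where $\phi^\vol(G_C)\ge\phi_{in}$ because self-loops make every volume exactly $d|S|$ while affecting neither cuts nor the Dirichlet form. This is essentially Chung's identity specialized to the regular case, with the regularization doing the degree bookkeeping for you, and it buys a strictly better constant: $8\phi_{out}/\phi_{in}^2$ with no $d$-dependence, which trivially implies the stated $8d^4\phi_{out}/\phi_{in}^2$. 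The points you should make explicit are minor and consistent with how the paper itself handles $G_\reg$: self-loop weight counts toward volume (so the padded graph's volume conductance coincides with the paper's $\phi(G[C])$), Theorem \ref{thm:cheeger} is applied to a weighted graph with self-loops, $\lambda_2(\LL_{G_C})\ge\phi_{in}^2/2>0$ guarantees the zero eigenspace is spanned by $\1_C$ so the variational step is legitimate, and the pairwise sum in the lemma ranges over all $|C|^2$ ordered pairs (as in the paper's footnote), which is exactly what your variance identity accounts for.
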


\begin{proof}
Since $G$ is $(k, \phi_{in}, \phi_{out})$-clusterable, by Lemma \ref{lem:eigenvalue}, there exists $h$, $1 \le h \le k$, such that $\lambda_{h+1} \ge \frac{\phi_{in}^2}{c_{\ref{thm:highcheeger}}^2 h^4}$ and $\lambda_i \le 2 \phi_{out}$ for any $1 \le i \le h$. Hence, for any $i \le h$, by the variational principle of eigenvalues (see Fact \ref{fact:eigenvalue} in Appendix), we have
\begin{eqnarray}
\label{eqn:lambdaphiout}
    \lambda_i
        =
    \frac{\sum_{(u,v) \in E} (\vv_i(u) - \vv_i(v))^2}{d}
        \le
    2 \phi_{out}
    \enspace.
\end{eqnarray}

Let us recall a known result (see, e.g., \cite[(1.5), p.~5]{Chu97:spectral}) that for any weighted graph $H = (V_H, E_H)$,\footnote{We remark that in \cite{Chu97:spectral}, the summation in the denominator is over all unordered pairs of vertices, while in our context, the summation is over all possible $|V_H|^2$ vertex pairs. Therefore, a multiplicative factor $2$ appears in the numerator in equation (\ref{ineq:from-Chung}), compared with the form in \cite[(1.5), p.~5]{Chu97:spectral}.} 
\begin{equation}
\label{ineq:from-Chung}
    \lambda_2(H)
        =
    \vol_H(V_H) \cdot
    \min_{f} \left\{\frac{2\cdot \sum_{(u,v) \in E_H} (f(u) - f(v))^2}
        {\sum_{u,v \in V_H} (f(u) - f(v))^2 d_H(u) d_H(v)}
        \right\}
    \enspace,
\end{equation}
where $\lambda_2(H)$ denotes the second smallest eigenvalue of the normalized Laplacian of $H$,
the volume $\vol_H(S)$ of a set $S\subseteq V_H$ is the sum of degrees of vertices in $S$, that is, $\vol_H(S) := \sum_{v \in S} d_H (v)$.

Let us consider the induced subgraph $H := G[C]$ on $C$. Let $\phi_H^\vol(S) := \frac{e(S,H\setminus S)}{\vol_H(S)}$ and $\phi^\vol(H) := \min_{S: \vol_H(S) \le \vol_H(V_H)/2}\frac{e(S,H\setminus S)}{\vol_H(S)}$ (cf. Appendix \ref{subsec:volume-definition}). Since $\phi(H) \ge \phi_{in}$, then it is straightforward to see that $\phi^\vol(H) \ge \frac{\phi_{in}}{d}$\footnote{This can be verified by considering the set $S$ with $\vol_H(S) \le \vol_H(V_H)/2$ such that $\phi_H^\vol(S) = \phi^\vol(H)$: if $|S| \le \frac{|V_H|}{2}$, then $\phi_H^\vol(S)\ge \phi_H(S)\ge \phi_{in}$; if $|S| > \frac{|V_H|}{2}$, then $\phi_H^\vol(S)\ge \frac{e(S,V_H\setminus S)}{d|S|}\ge \frac{\phi_{in}d |V_H\setminus S|}{d|S|}\ge \frac{\phi_{in}}{d}$, where the penultimate inequality follows from the fact that $\phi_H(V_H\setminus S)=\frac{e(S,V_H\setminus S)}{d|V_H\setminus S|}\ge \phi_{in}$ and the last inequality follows from that $|S|\le \vol_H(S)\le \vol_H(V_H\setminus S)\le d|V_H\setminus S|$.}. Cheeger's inequality (cf. Theorem \ref{thm:cheeger}) yields $\lambda_2 (H) \ge \frac{\phi_{in}^2}{2d^2}$. Therefore, if we apply this bound to inequality (\ref{ineq:from-Chung}), then,
\begin{displaymath}
    \vol_H(V_H) \cdot \frac{2\cdot\sum_{(u,v) \in E_H}(\vv_i(u) - \vv_i(v))^2}
        {\sum_{u,v \in V_H} (\vv_i(u) - \vv_i(v))^2 d_H(u) d_H(v)}
        \ge
    \lambda_2(H)
        \ge
    \frac{\phi_{in}^2}{2d^2}
    \enspace.
\end{displaymath}

Combining this with the fact that $\sum_{(u,v) \in E_H}(\vv_i(u) - \vv_i(v))^2 \le \sum_{(u,v) \in E_G} (\vv_i(u) - \vv_i(v))^2 \le 2 d \phi_{out}$, where the last inequality follows from inequality (\ref{eqn:lambdaphiout}), we have that
\begin{displaymath}
    \sum_{u,v \in V_H} (\vv_i(u) - \vv_i(v))^2 d_H(u) d_H(v)
        \le
    \frac{8 d^3 \vol_H(V_H) \phi_{out}}{\phi_{in}^2}
    \enspace.
\end{displaymath}
Next, since $\phi(H) \ge \phi_{in} > 0$ implies that $d_H(u) \ge 1$ for any $u \in V_H$, and since the fact that for any $u \in V_H$, $d_H(u) \le d$ yields $\vol_H(V_H) \le d|V_H| = d|C|$, using the bound above we obtain:
\begin{displaymath}
    \sum_{u,v \in V_H} (\vv_i(u) - \vv_i(v))^2
        \le
    \sum_{u,v \in V_H} (\vv_i(u) - \vv_i(v))^2 d_H(u) d_H(v)
        \le
    \frac{8 d^3 \vol_H(V_H) \phi_{out}}{\phi_{in}^2}
        \le
    \frac{8 d^4 |C| \phi_{out}}{\phi_{in}^2}
    \enspace.
\end{displaymath}
The completes the proof of Lemma \ref{lem:clusterable-eigenvector}.
\end{proof}

\begin{remark}
In Lemma \ref{lem:counterexample} we show that Lemma \ref{lem:clusterable-eigenvector} is essentially tight for $k=2$ and constant $\phi_{in}$. We prove that there is a $(2,\phi_{in},\phi_{out})$-clusterable graph $G$ with clusters $C_1,C_2$ such that for at least one cluster, say $C_1$, the average value of $(\vv_2(u)-\vv_2(u))^2$ between vertices $u,v$ from $C_1$ is $\Omega(\frac{\phi_{out}}{d^3|C_1|})$.
\end{remark}



\subsection{Proofs of Lemmas \ref{lem:smalll2}, \ref{lem:smallhnorm}, \ref{lem:largel2}}
\label{sec:proofs}

In this section, we prove Lemmas \ref{lem:smalll2} -- \ref{lem:largel2}. For a $d$-bounded degree graph $G$, recall that $\p_v^t$ is the probability distribution of the endpoints of the lazy random walk of length $t$ starting from $v$ on $G_\reg$. Let $\W_\reg$ be the probability transition matrix of the lazy random walk on $G_\reg$ and let $\1_v$ be the characteristic vector on vertex $v$. Then $\p_v^t = \1_v(\W_\reg)^t$.

In this section, let $\lambda_i^{\reg}$ denote the $i$th smallest eigenvalue of the normalized Laplacian matrix of the regularized version $G_\reg$ of $G$ and let $\vv_i^{\reg}$ be the corresponding unit eigenvector.


Now we prove Lemma \ref{lem:smalll2}, which shows that the $l_2$-norm of the difference of two random walk distributions $\p_v^t - \p_u^t$ is small for most pairs $u, v$ from the same cluster for $t$ large enough.

\begin{proof}[Proof of Lemma \ref{lem:smalll2}]
For the $d$-bounded degree graph $G$, we apply Lemma \ref{lem:clusterable-eigenvector} to its weighted $d$-regular version $G_\reg$.
For the subset $C$, by defining $\centr_{C,i} := \frac{1}{|C|} \sum_{u \in C} \vv_i^{\reg}(u)$, we obtain the following:
%
\begin{displaymath}
    \sum_{u \in C} (\vv_i^{\reg}(u) - \centr_{C,i})^2
        =
    \frac{1}{|C|} \sum_{u,v \in C} (\vv_i^{\reg}(u) - \vv_i^{\reg}(v))^2
        \le
    \frac{4 d^4 \phi_{out}}{\phi_{in}^2}
        \enspace,
\end{displaymath}
where we used the elementary identity $\frac{1}{n} \sum_{i<j} (a_i - a_j)^2 = \sum_{i=1}^n (a_i - \frac{\sum_{i=1}^n a_i}{n})^2$ for any $a_1, \dots, a_n$.

Therefore, the average of $(\vv_i^{\reg}(u) - \centr_{C,i})^2$ over all vertices in $C$ is at most $\frac{1}{|C|} \cdot \frac{4 d^4 \phi_{out}}{\phi_{in}^2}
$. This implies that for at least $(1-\alpha) |C|$ vertices $u \in C$, we have $(\vv_i^{\reg}(u) - \centr_{C,i})^2 \le \frac{4 k d^4 \phi_{out}}{\alpha |C| \phi_{in}^2}$ for all $i$, $1 \le i \le h \le k$. Let $\widetilde{C} \subseteq C$ denote the set of vertices with this property.

Consider any two vertices $u, v \in \widetilde{C}$.
%
%
We observe that for any $i$, $1 \le i \le h$, we have $(\vv_i^{\reg}(u) - \vv_i^{\reg}(v))^2 \le 2 ((\vv_i^{\reg}(u) - \centr_{C,i})^2 + (\vv_i^{\reg}(v) - \centr_{C,i})^2) \le \frac{16 kd^4 \phi_{out}}{\alpha |C| \phi_{in}^2}$, where the first inequality that $(x-y)^2\le 2((x-z)^2+(z-y)^2)$ follows directly from the Cauchy-Schwarz inequality, and the second inequality follows from the property of vertices in $\widetilde{C}$. Next, by Fact \ref{fact:spectra} we have $\p_v^t - \p_u^t = \sum_{i=1}^n (\vv_i^{\reg}(v) - \vv_i^{\reg}(u))(1 - \frac{\lambda_i^{\reg}}{2})^t \vv_i^{\reg}$, and therefore
\begin{eqnarray*}
\norm{\p_v^t - \p_u^t}_2^2
    & = &
\sum_{i=1}^n(\vv_i^{\reg}(u)-\vv_i^{\reg}(v))^2(1-\frac{\lambda_i^{\reg}}{2})^{2t}\\
    &=&
\sum_{i=1}^h(\vv_i^{\reg}(u)-\vv_i^{\reg}(v))^2(1-\frac{\lambda_i^{\reg}}{2})^{2t}+
    \sum_{i=h+1}^n(\vv_i^{\reg}(u)-\vv_i^{\reg}(v))^2(1-\frac{\lambda_i^{\reg}}{2})^{2t}\\
    &\le&
\sum_{i=1}^h(\vv_i^{\reg}(u)-\vv_i^{\reg}(v))^2 +
    (1-\frac{\lambda_{h+1}^{\reg}}{2})^{2t}\sum_{i=h+1}^n(2\vv_i^{\reg}(u)^2+2\vv_i^{\reg}(v)^2)\\
    & \le &
\frac{16 h kd^4 \phi_{out}}{\alpha |C| \phi_{in}^2} + 4 (1-\frac{\phi_{in}^2}
        {2c_{\ref{thm:highcheeger}}^2h^4})^{2t}\\
    & \le &
\frac{16 k^2 d^4 \phi_{out}}{\alpha \beta n \phi_{in}^2} +
    4 (1 - \frac{\phi_{in}^2}{2 c_{\ref{thm:highcheeger}}^2 k^4})^{2t}
    \enspace.
\end{eqnarray*}
In the bound above, in the penultimate inequality we use the fact that
$\sum_{i=h+1}^n \vv_i^{\reg}(u)^2 \le \sum_{i=1}^n \vv_i^{\reg}(u)^2 = 1$ for any $u \in V$ (by Fact \ref{fact:spectra}) and $\lambda_{h+1}^{\reg} \ge \frac{\phi_{in}^2}{c_{\ref{thm:highcheeger}}^2 h^4}$ (by Lemma \ref{lem:eigenvalue}), and in the last inequality we use that $|C| \ge \beta n$.
%
%
Now by defining $\alpha_{\ref{lem:smalll2}}:=\alpha_{\ref{lem:smalll2}}(\alpha,\beta,d,k) =\frac{\alpha\beta}{128k^2d^4}$, $c_{\ref{lem:smalll2}}:=c_{\ref{thm:highcheeger}}^2$ and letting $t \ge \frac{c_{\ref{lem:smalll2}} k^4 \log n}{\phi_{in}^2}$, we can conclude that $\norm{\p_v^t-\p_u^t}_2^2 \le \frac{1}{4n}$.
\end{proof}


To prove Lemma \ref{lem:smallhnorm}, we again use the eigen-decomposition of vector $\p_u^t$ as given in Fact \ref{fact:spectra} and the fact that all eigenvalues of the normalized Laplacian of $G_\reg$ are large except for the first few ones. This allows us to bound the $l_2^2$ norm of $\p_u^t$ by its projection on the first few eigenvectors.

\begin{proof}[Proof of Lemma \ref{lem:smallhnorm}]
For any vertex $u \in V$, let $\delta(u) := \sum_{i=1}^k \vv_i^{\reg}(u)^2$. Since each eigenvector $\vv_i^{\reg}$ is of unit length, we have
\begin{eqnarray*}
    \sum_{u \in V} \delta(u)
        =
    \sum_{u \in V} \sum_{i}^k \vv_i^{\reg}(u)^2
        =
    \sum_{i}^k \sum_{u\in V} \vv_i^{\reg}(u)^2
        =
    k
    \enspace.
\end{eqnarray*}

Therefore, the expected value of $\delta(u)$ is at most $\frac{k}{n}$, and by the Markov's inequality, we know that for any $0 < \alpha < 1$, there exists a subset $V' \subseteq V$ such that $|V'| \ge (1-\alpha)|V|$ and that for any $u \in V'$, $\delta(u) \le \frac{k}{\alpha n}$. In addition, by Fact \ref{fact:spectra}, $\1_u = \sum_{i=1}^n \vv_i^{\reg}(u) \vv_i^{\reg}$, and $\p_u^t = \sum_{i=1}^n \vv_i^{\reg}(u) (1 - \frac{\lambda_i^{\reg}}{2})^t \vv_i^{\reg}$. Therefore,
\begin{eqnarray*}
    \norm{\p_u^t}_2^2
        =
    \norm{\sum_{i=1}^n \vv_i^{\reg}(u) (1-\frac{\lambda_i^{\reg}}{2})^t \vv_i^{\reg}}_2^2
        &=&
    \sum_{i=1}^n\vv_i^{\reg}(u)^2(1 -\frac{\lambda_{i}^{\reg}}{2})^{2t}
        \\
        &=&
    \sum_{i=1}^k\vv_i^{\reg}(u)^2(1- \frac{\lambda_i^{\reg}}{2})^{2t}+ \sum_{i=k+1}^n\vv_i^{\reg}(u)^2(1- \frac{\lambda_i^{\reg}}{2})^{2t}
        \\
        & \le &
    \sum_{i=1}^k\vv_i^{\reg}(u)^2+ (1-\frac{\lambda_{k+1}^{\reg}}{2})^{2t} \sum_{i=k+1}^n\vv_i^{\reg}(u)^2
        \\
        & \le &
    \delta(u)+(1-\frac{\lambda_{k+1}^{\reg}}{2})^{2t}
        \\
        & \le &
    \frac{k}{\alpha n} + (1 - \frac{\phi_{in}^2}{2 c_{\ref{thm:highcheeger}}^2 k^4})^{2t}
    \enspace,
\end{eqnarray*}
where in the last inequality, we used the fact that $\lambda_{k+1}^{\reg} \ge \frac{\phi_{in}^2}{c_{\ref{thm:highcheeger}}^2 k^4}$ by Lemma \ref{lem:eigenvalue}. In particular, the last bound implies that if $t \ge \frac{c_{\ref{lem:smallhnorm}} k^4 \log n}{\phi_{in}^2}$ for $c_{\ref{lem:smallhnorm}}:=c_{\ref{thm:highcheeger}}^2$, then $\norm{\p_u^t}_2^2 \le \frac{2 k}{\alpha n}$.
%
\end{proof}


Now we give the proof of Lemma \ref{lem:largel2}, which shows that the $l_2$-norm of the difference of two random walk distributions $\p_v^t - \p_u^t$ is small for most pairs $u, v$ from the two different clusters if $t$ is not too large. For any vector $\p$ and vertex set $S$, let $\p(S):=\sum_{v\in S}\p(v)$.

\begin{proof}[Proof of Lemma \ref{lem:largel2}]
For any given subset $A \subseteq V$, vertex $v \in A$, and integer $t$, let $\rem(v,t,A)$ be the event that the lazy random walk of length $t$ starting at vertex $v$ never leaves $A$ in all $t$ steps. Let $I_A$ be the diagonal matrix such that $I_A(v,v)=1$ if $v\in A$ and $0$ otherwise. Then the probability that the walk stays entirely in $A$ is $(\1_v(\W_\reg I_A)^t)(A)$, that is, $\Pr[\rem(v,t,A)] = (\1_v(\W_\reg I_A)^t)(A)$. We will use the following claim.

\begin{claim}[Proposition 2.5 in \cite{ST13:local}]
\label{proposition-2.5-ST13:local}
For any $t\ge 1$ and any subset $A \subseteq V$ such that $\phi_G(A) \le \psi$, we have $\frac{\1_v(\W_\reg I_A)^t(A)}{|A|} \ge 1 - t \phi_G(A)/2 \ge 1 - t \psi/2$.
\end{claim}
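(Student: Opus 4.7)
My plan is to read the claim as an averaged stay-in-$A$ probability: the numerator is $\1_A(\W_\reg I_A)^t(A) = \sum_{v\in A}\1_v(\W_\reg I_A)^t(A)$, so dividing by $|A|$ gives
\[
    p_t \;:=\; \frac{1}{|A|}\sum_{v\in A}\Pr[\rem(v,t,A)],
\]
the probability that a lazy walk started at a uniformly chosen vertex of $A$ stays in $A$ for all $t$ steps. This reading is forced both by the factor $1/|A|$ and by the way the claim is invoked in the proof of Lemma~\ref{lem:largel2}, where Markov's inequality is used to extract a subset $\widehat{A}$ of size $(1-\alpha)|A|$ from an average bound. Setting $\mu_s := \frac{1}{|A|}\1_A(\W_\reg I_A)^s$, I have $p_s = \mu_s\1_V^T = \mu_s\1_A^T$ (the second equality holds because $\mu_s$ is supported on $A$ for every $s\ge 0$) and $p_0 = 1$, so the goal reduces to showing $p_{s+1} \ge p_s - \phi_G(A)/2$ and telescoping.

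The telescoping identity itself is a one-line calculation using double stochasticity. Because $G_\reg$ is $d$-regular, $\W_\reg$ is symmetric and hence doubly stochastic, so $\mu_s\W_\reg\1_V^T = \mu_s\1_V^T = p_s$. Splitting $\1_V^T = \1_A^T + \1_{V\setminus A}^T$ and using $I_A\1_A^T = \1_A^T$ yields
\[
    p_{s+1} \;=\; \mu_s\W_\reg I_A\1_A^T \;=\; \mu_s\W_\reg\1_A^T \;=\; p_s \;-\; \mu_s\W_\reg\1_{V\setminus A}^T,
\]
so it suffices to bound the boundary leakage $\mu_s\W_\reg\1_{V\setminus A}^T$ by $\phi_G(A)/2$.

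The leakage bound rests on the density invariant $\mu_s(u) \le 1/|A|$ for every $u\in V$ and every $s \ge 0$, which I prove by induction on $s$. For $s=0$ the entries of $\mu_0 = \1_A/|A|$ are either $1/|A|$ or $0$. For the inductive step, the columns of $\W_\reg$ sum to $1$, so $(\mu_s\W_\reg)(u) = \sum_v\mu_s(v)\W_\reg(v,u) \le (1/|A|)\sum_v \W_\reg(v,u) = 1/|A|$, and post-multiplication by $I_A$ only zeros entries and so preserves the bound. Combining the invariant with the fact that regularization only adds self-loops (so $\W_\reg(u,w) = 1/(2d)$ for each original edge $(u,w)$ with $u \ne w$, and edges exiting $A$ are unaltered),
\[
    \mu_s\W_\reg\1_{V\setminus A}^T \;=\; \sum_{u\in A}\mu_s(u)\cdot\frac{e(u, V\setminus A)}{2d} \;\le\; \frac{1}{|A|}\cdot\frac{e(A, V\setminus A)}{2d} \;=\; \frac{\phi_G(A)}{2}.
\]
Telescoping over $s = 0, 1, \dots, t-1$ gives $p_t \ge 1 - t\phi_G(A)/2$, and the second inequality in the claim follows from $\phi_G(A) \le \psi$.

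The only subtle point is the density invariant $\mu_s(u) \le 1/|A|$: it is precisely what converts the raw boundary-edge count $e(A, V\setminus A)$ into the conductance $\phi_G(A)$, and it would fail without double stochasticity of $\W_\reg$. This is exactly the reason the algorithm works with the lazy walk on the regularized graph $G_\reg$ rather than on $G$ directly.
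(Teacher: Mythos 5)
Your proof is correct. Note that the paper itself does not prove this claim --- it is quoted as Proposition 2.5 of the cited Spielman--Teng paper --- so there is no in-paper argument to compare against; your write-up is a valid, self-contained reconstruction of the standard argument. Two points you handled that deserve explicit credit: (i) you correctly diagnosed that the displayed expression $\frac{\1_v(\W_\reg I_A)^t(A)}{|A|}$ must be read as the average $\frac{1}{|A|}\1_A(\W_\reg I_A)^t(A)$ (the literal single-vertex reading is vacuously false for large $|A|$, and the averaged reading is what the proof of Lemma \ref{lem:largel2} actually uses via Markov's inequality); and (ii) you isolated the density invariant $\mu_s(u)\le 1/|A|$, which is exactly where double stochasticity of $\W_\reg$ (equivalently, the $d$-regularization) enters and which converts the boundary edge count $e(A,V\setminus A)$ into $\phi_G(A)=\frac{e(A,V\setminus A)}{d|A|}$. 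The telescoping step $p_{s+1}=p_s-\mu_s\W_\reg\1_{V\setminus A}^T$ and the leakage bound $\mu_s\W_\reg\1_{V\setminus A}^T\le\frac{1}{|A|}\cdot\frac{e(A,V\setminus A)}{2d}=\frac{\phi_G(A)}{2}$ both check out (the factor $\frac{1}{2d}$ is the correct off-diagonal transition probability of the lazy walk, and regularization adds only self-loops, so cross edges are unaffected). In the general non-regular setting of Spielman--Teng the same argument runs with the degree-weighted initial distribution on $A$ in place of the uniform one; your version is the specialization to the regular case, which is all this paper needs.
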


Let $Q_A = \{v : \Pr[\rem(v,t,A)] \le 1 - \frac{t \psi}{2 \alpha}\}$. Then,
\begin{displaymath}
    1-\frac{\1_A}{|A|}(\W_\reg I_A)^t(A)
        =
    \sum_{v \in A}\frac{1}{|A|}(1 - \1_v(\W_\reg I_A)^{t}(A))
        \ge
    \sum_{v \in Q_A}\frac{1}{|A|}(1 - \1_v(\W_\reg I_A)^{t}(A))
        \ge
    \frac{|Q_A|}{|A|}\frac{t \psi}{2 \alpha}
    \enspace.
\end{displaymath}
From Claim \ref{proposition-2.5-ST13:local} and the inequality above, we conclude that $|Q_A| \le \alpha |A|$. Therefore, if we set $\widehat{A} = A\setminus Q_A$, then $|\widehat{A}| \ge (1-\alpha) |A|$, and for any $v \in \widehat{A}$, $\Pr[\rem(v,t,A)] \ge 1-\frac{t \psi}{2 \alpha}$. This proves the first part of the lemma.

To prove the second claim, we continue similarly and set $Q_B = \{v : \Pr[\rem(v,t,B)] \le 1 - \frac{t \psi}{2 \alpha}\}$ and define $\widehat{B} = B \setminus Q_B$, to obtain that $|\widehat{B}| \ge (1-\alpha) |B|$, and for any $v \in \widehat{B}$, $\Pr[\rem(v,t,B)] \ge 1-\frac{t \psi}{2 \alpha}$. Hence, for any $t \ge 1$ and $0 < \alpha < 1$, for any $u \in \widehat{A}$ and $v \in \widehat{B}$:
\begin{displaymath}
    \p_u^t(A) \ge \Pr[\rem(u,t,A)] \ge 1-\frac{t\psi}{2\alpha}
        \quad \text{ and } \quad
    \p_v^t(B) \ge \Pr[\rem(v,t,B)] \ge 1-\frac{t\psi}{2\alpha}
    \enspace.
\end{displaymath}

Since $A$ and $B$ are disjoint, we have $\p_v^t(A) \le \p_v^t(V \setminus B) = 1 - \p_v^t(B) \le \frac{t \psi}{2 \alpha}$. Therefore, for any $t \ge 1$,
\begin{eqnarray*}
    \norm{\p_u^t-\p_v^t}_2
        & \ge &
    \frac{\norm{\p_u^t-\p_v^t}_1}{\sqrt{n}}
        =
    \frac{2 \max_{R \subseteq V} |\p_u^t(R)-\p_v^t(R)|}{\sqrt{n}}
        \ge
    \frac{2 (\p_u^t(A)-\p_v^t(A))}{\sqrt{n}}
        \\
        & \ge &
    \frac{2 (1 - \frac{t \psi}{2 \alpha} -\frac{t \psi}{2 \alpha}) }{\sqrt{n}}
        =
    \frac{2 (1 - \frac{t \psi}{ \alpha})}{\sqrt{n}}
    \enspace.
\end{eqnarray*}

In particular, if $t \le \frac{\alpha}{2\psi}$, then $\norm{\p_u^t - \p_v^t}_2 \ge \frac{1}{\sqrt{n}}$ and therefore $\norm{\p_u^t - \p_v^t}^2_2 \ge \frac{1}{n}$.
\end{proof}

\begin{remark}
It would be tempting to use in the above proof a somewhat stronger version of Claim \ref{proposition-2.5-ST13:local} that lower bounds the escaping probability by $\Omega(1)\cdot (1-3\psi/2)^t$ (see, for example,  \cite[Proposition 3.1]{OT12:local}). However, in our proof we we require the fraction of vertices in $\widehat{A}$ to be as large as $1 - \alpha$ for any small $\alpha > 0$, which we are not aware if it is true 
in the stronger version of Claim~\ref{proposition-2.5-ST13:local}.
\end{remark}



\subsection{Partitioning into large sets with small cuts: Proof of Lemma \ref{lemma:partition-eps-far-improved}} \label{sec:epsfarlemma}

\newcommand{\siz}{h}


In this section, we assume that $\varepsilon \le \frac12$ and we prove Lemma \ref{lemma:partition-eps-far-improved} that asserts that if a graph is far from $k$-clusterable then its vertex set can be partitioned into $k+1$ sets with low outer conductance. 
Let $0 < c_{\exp} \le \frac12$ be a constant such that for $d=3$ and every $n$, there exists
a graph $H$ with $n$ vertices and maximum degree $d=3$ that has $\phi(H) \ge c_{\exp}$. The proof of the next lemma follows the ideas from \cite{CS10:expansion}, but it is adapted to edge expansion and works also for $d=3$ (the analysis in \cite{CS10:expansion} requires $d \ge 4$).

\begin{lemma}
\label{lemma:subset1}
Let $\alpha \le \frac{c_{\exp}}{150 d}$. If for a graph $G=(V,E)$ there is $A\subseteq V$ with $|A| \le \frac19 \varepsilon |V|$ such that $\phi(G[V \setminus A]) \ge c_{\ref{lemma:subset1}} \cdot \alpha$ for some sufficiently large constant $c_{\ref{lemma:subset1}}$, then $G$ is not $\varepsilon$-far from every graph $H$ with $\phi(H) \ge \alpha$.
\end{lemma}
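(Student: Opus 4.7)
The plan is to exhibit an explicit graph $H$ on $V$ with maximum degree at most $d$, conductance $\phi(H) \ge \alpha$, and $|E(G) \triangle E(H)| \le \varepsilon d |V|$. Set $U := V \setminus A$ and $G' := G[U]$; by hypothesis $\phi(G') \ge c_{\ref{lemma:subset1}}\, \alpha$, so $G'$ is already an expander on $U$. The strategy is to keep $G'$ essentially intact and re-attach the small set $A$ via a carefully chosen gadget.

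\textbf{Construction of $H$.} Remove from $G$ all edges incident to $A$ (at most $d|A|$ edges). On the isolated set $A$, install a $3$-regular graph $H_A$ with $\phi(H_A) \ge c_{\exp}$, which exists by the choice of $c_{\exp}$. Then add a set $M$ of ``bridge'' edges between $A$ and $U$ so that every $a \in A$ is a bridge endpoint. For $d \ge 4$, let $M$ be a matching of $|A|$ edges pairing each $a \in A$ with a distinct low-degree vertex of $U$, deleting up to $|A|$ low-impact edges of $G'$ to free degree capacity if necessary. For $d = 3$, the degree budget at $A$ is already full, so instead select $\lfloor |A|/2 \rfloor$ disjoint edges of $H_A$, remove each, and replace it by two bridges to distinct vertices of $U$; this preserves the degree of every $a \in A$ at exactly $3$. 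In both cases $H$ has maximum degree at most $d$, and the total number of modified edges is $O(d|A|) \le \varepsilon d |V|$ by $|A| \le \varepsilon |V|/9$.

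\textbf{Expansion analysis.} Fix $S \subseteq V$ with $|S| \le |V|/2$, and write $S_A = S \cap A$, $S_U = S \cap U$, $s_A = |S_A|$, $s_U = |S_U|$. I will lower bound $e_H(S, V\setminus S)$ by summing three contributions: $G'$ gives $\ge c_{\ref{lemma:subset1}} \alpha d \cdot \min(s_U, |U|-s_U)$; $H_A$ gives $\ge 3 c_{\exp} \min(s_A, |A|-s_A)$; and the bridges contribute $|S_A \triangle T|$ where $T = \{a : u_a \in S_U\}$, a quantity the adversary can drive to $0$ only by forcing $s_U \ge s_A$ (because $|T| \le s_U$). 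A short case analysis closes the argument. When $s_A \le |A|/2$, the $H_A$-term $3 c_{\exp} s_A$ dominates $\alpha d s_A$ (using $\alpha \le c_{\exp}/(150 d)$), and the $s_U$-part is taken care of by the $G'$-term; in the subcase $s_U > |U|/2$ we use $|U|-s_U \ge |V|/2 - |A| \ge (8/9)|S|$ (which uses $|A| \le |V|/18$ since $\varepsilon \le 1/2$) so that $c_{\ref{lemma:subset1}} \alpha d (|U|-s_U) \ge \alpha d |S|$ for $c_{\ref{lemma:subset1}} \ge 9/8$. When $s_A > |A|/2$, necessarily $s_U \le |U|/2$, and either the bridges contribute $\ge s_A - s_U$ crossings, which together with $c_{\ref{lemma:subset1}} \alpha d s_U$ suffice, or the adversary annihilates them by choosing $s_U \ge s_A$, in which case the $G'$-term alone satisfies $c_{\ref{lemma:subset1}} \alpha d s_U \ge 2 \alpha d s_U \ge \alpha d (s_A + s_U)$ for $c_{\ref{lemma:subset1}} \ge 2$.

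\textbf{Main obstacle.} The technical core is the joint case analysis, especially the ``large-$s_A$'' regime: the internal expansion of $H_A$ alone cannot force a $\ge \alpha d s_A$ crossing when $s_A$ is close to $|A|$, so one must either harvest edges from the bridges or argue that the adversary who cancels the bridges is then forced into a configuration caught by $G'$'s expansion. The assumption $\alpha \le c_{\exp}/(150 d)$ is exactly what makes the $H_A$-term absorb $\alpha d s_A$ in the small-$s_A$ regime, while the slack in $c_{\ref{lemma:subset1}} \alpha \gg \alpha$ absorbs both the ``large-$s_A$ with $s_U \ge s_A$'' case and the minor $O(|A|)$ edge deletions used to free bridge capacity in $G'$.
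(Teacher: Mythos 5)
Your overall plan is the same as the paper's (delete the edges at $A$, install a bounded-degree expander on $A$, re-attach it to $V\setminus A$ by a matching while respecting the degree bound, then do a cut case analysis), but two steps that you treat as routine are exactly where the paper has to work hardest, and as written they are genuine gaps. First, your $d=3$ branch removes $\lfloor|A|/2\rfloor$ disjoint edges (essentially a perfect matching) from the $3$-regular expander $H_A$, so the graph left inside $A$ is $2$-regular, i.e.\ a union of cycles with conductance $O(1/|A|)$. Your case analysis repeatedly invokes ``the $H_A$-term $\ge 3c_{\exp}\min(s_A,|A|-s_A)$,'' which is false for the graph you actually built; a long arc of one of those cycles has only $2$ internal crossing edges. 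The paper avoids this by bridging only about half of $A$: it keeps an intact max-degree-$3$ expander on $A'$ and creates the bridge-receiving vertices $A''$ by \emph{subdividing} $|A''|$ of its edges, so the expander core survives and one can still prove the (weaker, but valid) bound $e_H(X,V\setminus X)\ge\frac{1}{15}c_{\exp}\min\{|X\cap A|,|A\setminus X|\}$ of Claim \ref{claim:edges1}.

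Second, the degree capacity you free on the $U$-side by ``deleting up to $|A|$ low-impact edges of $G'$'' is not accounted for in your expansion analysis: you still charge the full $c_{\ref{lemma:subset1}}\alpha d\min(s_U,|U|-s_U)$ to $G'$ and claim the multiplicative slack in $c_{\ref{lemma:subset1}}$ absorbs the deletions. Since $\alpha$ may be arbitrarily small, a constant-factor slack cannot absorb an additive loss of up to $|A|$ edges. Worse, nothing in your construction forbids the bad instantiation where the bridge partners form a sparsest cut $W$ of $G'$ with $e_{G'}(W,U\setminus W)\le|A|$ and the ``low-impact'' deletions are exactly those cut edges; then $S=A\cup W$ has \emph{zero} crossing edges in $H$ (the $H_A$-term and the bridge term both vanish because $S\supseteq A$ and $W$ is the full set of partners), so $\phi(H)=0$. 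This is precisely what the paper's proof is engineered to prevent: the set $S$ produced by {\sc ConstructS} frees capacity either at genuinely low-degree vertices or by deleting an edge \emph{both} of whose endpoints then receive bridges, and Claim \ref{claim:edges2} shows that every deleted edge crossing a cut is compensated by a new bridge edge except in configurations whose number is bounded by $\min\{|X\cap A|,|A\setminus X|\}$ --- the same quantity that Claim \ref{claim:edges1} controls, which is what makes the final case analysis close. To repair your proof you would need both of these ingredients: keep an intact expander core on $A$ (bridge only degree-deficient vertices of it), and specify the capacity-freeing deletions so that each lost $G'$-edge is compensated by bridges at \emph{both} of its endpoints, with the residual loss bounded by the two-sided minimum rather than by $|A|$.
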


\begin{proof}
Let $c_{\ref{lemma:subset1}}$ be a sufficiently large constant whose value will be determined later. Let $G$ be a graph as in the lemma and let $A \subseteq V$ be an arbitrary set such that $A \subseteq V$ with $|A| \le \frac19 \varepsilon |V|$ and $\phi(G[V \setminus A]) \ge c_{\ref{lemma:subset1}} \cdot \alpha$. We will turn $G$ into a graph $H$ by modifying at most $\varepsilon d n$ edges of $G$ and then prove that $\phi(H) \ge \alpha$. This will conclude the proof.

Our construction removes all edges between vertices in $A$ and adds an expander graph with maximum degree $3$ on $A$ that has a constant fraction of vertices of degree $2$. The degree $2$ vertices are
then connected to vertices $V\setminus A$. In order to not violate the degree bound, we have to remove some edges between vertices in $V\setminus A$, which is done using the following construction.

We will first construct an auxiliary set $S$ of size $\lceil |A|/4 \rceil$. Each element of set $S$ is an edge $\{u,v\}$ for some $u, v \in V \setminus A$ (we allow selfloops).
The set $S$ can be constructed by the following algorithm.
\begin{center}
\begin{tabular}{|p{0.5\textwidth}|}
\hline
{\sc ConstructS}($G,A)$\\
\hline\\[-0.35in]
\begin{tabbing}
\hspace{0.5cm}\= $Q_L = \{ u \in V\setminus A: d_G(u) \le d-2 \}$ \\
\> $S' = \{\{v,v\}: v \in Q_L\}$\\
\> $U = (V \setminus A) \setminus Q_L$\\
\> \textbf{while} there is $v \in U$ with at least one neighbor in $U$ \textbf{do}\\
\>\hspace{0.5cm}\= let $u \in U$ be a neighbor of $v$\\
\>\> $S' = S' \cup \{\{u,v\}\}$\\
\>\> $U = U \setminus \{u,v\}$\\
\> \textbf{return} set $S$ defined as an arbitrary subset of $S'$ of size $\lceil |A|/4 \rceil$
\end{tabbing}\\
\hline
\end{tabular}
\end{center}
\junk{
\begin{center}
\begin{tabular}{|p{0.5\textwidth}|}
\hline
{\sc ConstructS}($G,A)$\\
\hline\\[-0.35in]
\begin{tabbing}
\hspace{0.5cm}\= $S = \emptyset$\\
\> $U = V-A$\\
\> {\bf for each } $v \in U$ {\bf do} \\
\> \hspace{0.5cm} \= {\bf} if $d_G(v)\le d-2$ {\bf then }\\
\>\>\hspace{0.5cm}\= $ S = S \cup \{\{v,v\}\}$\\
\>\>\> $U = U - \{v\}$\\
\> {\bf for each } $v \in U$ {\bf do}\\
\>\> {\bf if} $d_G(v) \ge d-1$ {\bf then }\\
\>\>\> {\bf if} $v$ has at least one neighbor in $U$ {\bf then } \\
\>\>\>\hspace{0.5cm}\= Let $u$ be such a neighbor of $v$\\
\>\>\>\> $S=S \cup \{\{u,v\}\}$\\
\>\>\>\> $U= U - \{u,v\}$
\end{tabbing}\\
\hline
\end{tabular}
\end{center}
}

We prove that {\sc ConstructS} ensures that $|S'| \ge \frac{1}{6} |V|$, which implies that the last step of the algorithm can always be executed and we get $|S| = \lceil |A|/4 \rceil$.

\begin{claim}
If algorithm {\sc ConstructS} is invoked with $A$ that satisfies $|A| \le \frac19 \varepsilon |V|$, $0 < \varepsilon \le \frac12$,
then the constructed set $S'$ has size at least $\frac16 |V|$.
\end{claim}

\begin{proof}
We first observe that at the end of the algorithm, each vertex in $U$ has degree at least $d-1$ and all the neighbors of vertices in $U$ belong to $V \setminus U$. This implies that the number of edges connecting $U$ and $V \setminus U$ is on one hand, at least $(d-1)|U|$, and on the other hand, it is at most $d |V \setminus U|$. Therefore, $d |V \setminus U| \ge (d-1)|U|$, and since $d \ge 3$, this yields $|V \setminus U| \ge \frac23 |U|$, and thus $|U| \le \frac35 |V|$.

Now, we observe that $|S'| \ge \frac12 |(V \setminus A) \setminus U|$, and therefore $|S'| \ge \frac12 (|V| - |A| - |U|) \ge \frac12 (|V| - \frac{1}{18} |V| - \frac35|V|) = \frac{31}{180} |V| \ge \frac16 |V|$, for every $A$ that satisfies the prerequisites of the claim.
\end{proof}

We next describe our construction of the graph $H$. If $|A| \ge 10$, then we proceed as follows. We partition $A$ into two sets $A'$ and $A''$, with $|A''| = 2 \cdot  \lceil |A|/4 \rceil$. Let $H' =(A', E')$ be a graph with degree at most $3$ and 
$\phi(H') \ge c_{\exp}$, whose existence follows from our definition of $c_{\exp}$. Since adding edges (while maintaining the degree bound) does not decrease the conductance and since $|A| \ge 10$, we may assume that $H'$ has at least $|A''|$ edges. Let $H^* = (A,E^*)$ be a graph obtained from $H'$ by taking an arbitrary set of $|A''|$ edges from $E'$ and replacing them by a path of length two, whose intermediate vertex is from $A''$ in such a way that every vertex from $A''$ is used exactly once.

If $1< |A| < 10$ we define $H^*=(A,E^*)$ to be a path and choose $A''$ to be an arbitrary subset of $A$ of size $2 \lceil \frac{|A|}{4} \rceil$. If $|A|=1$ we define $H^* = (A,E^*)$ with $E^* = \emptyset$, and set $A' =\emptyset$ and $A'' = A$.

Now we will modify $G$ by changing at most $\varepsilon d n$ edges to construct graph $H$ such that $\phi(H) \ge \alpha$. We first remove in $G$ all edges incident to $A$ and then all edges that connect the sets $s \in S$ in $G$ (i.e., we remove from $E$ all edges $(u,v)$ with $u, v \in s$). Then we add an arbitrary perfect matching between the vertices in $A''$ and $S$ (if a vertex appears twice in $s \in S$ then it will be matched to two vertices of $A''$; if $|A''|=1$, then the vertex $v$ from $A''$ will be match to both vertices from $s \in S$. If, in this case, $s=(u,u)$ we only add the edge $(u,v)$).
Finally, we add all edges $E^*$ from the graph $H^*$ defined above.

Our construction creates a new graph $H$ from $G$ by making at most $(d+1)|A|$ edge deletions and $3|A|$ edge insertions. Hence, we modified at most $(d+4) |A| \le \varepsilon d|V|$ edges, as required.

Next we prove that $\phi(H) \ge \alpha$. We begin with two auxiliary claims about construction of $H$.



\begin{claim}
\label{claim:edges1}
Let $X \subseteq V$ be an arbitrary set of size at most $\frac12 |V|$. Then the following holds:
%
\begin{displaymath}
    e_H(X, V \setminus X)
        \ge
    \frac{1}{15} c_{\exp} \cdot \min\{|X \cap  A| , |A \setminus X|\}
    \enspace.
\end{displaymath}
\end{claim}

\begin{proof}
If $|A| = 1$ the claim trivially holds for every set $X$. Thus, we can assume $|A| \ge 2$. Let $X$ be a subset of $V$ of size at most $\frac 12 |V|$. If $|A| < 10$, we get $e_H(X, V \setminus X) \ge e_H(X \cap A, A \setminus X) \ge \frac{1}{10} \cdot \min\{|X \cap  A| , |A \setminus X|\}$, since either the minimum is $0$ or there is at least one edge connecting the two sets. Since $c_{exp}\le \frac12$, this implies the claim.

Now we consider the case $|A| \ge 10$. Consider an arbitrary set $Y \subseteq A$ with $|Y| \le \frac12 |A|$. Let $Y' = Y \cap A'$ and $Y'' = Y \cap A''$. Let us first focus on the construction of graph $H^*$ (which is a subgraph of $H$). Let $Y^* \subseteq Y''$ be the set of vertices from $Y''$ with both of its neighbors (in $H^*$) to be in $Y$ (and hence, in fact, in $Y' \subseteq A'$).

We consider two cases. If $|Y'' \setminus Y^*| \ge \frac12 |Y|$ then since each vertex in $Y'' \setminus Y^*$ is adjacent in $H^*$ to at least one vertex not in $Y$, we obtain $e_{H^*}(Y, A \setminus Y) \ge |Y'' \setminus Y^*| \ge \frac12 |Y|$.

Otherwise we have $|Y'' \setminus Y^*| < \frac12 |Y|$, and thus $|Y'| + |Y^*| > \frac12 |Y|$. Since each vertex in $Y'$ has degree at most $3$ in $H^*$ and each vertex in $Y^*$ is adjacent in $H^*$ to exactly two vertices from $Y'$, we have $|Y^*| \le \frac32 |Y'|$. Hence, if we combine the bounds $|Y'| + |Y^*| > \frac12 |Y|$ and $|Y^*| \le \frac32 |Y'|$, then we obtain $|Y'| > \frac15 |Y|$.
Now we make another case distinction.

If $|Y'| \le \frac{9}{10} |A'|$, then $|A' \setminus Y'| \ge \frac{1}{10} |A'| \ge \frac 19 |Y'|$. Note that in our construction of $H^*$ from $H'$, if an edge $(u,v)$ with $u \in A' \setminus Y'$ and $v\in Y'$ is replaced by a path of length $2$ with intermediate vertex $w\in A''$, then at least one of the edges $(u,w)$ and $(v,w)$ lies between $Y$ and $A \setminus Y$ in $H^*$. Therefore, 
%
%
\begin{displaymath}
    e_{H^*}(Y, A \setminus Y)
        \ge
    e_{H'}(Y', A \setminus Y')
        \ge
    3 c_{exp} \min\{|Y'|, |A' \setminus Y'|\}
        \ge
    \tfrac13 c_{exp} |Y'|
        \ge
    \tfrac{1}{15} c_{exp} |Y|
        \enspace.
\end{displaymath}

Otherwise, $|Y'| \ge \frac{9}{10} |A'|$.
In our construction we replace $2 \cdot \lceil |A|/4 \rceil$ edges of $H'$ by paths of length $2$. Since $|A' \setminus Y'| \le \frac {1}{10} |A'| \le\frac {1}{10} |A| $ and since $H'$ has maximum degree $3$, there are at most $\frac{3}{20} |A|$ edges with both endpoints in $A'\setminus Y'$ that are replaced. Therefore, there are $2 \lceil |A|/4 \rceil - \frac{3}{20} |A| \ge \frac{7}{20} |A|$ edges replaced that in $H'$ are incident to a vertex from $Y'$. Thus, in $H^*$ there are at least $\frac{7}{20} |A|$ edges leaving $Y'$.
Since $|Y'| \ge \frac{9}{10} |A'|$ and 
$|A'| \ge \frac 25 |A|$, we have $|Y'| \ge \frac{9}{25} |A|$.
Therefore our assumption that $|Y| \le \frac12 |A|$ yields $|Y \setminus Y'| \le \frac{7}{50} |A|$. This gives us 
$e_{H^*}(Y, A \setminus Y) \ge \frac{7}{20} |A| - \frac{7}{50} |A| = \frac{21}{100} |A| \ge \frac{1}{5} |A|\ge \frac{1}{5} |Y| \ge \frac{1}{15} c_{\exp} |Y|$.

Therefore, we get $e_{H^*}(Y, A \setminus Y) \ge \frac{1}{15} c_{\exp} \cdot |Y|$ for the case that $|Y'' \setminus Y^*| < \frac12 |Y|$.

If we combine the bounds for these two cases together, then we obtain that for any $Y \subseteq A$ with $|Y| \le \frac12 |A|$, we have $e_{H^*}(Y, A \setminus Y) \ge \min\{\frac12 |Y|, \frac{1}{15} c_{\exp} |Y|\} = \frac{1}{15} c_{\exp} |Y|$. This further implies that for any $Y \subseteq A$, $e_{H^*}(Y, A \setminus Y) \ge \frac{1}{15} c_{\exp}\min\{|Y|, |A \setminus Y|\}$. 

Now we will extend the analysis to the graph $H$. We have $e_H(X, V \setminus X) \ge e_{H^*}(X \cap A, A \setminus X) \ge \frac{1}{15} c_{\exp} \min\{|X \cap A|, |A \setminus X|\}$.
\end{proof}

\begin{claim}
\label{claim:edges2}
Let $X \subseteq V$ be an arbitrary set of size at most $\frac12 |V|$, $A \subseteq V$ with $|A| \le \frac19 \varepsilon |V|$ and $\varepsilon \le \frac12$. Then the following holds:
%
\begin{displaymath}
    e_H(X, V \setminus X)
        \ge
    \tfrac45 \cdot c_{\ref{lemma:subset1}} \cdot d \cdot \alpha \cdot |(V \setminus A) \cap X| -
        \min\{|X \cap A|, |A \setminus X|\}
    \enspace.
\end{displaymath}
\end{claim}

\begin{proof}
For simplicity of notation, let us define $B = V \setminus A$.  Using the assumption $|A| \le \frac19 \varepsilon |V|$ and $\varepsilon \le \frac12$, we obtain $|B| \ge (1 - \frac19 \varepsilon) |V| \ge \frac{17}{18} |V|$. Therefore, since $|B \cap X| \le |X| \le \frac12 |V|$, we obtain $|B \cap X| \le \frac{9}{17} \cdot |B|$, and hence $|B \setminus X| = |B| - |B \cap X| \ge \frac{8}{17} \cdot |B|$, what yields $\min\{|B \cap X|, |B \setminus X|\} \ge \frac89 |B \cap X|$.
Next, by the assumption about set $A$ in Lemma \ref{lemma:subset1}, we know that $\phi(G[B]) \ge c_{\ref{lemma:subset1}} \cdot \alpha$. Therefore, $e_{G[B]}(B \cap X, B \setminus X) \ge c_{\ref{lemma:subset1}} \alpha d \min\{|B \cap X|, |B \setminus X|\} \ge \frac89 c_{\ref{lemma:subset1}} \alpha d |B \cap X|$.

The only edges that are removed from $G[B]$ in order to obtain $H$ are the edges between vertices $u,v$ with $u,v \in s$ for all $s \in S$. Consider such an edge $(u,v)$ with $u, v \in s$, $s \in S$. Since we are analysing the size of the cut between $B \cap X$ and $B \setminus X$, we only consider $u \in B \cap X$ and $v \in B \setminus X$. By our construction of $H$, both $u$ and $v$ are connected in $H$ to vertices in $A$. If $u$ is connected to a vertex in $A \setminus X$ or $v$ to a vertex in $A \cap X$, then we get a new cut edge between $B \cap X$ and $B \setminus X$, and thus this will compensate the removal of edge $(u,v)$ from $G[B]$. Therefore, we decrease the number of edges in the cut between $B \cap X$ and $B \setminus X$ only if $u$ is connected to a vertex in $A \cap X$ and $v$ is connected to a vertex in $A \setminus X$. Each vertex in $A$ is adjacent in $H$ to at most one vertex from outside $A$, and therefore the number of such edges is bounded by $\min\{|X \cap A|, |A \setminus X|\}$.

If we summarize this, we obtain $e_H(X, V \setminus X) \ge e_{G[B]}(B \cap X, B \setminus X) - \min\{|X \cap A|, |A \setminus X|\} \ge \frac89 c_{\ref{lemma:subset1}} \alpha d |B \cap X| - \min\{|X \cap A|, |A \setminus X|\} \ge \frac45 c_{\ref{lemma:subset1}} \alpha d |B \cap X| - \min\{|X \cap A|, |A \setminus X|\}$.
\end{proof}

With Claims \ref{claim:edges1} and \ref{claim:edges2} at hand, we are ready to conclude the proof of Lemma \ref{lemma:subset1}. Take an arbitrary set $X \subseteq V$ of size at most $\frac12 |V|$. We will prove that $e_H(X,V \setminus X) \ge \alpha d |X|$, what would immediately imply that $\phi(H) \ge \alpha$.

If $\min\{|X \cap A|, |A \setminus X|\} \ge \frac {15 \cdot d \cdot \alpha}{c_{\exp}} \cdot |X|$, then Claim \ref{claim:edges1} gives that $e_H(X,V \setminus X) \ge \alpha d |X|$. Otherwise, we have $\min\{|X \cap A|, |A \setminus X|\} < \frac{15 \cdot d \cdot \alpha}{c_{\exp}} \cdot |X| \le \frac{1}{10} \cdot |X|$ for our choice of $\alpha$. If the minimum is attained by $|X \cap A|$, then we have $|(V \setminus A) \cap X| \ge \frac{9}{10} \cdot |X|$. Thus Claim \ref{claim:edges2} implies that assuming that $c_{\ref{lemma:subset1}} \ge \frac{30}{c_{\exp}}$, we have $e_H(X,V \setminus X) \ge \frac45 \cdot d \cdot  c_{\ref{lemma:subset1}} \cdot \alpha \cdot \frac{9|X|}{10} - \frac{15 \cdot d \cdot  \alpha}{c_{\exp}}|X| \ge \alpha d\cdot |X|$.

If the minimum is attained by $|A \setminus X|$ we consider two cases. If $|(V\setminus A) \cap X| \le \frac{1}{16} |A|$ then $|X| \le |A| + |(V\setminus A) \cap X| \le \frac{17}{16} |A|$.
In this case, $|X\cap A| = | A \setminus (A\setminus X)| = |A| - |A \setminus X| \ge |A| - |X|/10 \ge \frac{143}{160} |A| \ge \frac45 |A|$.
Since $|A''| \ge \frac12 |A|$ we obtain that $|X \cap A''| \ge |X\cap A| - |A'| \ge \frac45 |A| - \frac12 |A| \ge \frac{3}{10} |A|$.
By construction of $H$ each vertex in $A''$ is connected to a vertex in $V \setminus A$ and each vertex in $V\setminus A$ is connected to at most $2$ vertices in $A''$. Since
$|(V\setminus A) \cap X| \le \frac{1}{16} |A|$ there are at most $\frac18 |A|$ vertices of $(V \setminus A) \cap X$ connected to vertices from $X\cap A''$. Hence, for our choice of $\alpha$
there are at least $\frac{3}{10} |A| - \frac18 |A| \ge \frac{1}{10} |A| \ge \frac{16}{170} |X| \ge \alpha d |X|$ edges leaving $X$.

If $|(V \setminus A) \cap X| > \frac{1}{16} |A|$, then $|(V \setminus A) \cap X| > \frac{1}{16} |A \cap X|$ and thus $|(V \setminus A) \cap X| > \frac{1}{17} |X|$.
Now if $c_{\ref{lemma:subset1}} \ge \frac{350}{c_{\exp}}$, Claim \ref{claim:edges2} gives that $e_H(X,V \setminus X) \ge \frac45 \cdot d\cdot  c_{\ref{lemma:subset1}} \cdot \alpha \cdot \frac{|X|}{17} - \frac{15 \cdot \alpha d}{c_{\exp}}|X| \ge \alpha d |X|$. Therefore, Lemma \ref{lemma:subset1} follows with $c_{\ref{lemma:subset1}} = \frac{350}{c_{\exp}}$.
\end{proof}

Lemma \ref{lemma:subset1} can be applied to construct a large set $A$ with a small cut, as in the following lemma.

\begin{lemma}
\label{lemma:subset}
Let $0 < \alpha \le \frac{c_{\exp}}{150d}$ and $0 < \varepsilon \le \frac12$. If $G = (V,E)$ is $\varepsilon$-far from any graph $H$ with $\phi(H) \ge \alpha$, then there is a subset of vertices
$A \subseteq V$ with $\frac{1}{18} \varepsilon |V| \le |A| \le \frac12 |V|$ such that $\phi_G(A) \le c_{\ref{lemma:subset}} \cdot \alpha$, for some sufficiently large constant $c_{\ref{lemma:subset}}$.
In particular, $e(A, V \setminus A) \le c_{\ref{lemma:subset}} \cdot \alpha \cdot d \cdot |A|$. 
\end{lemma}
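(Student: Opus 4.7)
The plan is to use the contrapositive of Lemma~\ref{lemma:subset1}: since $G$ is $\varepsilon$-far from every graph with conductance $\ge \alpha$, for every $A_0 \subseteq V$ with $|A_0| \le \tfrac19 \varepsilon |V|$ we must have $\phi(G[V\setminus A_0]) < c_{\ref{lemma:subset1}} \cdot \alpha$. From any such $A_0$ we can therefore extract a non-empty $S \subseteq V \setminus A_0$ with $|S| \le |V\setminus A_0|/2$ and $e_G(S, (V\setminus A_0)\setminus S) \le c_{\ref{lemma:subset1}} \alpha d |S|$.

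I would build the desired set by an iterative growth procedure. Set $A_0 := \emptyset$, and while $|A_i| < \tfrac{1}{18}\varepsilon|V|$, apply the contrapositive to $A_i$ to find a set $S_i$ as above, then put $A_{i+1} := A_i \cup S_i$. Since $A_{i+1} = A_i \cup S_i$ and $V\setminus A_{i+1} = (V\setminus A_i)\setminus S_i$, a straightforward decomposition gives $e_G(A_{i+1}, V\setminus A_{i+1}) = e_G(A_i, V\setminus A_{i+1}) + e_G(S_i, V\setminus A_{i+1}) \le e_G(A_i, V\setminus A_i) + c_{\ref{lemma:subset1}} \alpha d |S_i|$, so a trivial induction (together with $\sum_j |S_j| = |A_i|$) yields $e_G(A_i, V\setminus A_i) \le c_{\ref{lemma:subset1}} \alpha d |A_i|$ at every step. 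The process terminates since $|A_i|$ strictly increases and is capped by $|V|$.

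Let $t$ be the stopping step, so $|A_t| \ge \tfrac{1}{18}\varepsilon|V|$. If in addition $|A_t| \le \tfrac12 |V|$, then $A := A_t$ verifies the lemma with $c_{\ref{lemma:subset}} = c_{\ref{lemma:subset1}}$. If instead $|A_t| > \tfrac12 |V|$, then the last addition $S_{t-1}$ caused a set of size $< \tfrac{1}{18}\varepsilon|V|$ to jump past $\tfrac12|V|$; in this case I would take $A := V\setminus A_t$. The size bound $|S_{t-1}| \le |V\setminus A_{t-1}|/2$ gives $|A| = |V\setminus A_{t-1}| - |S_{t-1}| \ge |V\setminus A_{t-1}|/2 \ge \tfrac{17}{36}|V|$ (using $\varepsilon \le \tfrac12$), which lies between $\tfrac{1}{18}\varepsilon|V|$ and $\tfrac12|V|$. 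Moreover $\phi_G(A) = e_G(A_t,V\setminus A_t)/(d|A|) \le c_{\ref{lemma:subset1}} \alpha \cdot |A_t|/|A| \le \tfrac{36}{17} c_{\ref{lemma:subset1}} \alpha$, so the conclusion holds with, say, $c_{\ref{lemma:subset}} = 3 c_{\ref{lemma:subset1}}$.

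The only delicate point is this size control: a single invocation of the contrapositive can add almost $\tfrac12|V|$ vertices at once, so $|A_t|$ can overshoot the upper bound $\tfrac12|V|$ by nearly a factor of two, and the ``take the complement'' step is what keeps the resulting $A$ both large enough and small enough while only losing a small constant factor in the conductance bound.
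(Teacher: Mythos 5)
Your proposal is correct and follows essentially the same route as the paper: both proofs iterate the contrapositive of Lemma \ref{lemma:subset1}, peeling off a sparse-cut set from the remaining graph at each step and telescoping the cut edges to get $e(A_i, V\setminus A_i)\le c_{\ref{lemma:subset1}}\cdot\alpha\cdot d\cdot |A_i|$. The only divergence is the endgame: the paper stops once the union exceeds $\frac19\varepsilon|V|$ and takes either the last piece or the whole union (whichever satisfies both size bounds, yielding constant $2c_{\ref{lemma:subset1}}$), whereas you take the complement of the grown set when it overshoots $\frac12|V|$ (yielding $\frac{36}{17}c_{\ref{lemma:subset1}}\le 3c_{\ref{lemma:subset1}}$); both fixes are valid since the lemma only requires a sufficiently large constant.
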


\begin{proof}
Lemma \ref{lemma:subset1} ensures that if $G$ is $\varepsilon$-far from any graph $H$ with $\phi(H) \ge \alpha$, then for all $A' \subseteq V$ with $|A'| \le \frac19 \varepsilon |V|$ we have $\phi(G[V \setminus A']) < c_{\ref{lemma:subset1}} \cdot \alpha$. In particular, in our case, this will mean that there is a set $B \subseteq V \setminus A'$ with $|B| \le \frac12 |V \setminus A'|$ such that
$e(B, (V \setminus (A' \cup B)) < c_{\ref{lemma:subset1}} \cdot \alpha \cdot d \cdot |B|$.

We will now repeatedly apply Lemma \ref{lemma:subset1} to construct a large set $A$ satisfying the requirements of Lemma \ref{lemma:subset}. Let $A_1 = \emptyset$. We apply Lemma \ref{lemma:subset1} with $A' = A_1$ to obtain a set $A_2$ with $|A_2| \le \frac12 |V \setminus A'|$ and $\phi_{G[V \setminus A']}(A_2) \le c_{\ref{lemma:subset1}} \cdot \alpha$. If $|A_1 \cup A_2| \ge \frac19 \varepsilon |V|$ then we are done. Otherwise, we set $A' = A_1 \cup A_2$ and repeat this process. We continue this process until for the first time, we obtain a set $A_i$ such that $|A_1 \cup \dots \cup A_i| \ge \frac19 \varepsilon |V|$. In that moment, if $|A_i| \ge |A_1 \cup \cdots \cup A_{i-1}|$ then we set $A = A_i$ and otherwise, we put $A = A_1 \cup \dots \cup A_i$.

Our construction ensures that since $|A_1 \cup \dots \cup A_i| \ge \frac19 \varepsilon |V|$, then we have $|A| \ge \frac{1}{18} \varepsilon |V|$.
The upper bound on the size of $A$ follows since $|A_i| \le \frac12 |V|$ and $|A_1 \cup \dots \cup A_{i-1}| < \frac19 \varepsilon |V|$.

Our construction ensures that for every $1 \le j \le i$, $e(A_j, V \setminus (A_1 \cup \dots \cup A_j)) \le c_{\ref{lemma:subset1}} \cdot \alpha \cdot d \cdot |A_j|$. Therefore, since we have $e(A_1 \cup \dots \cup A_j, V \setminus (A_1 \cup \dots \cup A_j)) \le \sum_{s=1}^j e(A_s, V \setminus (A_1 \cup \dots \cup A_s))$, we conclude that $e(A_1 \cup \dots \cup A_j, V \setminus (A_1 \cup \dots \cup A_j)) \le c_{\ref{lemma:subset1}} \cdot \alpha \cdot d \cdot |A_1 \cup \dots \cup A_j|$. Hence, if $A = A_1 \cup \dots \cup A_i$ then we obtain $e(A, V \setminus A) \le c_{\ref{lemma:subset1}} \cdot \alpha \cdot d \cdot |A|$, and if $A = A_i$ then we obtain
\begin{eqnarray*}
    e(A, V \setminus A)
        & = &
    e(A_i, A_1 \cup \dots \cup A_{i-1}) +
        e(A_i, V \setminus (A_1 \cup \dots \cup A_i))
        \\
        & \le &
    e(A_1 \cup \dots \cup A_{i-1}, V \setminus (A_1 \cup \dots \cup A_{i-1})) +
        e(A_i, V \setminus (A_1 \cup \dots \cup A_i))
        \\
        & \le &
    c_{\ref{lemma:subset1}} \cdot \alpha \cdot d \cdot |A_1 \cup \dots \cup A_{i-1}| +
        c_{\ref{lemma:subset1}} \cdot \alpha \cdot d \cdot |A_i|
        \\
        & \le &
    2 c_{\ref{lemma:subset1}} \cdot \alpha \cdot d \cdot |A|
    \enspace,
\end{eqnarray*}
where in the last inequality we use the fact that $|A| = |A_i| \ge |A_1 \cup \cdots \cup A_{i-1}|$.

This completes the proof by setting $c_{\ref{lemma:subset}} = 2 c_{\ref{lemma:subset1}}$.
\end{proof}

Let us extend the notion $e(U_1,U_2)$ to multiple sets and for disjoint subsets $V_1, \dots, V_h$, let us define $e(V_1, \dots, V_h) = \sum_{1 \le i < j \le h} e(V_i, V_j)$.

\begin{lemma}
\label{lemma:partition}
Let $G = (V,E)$ be $\varepsilon$-far from $(k,\phi_{in}^*,\phi_{out}^*)$-clusterable and $\phi_{in}^* \le c_{\exp}/d$. If there is a partition of $V$ into $h$ sets $V_1, \dots, V_h$ with $1 \le h \le k$, such that $e(V_1, \dots, V_h) = 0$, then there is an index $i$, $1 \le i \le h$, with $|V_i| \ge \frac{1}{8k} \cdot \varepsilon |V|$ such that $G[V_i]$ is $\frac{\varepsilon}{2}$-far from any $H$ on vertex set $V_i$ with maximum degree $d$ and $\phi(H) \ge \phi_{in}^*$.
\end{lemma}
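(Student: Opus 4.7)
We prove the contrapositive: assuming that every index $i$ with $|V_i| \ge \frac{\varepsilon |V|}{8k}$ has $G[V_i]$ being $\frac{\varepsilon}{2}$-close to some graph $H_i$ on vertex set $V_i$ of maximum degree $d$ with $\phi(H_i) \ge \phi_{in}^*$, we build a graph $H$ of maximum degree $d$ that is $(k, \phi_{in}^*, \phi_{out}^*)$-clusterable and differs from $G$ in at most $\varepsilon d |V|$ edges, contradicting the $\varepsilon$-farness assumption. Call a part $V_i$ \emph{large} if $|V_i| \ge \frac{\varepsilon |V|}{8k}$ and \emph{small} otherwise. Because $h \le k$, the small parts collectively contain strictly fewer than $k \cdot \frac{\varepsilon |V|}{8k} = \frac{\varepsilon |V|}{8}$ vertices.

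The construction of $H$ modifies only edges lying inside a single part (so no edge crossing any pair $V_i, V_j$ is ever added, and none was present in $G$ since $e(V_1, \dots, V_h) = 0$). For each large part $V_i$ we replace the edges of $G[V_i]$ by the edges of $H_i$, at a cost of at most $\frac{\varepsilon d |V_i|}{2}$ edge changes. For each small part $V_i$ we delete all edges of $G[V_i]$ and install on $V_i$ any max-degree-$d$ graph with conductance $\ge \phi_{in}^*$. Such a graph always exists: a singleton already has $\phi = \frac{1}{d}$ by the paper's convention, and for $|V_i| \ge 2$ we can use the $3$-bounded-degree expander on $|V_i|$ vertices guaranteed by the definition of $c_{\exp}$, which has conductance $\ge c_{\exp}$. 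Since $\phi_{in}^* \le c_{\exp}/d \le 1/d$ and $c_{\exp} \le 1/2$, the installed subgraph meets the conductance bound in every case. The cost per small part is at most $(d+3)|V_i|/2$ edges.

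To cluster $H$ we simply take $C_i := V_i$ for $1 \le i \le h \le k$. The inner conductance of each $C_i$ is at least $\phi_{in}^*$ by construction, and the outer conductance is exactly $0$ because all edges of $H$ remain inside a single $V_i$, giving $\phi_H(V_i) = 0 \le \phi_{out}^*$. Thus $H$ is $(h, \phi_{in}^*, \phi_{out}^*)$-clusterable, hence $(k, \phi_{in}^*, \phi_{out}^*)$-clusterable. Summing the modifications gives at most $\frac{\varepsilon d |V|}{2}$ from large parts plus at most $\frac{(d+3)\,\varepsilon |V|}{16}$ from small parts; the total is at most $\varepsilon d |V|$ whenever $d \ge 1$ (with slack when $d \ge 3$), contradicting the hypothesis.

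The only delicate point is the edge-budget bookkeeping: the large parts alone can absorb half of the permitted $\varepsilon d |V|$ edge changes, so the remaining half must suffice to rewire every small part into a subgraph of conductance $\ge \phi_{in}^*$. This is comfortable because the small parts occupy at most $\varepsilon |V|/8$ vertices in total (this is where the bound $h \le k$ and the threshold $\varepsilon |V|/(8k)$ interact), and the assumption $\phi_{in}^* \le c_{\exp}/d$ is exactly what allows the required rewiring even for parts of very small constant size.
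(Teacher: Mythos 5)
Your proof is correct and takes essentially the same approach as the paper: patch each large part with a nearby max-degree-$d$ graph of conductance at least $\phi_{in}^*$, rewire the small parts with bounded-degree expanders (using $\phi_{in}^* \le c_{\exp}/d$), and count modifications against the $\varepsilon d |V|$ budget, exploiting $e(V_1,\dots,V_h)=0$ so that the parts themselves serve as zero-outer-conductance clusters. The only cosmetic differences are that you argue the contrapositive directly, whereas the paper first builds an intermediate graph $G^*$ (one expander on the union of all small parts) that is $\frac{\varepsilon}{2}$-far and then averages the costs $\kappa_i$ over the large parts, and that you install a separate expander on each small part rather than a single expander on their union; both variants succeed for the same reasons.
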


\begin{proof}
Let us renumber the indices of sets $V_1, \dots, V_{\siz}$ such that $|V_i| \ge |V_{i+1}|$ for every $i$, $1 \le i < \siz$. A set $V_i$ with more than $\frac{1}{8k} \varepsilon |V|$ vertices is called \emph{large} and otherwise it is called \emph{small}. Let $s$ be the largest index such that $V_s$ is large. (Simple counting arguments implies that we must have $|V_1| \ge \frac{|V|}{\siz}$ (for otherwise we would have $|V_i| < \frac{|V|}{\siz}$ for every $i$, $1 \le i \le \ell$, and thus $\sum_{i=1}^{\siz}|V_i| < |V|$, which is a contradiction to the fact that $V_1, \dots, V_{\siz}$ is a partition of $V$), and hence $V_1$ is large and $s$ is well-defined.) Next, let us observe that $\sum_{1 \le i \le \siz: V_i \text{ is small}} |V_i| \le \frac{1}{8k} k \varepsilon |V| = \frac18 \varepsilon |V|$. This follows from $\siz \le k$ and from the fact that for a small set $V_i$ we have $|V_i| \le \frac{1}{8k} \varepsilon |V|$.

Let us construct from $G$ a new graph $G^*$ of maximum degree at most $d$ as follows.  Define $U = \bigcup_{i: V_i \text{ is small}} V_i = \bigcup_{i=s+1}^{\siz} V_i$, and remove in $G$ all edges incident to any vertex in $U$. Then build a degree $3$ \ $c_{\exp}$-expander on $U$ and add it to the graph. Note that with respect to $d$, this expander is a $\frac{c_{exp}}{d}$-expander. Call the obtained graph $G^*$.

Observe that $G^*$ has been obtained from $G$ by adding/inserting at most $ d |U| + 3 |U|$ edges, where the first term corresponds to the removal of all edges incident to $U$ and the second term corresponds to building the degree $3$ $c_{\exp}$-expander on $U$.

Now, since $|U| = \sum_{1 \le i \le \siz: V_i \text{ is small}} |V_i| \le \frac18 \varepsilon |V|$, as we have shown above, we note that $G^*$ is obtained from $G$ by adding/deleting at most $ 2\cdot \frac{d}{8} \varepsilon |V| \le \frac{d}{2} \varepsilon |V|$ edges.
Hence, since $G^*$ has maximum degree at most $d$, $G^*$ is $\frac12 \varepsilon$-far from $(k, \phi^*_{in}, \phi^*_{out})$-clusterable.

Observe the structure of $G^*$: it consists of a $\frac{c_{\exp}}{d}$-expander on $U$ and $s$ disjoint components (not necessarily connected) on vertex sets $V_i$ with each $V_i$ being a large set and $G^*[V_i] = G[V_i]$; further, $\phi_{G^*}(U) = \phi_{G^*}(V_1) = \dots = \phi_{G^*}(V_s) = 0$.

For every $i$, $1 \le i \le s$, let us define $H_i$ to be the graph on vertex set $V_i$ with maximum degree at most $d$, with $\phi(H_i) \ge \phi^*_{in}$, and that is obtained from $G^*[V_i]$ by the minimum number of addition/deletion of the edges; let $\kappa_i$ be the number of addition/deletion of the edges needed to transform $G^*[V_i]$ into $H_i$. 

Let us observe that the graph $H$ on $V$ obtained as the union of $G^*[U]$ and $H_1, \dots, H_s$ is $(k, \phi^*_{in}, \phi^*_{out})$-clusterable. Indeed, since we have $H[U] = G^*[U]$, $H[V_i] = H_i$ for every $i$, $1 \le i \le s$, and $\phi_H(U) = \phi_H(V_1) = \dots = \phi_H(V_s) = 0$, for the partition of $V$ into $U$, $V_1, \dots, V_s$, we obtain that $\phi(H[U]) \ge c_{\exp}/d \ge \phi^*_{in}$ for every $i$, $1 \le i \le s$, and $\phi_H(U) = \phi_H(V_1) = \dots = \phi_H(V_s) = 0 \le \phi^*_{out}$.

We now note that $H$ is obtained from $G^*$ by adding $\sum_{i=1}^s \kappa_i$ edges. Therefore, since $G^*$ is $\frac12 \varepsilon$-far from $(k, \phi^*_{in}, \phi^*_{out})$-clusterable, since $H$ is $(k, \phi^*_{in}, \phi^*_{out})$-clusterable, we must have $\sum_{i=1}^s \kappa_i > \frac12 \varepsilon d |V|$, and thus $\sum_{i=1}^s \kappa_i > \frac12 \varepsilon d \sum_{i=1}^s |V_i|$. Therefore, there must be at least one $j$, $1 \le j \le s$, with $\kappa_j > \frac12 \varepsilon d |V_j|$. In that case, for such a $j$, by the definition of $H_j$, $G^*[V_j] = G[V_j]$ must be $\frac12 \varepsilon$-far from any graph $Q$ on vertex set $V_j$ with $\phi(Q) \ge \phi^*_{in}$ (any such a graph $Q$ must be obtained from $G^*[V_i]$ by at least $\kappa_j > \frac12 \varepsilon d |V_j|$ addition/deletion of the edges), as required.
\end{proof}

We are now ready to prove Lemma \ref{lemma:partition-eps-far-improved}. We will set $\alpha_{\ref{lemma:partition-eps-far-improved}} = \min\{ \frac{c_{exp}}{150d} , \frac{1}{2 k c_{\ref{lemma:subset}}}\}$, and thus we have $\phi_{in}^* \le \frac{\varepsilon}{2 k c_{\ref{lemma:subset}}}$.

Our proof is by induction: we will construct a sequence of partitions $\{V_1\}$, $\{V_1, V_2\}, \dots, \{V_1, \dots, V_{k+1}\}$ of $V$ such that each partition $\{V_1, \dots, V_h\}$ satisfies the following properties:
\begin{enumerate}[(a)]
\item\label{case-a} $|V_i| \ge \frac{\varepsilon^2}{1152k} |V|$ for every $i$, $1 \le i \le h$, and
\item\label{case-b} $e(V_1, \dots, V_h) \le (h-1) \cdot c_{\ref{lemma:subset}} \cdot \phi_{in}^* \cdot d \cdot |V|$.
\end{enumerate}

Our first partition is the trivial partition $\{V\}$, which clearly satisfies our properties. We then apply inductively Lemma \ref{lemma:partition}. Let us consider some partition $\{V_1, \dots, V_h\}$ with $1 \le h \le k$ and assume that this partition satisfies (\ref{case-a}) and (\ref{case-b}). We will show how to refine it to obtain a partition $\{V_1, \dots, V_{h+1}\}$ satisfying properties (\ref{case-a}) and (\ref{case-b}).

Let us first remove from $G$ all edges between pairs of all distinct sets $V_i$ and $V_j$, $1 \le i < j \le h$, to obtain a graph $G'$. Since $\phi_{in}^* \le \frac{\varepsilon}{2 k c_{\ref{lemma:subset}}}$, we have removed $e(V_1, \dots, V_h) \le (h-1) \cdot c_{\ref{lemma:subset}} \cdot \phi_{in}^* \cdot d \cdot |V| \le \frac12 \varepsilon d |V|$ edges from $G$, and therefore $G'$ is $\varepsilon/2$-far from $(k,\phi_{in}^*,\phi_{out}^*)$-clusterable and such that our partition satisfies the prerequisites of Lemma \ref{lemma:partition}.

Then, by Lemma \ref{lemma:partition}, there is a set $V_{i^*}$ with $1 \le i^* \le h$, such that $|V_{i^*}| \ge \frac{1}{8k} \cdot \frac{\varepsilon}{2} \cdot |V|$ and $G'[V_{i^*}] = G[V_{i^*}]$ is $\frac{\varepsilon}{4}$-far from any $H$ on vertex set $V_{i^*}$ with maximum degree $d$ and $\phi(H) \ge \phi_{in}^*$. Next, we apply Lemma \ref{lemma:subset} on $V_{i^*}$ to obtain a set $A \subseteq V_{i^*}$ with
$\frac{\varepsilon/4}{18} \cdot |V_{i^*}| \le |A| \le \frac12 |V_{i^*}|$ such that $e(A, V_{i^*} \setminus A) \le c_{\ref{lemma:subset}} \phi_{in}^* d |V_{i^*}| \le c_{\ref{lemma:subset}} \phi_{in}^* d |V|$. This gives us our new partition $\{V_1, \dots, A, V_{i^*} \setminus A, \dots, V_h\}$.

Using the bound for the size of $V_{i^*}$, we have
$|A| \ge \frac{\varepsilon/4}{18} \cdot |V_{i^*}| \ge \frac{\varepsilon^2}{1152k} \cdot |V|$ and $|V_{i^*} \setminus A| \ge \frac12 |V_{i^*}| \ge \frac{\varepsilon}{32k} \cdot |V|$, and therefore by the induction hypothesis, our new partition satisfies (\ref{case-a}).

In order to prove (\ref{case-b}), we observe the following
\begin{eqnarray*}
    e(V_1, \dots, A, V_{i^*} \setminus A, \dots, V_h)
        & \le &
    e(V_1, \dots, V_{i^*}, \dots, V_h) +
        e(A, V_{i^*} \setminus A)
        \\
        & \le &
    (h-1) \cdot c_{\ref{lemma:subset}} \cdot \phi_{in}^* \cdot d \cdot |V| +
        c_{\ref{lemma:subset}} \cdot \phi_{in}^* \cdot d \cdot |V|
        \\
        & = &
    h \cdot c_{\ref{lemma:subset}} \cdot \phi_{in}^* \cdot d \cdot |V|
    \enspace,
\end{eqnarray*}
where the second inequality follows from our induction hypothesis and the bound above.

In summary, we have proven by induction the existence of a partition of $V$ into $k+1$ sets $V_1, \dots, V_{k+1}$ such that properties (\ref{case-a}) and (\ref{case-b}) are satisfied. Note that since property (\ref{case-b}) implies that for every $i$, $1 \le i \le k+1$, $e(V_i, V \setminus V_i) \le k \cdot c_{\ref{lemma:subset}} \cdot \phi_{in}^* \cdot d \cdot |V|$, we have
\begin{eqnarray*}
    \phi_G(V_i)
        & = &
    \frac{e(V_i, V \setminus V_i)}{d |V_i|}
        \, \le \,
    \frac{k \cdot c_{\ref{lemma:subset}} \cdot \phi_{in}^* \cdot d \cdot |V|}
            {d \cdot |V_i|}
        \, \le \,
    \frac{k \cdot c_{\ref{lemma:subset}} \cdot \phi_{in}^* \cdot |V|}
        {\frac{\varepsilon^2 |V|}{1152k}}
        \, = \,
    \frac{1152 \cdot k^2 \cdot c_{\ref{lemma:subset}}}{\varepsilon^2}
        \cdot \phi_{in}^*
    \enspace.
\end{eqnarray*}

Therefore, Lemma \ref{lemma:partition-eps-far-improved} follows by setting 
$c_{\ref{lemma:partition-eps-far-improved}} = 1152 \cdot k^2 \cdot c_{\ref{lemma:subset}}$.
\qed


\section{Conclusion}
\label{sec:conclusion}

We presented the first study of testing the clusterability of a graph in the bounded degree model, where we used both the inner conductance and outer conductance of a set to measure the quality of a cluster \cite{OT14:expander}. Our main result is an asymptotically optimal (up to polylogarithmic factors) algorithm with running time $\widetilde{O}(\sqrt{n}\cdot \poly({d,k,\varepsilon}))$ to test if a graph is $(k,\phi)$-clusterable or is $\varepsilon$-far from $(k, \phi^*)$-clusterable for $\phi^* = O_{d,k}(\frac{\phi^2 \varepsilon^4}{\log n})$. Our tester uses new ideas of testing pairwise closeness of distributions of random walks starting from a pair of sample vertices and draws from that conclusions on the graph structure. One of the key techniques underlying our analysis is a new application of the recent results on higher order Cheeger inequalities \cite{LOT12:high}.

For further research, one of the major open problem is to narrow the gap between $\phi$ and $\phi^*$, or to prove that the current gap is almost optimal for any tester with similar running time. As we discussed in Section \ref{subsection:Expansion}, fundamentally new ideas are needed here.

It would also be very interesting to gain deeper insights of the structure of graphs that are $\varepsilon$-far from $(k,\phi^*)$-clusterable, that is, to improve Lemma \ref{lemma:partition-eps-far-improved}. More specifically, is it possible to get rid of the dependency of $\varepsilon$ of the upper bounds for inner and/or outer conductance in Lemma \ref{lemma:partition-eps-far-improved}? 


\bibliographystyle{alphabetic}
\bibliography{kclusterability}


\newpage
\appendix
\begin{center}\huge\bf Appendix \end{center}


\section{Useful tools from spectral graph theory}
\label{subsec:defn}

In this section, we introduce some useful tools from spectral graph theory that will be used in our analysis.


\subsection{Elementary facts from spectral graph theory}
\label{subsec:spectra}

Let $G = (V,E)$ be a weighted $d$-regular graph. Recall that we let $\A, \W = \frac{\I+\frac{1}{d}\A}{2}$, and $\LL = \I-\frac{1}{d} \A$ denote the adjacency matrix, the lazy random walk matrix and (normalized) Laplacian matrix of $G$, respectively.

Let $\1_S$ to denote the indicator vector of subset $S\subseteq V$, that is, $\1_S(v)=1$ if $v\in S$ and $\1_S(v)=0$ if $v\notin S$. We let $\1_v=\1_{\{v\}}$. For a vector $\p$, let $\p^T$ denote its transpose and let $\p(S):=\sum_{v\in S}\p(v)$. It is useful to notice that for any probability distribution $\p$ on $V$, $\p (\W)^t$ is the probability distribution of the endpoint of a length $t$ random walk with initial distribution $\p$. In particular, we let $\p_u^t:=\1_u(\W)^t$.

Let $0 = \lambda_1 \le \lambda_2 \le \dots \le \lambda_n \le 2$ be the eigenvalues of $\LL$ and let $\vv_1, \vv_2,\dots, \vv_n$ be the corresponding orthonormal left eigenvectors~\cite{Chu97:spectral}. Let $ \eta_1 \ge \eta_2 \ge \cdots \ge \eta_n$ denote the eigenvalues of $\W$, then it is easy to see that for each $i\le n$, $\eta_i=1-\frac{\lambda_i}{2}$ and $\vv_i$ is the corresponding eigenvector, where $\lambda_i$ and $\vv_i$ are the $i$th eigenvalue and eigenvector of $\LL$, respectively. Therefore, all the eigenvalues of $\W$ are non-negative and no larger than $1$. Note that since $\LL$ (or $\W$) is symmetric, its eigenvectors $\{\vv_i\}_{i=1,\dots, n}$ form an orthonormal basis of the Euclidean space $\mathbb{R}^V$. By the eigendecomposition of $\W$, we have $\W = \sum_{i=1}^n \eta_i \vv_i^T \vv_i = \sum_{i=1}^n (1 - \frac{\lambda_i}{2}) \vv_i^T \vv_i$.

We have the following basic fact.

\begin{fact}
\label{fact:spectra}
For any vertex $u$ and $t\ge 1$, we have
\begin{enumerate}
\item $\1_u = \sum_{i=1}^n \vv_i(u) \vv_i$,
\item $\sum_{i=1}^n \vv_i(u)^2 = 1$,
\item $\p_u^t = \1_u \W^t = \sum_{i=1}^n \vv_i(u) (1 - \frac{\lambda_i}{2})^t \vv_i$.
\end{enumerate}
\end{fact}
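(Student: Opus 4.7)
The plan is to treat this as a direct consequence of the spectral decomposition of the symmetric matrix $\LL$ (equivalently, $\W$) already recalled in the preceding paragraph, together with the basic fact that the orthonormal eigenvectors $\{\vv_i\}_{i=1}^n$ form a basis of $\mathbb{R}^V$. Since all three statements are standard linear-algebra identities, no genuine obstacle is anticipated; the only care required is to be consistent with the row-vector convention used in the paper (recall $\p_u^t = \1_u \W^t$, so $\vv_i$ is a row eigenvector satisfying $\vv_i \W = (1-\lambda_i/2)\vv_i$).

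First, for Item 1, I would expand the characteristic vector $\1_u$ in the orthonormal basis $\{\vv_i\}$: writing $\1_u = \sum_{i=1}^n c_i \vv_i$, the coefficients are given by the inner products $c_i = \langle \1_u, \vv_i\rangle = \1_u \vv_i^T$, which is exactly the $u$-th entry $\vv_i(u)$ of $\vv_i$. This immediately yields $\1_u = \sum_{i=1}^n \vv_i(u)\,\vv_i$. For Item 2, I would then apply Parseval's identity: since the $\vv_i$ are orthonormal,
\begin{displaymath}
1 = \norm{\1_u}_2^2 = \Bigl\langle \sum_{i=1}^n \vv_i(u)\vv_i,\ \sum_{j=1}^n \vv_j(u)\vv_j\Bigr\rangle = \sum_{i=1}^n \vv_i(u)^2.
\end{displaymath}

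Finally, for Item 3, I would use the eigenrelation $\vv_i \W = (1-\lambda_i/2)\vv_i$, iterated $t$ times, to get $\vv_i \W^t = (1-\lambda_i/2)^t \vv_i$. Combining this with the expansion from Item 1 gives
\begin{displaymath}
\p_u^t = \1_u \W^t = \Bigl(\sum_{i=1}^n \vv_i(u)\vv_i\Bigr)\W^t = \sum_{i=1}^n \vv_i(u)\bigl(1-\tfrac{\lambda_i}{2}\bigr)^t \vv_i,
\end{displaymath}
which is the claimed identity. All three items follow uniformly from orthonormality of the eigenbasis, so no step is substantively harder than the others.
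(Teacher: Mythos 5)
Your proof is correct and follows essentially the same route as the paper: expanding $\1_u$ in the orthonormal eigenbasis with coefficients $\vv_i(u)$, using Parseval for the second item, and applying the spectral action of $\W^t$ for the third (the paper multiplies by the eigendecomposition $\W=\sum_i(1-\frac{\lambda_i}{2})\vv_i^T\vv_i$ rather than iterating the eigenrelation, but this is the same computation).
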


\begin{proof}
Since $\{\vv_i\}_{i=1,\dots, n}$ form an orthonormal basis of $\mathbb{R}^V$, we can represent $\1_u$ in terms of this basis, say $\1_u=\sum_{i=1}^n\alpha_i\vv_i$, where $\alpha_i\in \mathbb{R}$ for each $1\le i\le n$. By taking inner product with $\vv_i$ from both sides, we can solve $\alpha_i$ to get $\alpha_i=\langle\1_u, \vv_i\rangle = \vv_i(u)$, for any $i\le n$. Furthermore, $1 = \norm{\1_u}_2^2 = \sum_{i=1}^n \alpha_i^2 = \sum_{i=1}^n \vv_i(u)^2$, and $\1_u \W^t = (\sum_{i=1}^n \alpha_i \vv_i)(\sum_{i=1}^n (1-\frac{\lambda_i}{2}) \vv_i^T \vv_i)^t = \sum_{i=1}^n \vv_i(u) (1 - \frac{\lambda_i}{2})^t \vv_i$. This completes the proof of the fact.
\end{proof}

We also need the following simple fact of the eigenvalue $\lambda_i$ and eigenvector $\vv_i$ of the Laplacian $\LL$, which is known as the Rayleigh quotient formulation of $\lambda_i$ \cite{Chu97:spectral}.

\begin{fact}
\label{fact:eigenvalue}
For any $1 \le i \le n$, $\lambda_i = \frac{\vv_i (d\I-\A)\vv_i^T}{d\vv_i  \vv_i^T} = \frac{\sum_{(u,v) \in E}(\vv_i(u) - \vv_i(v))^2}{\sum_{u} d \vv_i^2(u)} = \frac{\sum_{(u,v) \in E}(\vv_i(u) - \vv_i(v))^2}{d}$.
\end{fact}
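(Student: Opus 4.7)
The plan is to verify the three equalities in turn using elementary linear algebra together with the structure of the $d$-regular graph. Start from the eigenvalue equation $\vv_i \LL = \lambda_i \vv_i$. Right-multiplying both sides by $\vv_i^T$ and using $\LL = \I - \frac{1}{d}\A = \frac{1}{d}(d\I - \A)$ yields $\vv_i \LL \vv_i^T = \frac{1}{d}\vv_i (d\I - \A) \vv_i^T = \lambda_i \vv_i \vv_i^T$, which rearranges to $\lambda_i = \frac{\vv_i (d\I - \A) \vv_i^T}{d\, \vv_i \vv_i^T}$. This gives the first equality without needing to assume that $\vv_i$ is normalized.

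For the second equality, I would expand the quadratic form directly. Writing out $\vv_i (d\I - \A) \vv_i^T$ entry by entry gives $d \sum_u \vv_i(u)^2 - \sum_{u,v} \A(u,v) \vv_i(u) \vv_i(v)$. Since $G$ is $d$-regular, we can rewrite $d \sum_u \vv_i(u)^2 = \sum_u d_G(u) \vv_i(u)^2 = \sum_{(u,v)\in E} \bigl(\vv_i(u)^2 + \vv_i(v)^2\bigr)$, where the sum on the right is over undirected edges, since each vertex $u$ contributes $\vv_i(u)^2$ once for each incident edge. Similarly, $\sum_{u,v}\A(u,v)\vv_i(u)\vv_i(v) = 2\sum_{(u,v)\in E} \vv_i(u)\vv_i(v)$ because the sum over ordered pairs counts each undirected edge twice. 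Combining, $\vv_i(d\I - \A)\vv_i^T = \sum_{(u,v)\in E}\bigl(\vv_i(u)^2 - 2\vv_i(u)\vv_i(v) + \vv_i(v)^2\bigr) = \sum_{(u,v)\in E}(\vv_i(u) - \vv_i(v))^2$. For the denominator, since $G$ is $d$-regular, $\sum_u d\, \vv_i^2(u) = d \sum_u \vv_i(u)^2 = d\,\vv_i \vv_i^T$, matching the previous denominator.

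Finally, the third equality follows because $\vv_i$ was chosen as a unit eigenvector, so by Fact~\ref{fact:spectra} we have $\sum_u \vv_i(u)^2 = \vv_i\vv_i^T = 1$, and hence $\sum_u d\,\vv_i^2(u) = d$.

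There is no real obstacle here; the only subtlety is making sure the edge-sum conventions are consistent (sums over undirected edges versus ordered pairs, which introduces factors of $2$ that cancel properly), and to note that $d$-regularity is what lets us identify $d\I - \A$ with the combinatorial Laplacian $\D - \A$ whose quadratic form is the standard edge-difference sum.
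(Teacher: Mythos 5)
Your proof is correct and follows essentially the same route as the paper: right-multiplying the eigenvalue equation by $\vv_i^T$, expanding the quadratic form of $d\I-\A$ into the edge-difference sum via $d$-regularity (with the factor-of-$2$ bookkeeping between ordered pairs and undirected edges), and using the unit length of $\vv_i$ for the final equality. One tiny quibble: $\sum_u \vv_i(u)^2 = 1$ is just the orthonormality of the eigenvectors, not Fact~\ref{fact:spectra}, whose second item is the different identity $\sum_{i=1}^n \vv_i(u)^2 = 1$ for a fixed vertex $u$.
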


\begin{proof}
By definition, $\vv_i \LL = \lambda_i \vv_i$. Multiplying $\vv_i^T$ in both sides, we have $\vv_i \LL \vv_i^T = \lambda_i \vv_i \vv_i^T$, which gives that $\lambda_i = \frac{\vv_i (\I-\frac{1}{d}\A)\vv_i^T}{\vv_i  \vv_i^T} = \frac{\vv_i (d\I-\A)\vv_i^T}{d\vv_i\vv_i^T}$.

Now noting that for any vector $\vv$, $\vv (d\I) \vv^T = d \sum_u \vv_i(u)^2 = \sum_{(u,v)\in E} (\vv(u)^2 + \vv(v)^2)$, and $\vv \A \vv^T = \sum_{u,v: \A(u,v) \ne 0} \vv(u) \vv(v) = 2 \sum_{(u,v)\in E} \vv(u) \vv(v)$, we have $\vv(d\I-\A)\vv^T = \sum_{(u,v)\in E}(\vv(u)^2+\vv(v)^2-2\vv(u)\vv(v)) = \sum_{(u,v)\in E}(\vv(u)-\vv(v))^2$. Therefore, $\lambda_i = \frac{\vv_i (d\I-\A)\vv_i^T}{d\vv_i\vv_i^T} = \frac{\sum_{(u,v) \in E}(\vv_i(u)-\vv_i(v))^2}{\sum_{u}d\vv_i^2(u)} = \frac{\sum_{(u,v)\in E}(\vv_i(u)-\vv_i(v))^2}{d}$, where the last equation follows from the fact that $\vv_i$ is a unit-length vector for any $1\le i\le n$.
\end{proof}


\subsection{Volume-based definition of conductance and Cheeger's inequality}
\label{subsec:volume-definition}

In this section, we introduce the volume-based definition of conductance that has been used frequently in the literature before (cf. \cite{LOT12:high} and the references therein). In this section, we consider an arbitrary undirected and weighted graph $G=(V,E,w)$.

Let $w(v):=\sum_{(u,v)\in E}w(u,v)$ be the weighted degree of vertex $v$. For a vertex set $S\subseteq V$, let $w(S):=\sum_{v\in S}w(v)$ be the sum of weighted degrees of vertices in $S$. We will refer to $w(S)$ as the \emph{volume} of set $S$. For $S,T\subseteq V$, let $w(S,T):=\sum_{(u,v)\in E,u\in S, v\in T}w(u,v)$ be the sum of weights of edges with one endpoint in $S$ and the other endpoint in $T$. The volume-based conductance of $S$ in $G$ is defined as
\begin{displaymath}
    \phi_G^\vol(S)
        :=
    \frac{w(S,V\setminus S)}{w(S)}
    \enspace.
\end{displaymath}

Let $\phi^\vol(G):=\min_{S:w(S)\le w(V)/2}\phi_G^\vol(S)$. Note that generally, for a $d$-bounded degree graph $G$, the definition of conductance of a set $S$ we are using in the paper is slightly different from the volume-based definition of conductance of $S$ given as above. However, in a weighted $d$-regular graph $G=(V,E)$, these two definitions are identical.

We let $\A$ be the adjacency matrix of the weighted graph $G$, and let $\D$ denote the diagonal matrix with $\D(v,v)=w(v)$. Let $\LL = \I - \D^{-1/2} \A \D^{-1/2}$ denote the normalized Laplacian of $G$, and let $\lambda_i$ denote the $i$th smallest eigenvalue of $\LL$. Cheeger's inequality gives that

\begin{theorem}[\cite{AM85:lambda,Alo86:eigenvalues,SJ89:approximate}]
\label{thm:cheeger}
For any undirected and weighted graph $G$, it holds that
\begin{displaymath}
    \lambda_2/2
        \le
    \phi^\vol(G)
        \le
    \sqrt{2\lambda_2}
        \enspace.
\end{displaymath}
\end{theorem}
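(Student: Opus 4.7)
The goal is Cheeger's inequality $\lambda_2/2 \le \phi^\vol(G) \le \sqrt{2\lambda_2}$ for the normalized Laplacian of a weighted graph. The plan is to prove both inequalities using the variational characterization of $\lambda_2$. Substituting $f = \D^{-1/2} x$ in $\lambda_2 = \min_{x \perp \D^{1/2}\1} \frac{x^T \LL x}{x^T x}$, one obtains
\[
\lambda_2 = \min_{f:\ \sum_v w(v) f(v) = 0} \frac{\sum_{(u,v) \in E} w(u,v) (f(u)-f(v))^2}{\sum_v w(v) f(v)^2}.
\]

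For the easy direction $\lambda_2 \le 2 \phi^\vol(G)$, I would fix any $S$ with $w(S) \le w(V)/2$ and plug in the test function $f = \frac{1}{w(S)} \1_S - \frac{1}{w(V\setminus S)} \1_{V\setminus S}$, which satisfies $\sum_v w(v) f(v) = 0$. A direct computation gives numerator $w(S, V\setminus S)\bigl(\tfrac{1}{w(S)} + \tfrac{1}{w(V\setminus S)}\bigr)^2$ and denominator $\tfrac{1}{w(S)} + \tfrac{1}{w(V\setminus S)}$, so the Rayleigh ratio equals $w(S,V\setminus S)\bigl(\tfrac{1}{w(S)} + \tfrac{1}{w(V\setminus S)}\bigr) \le 2\phi^\vol_G(S)$, using $w(V\setminus S) \ge w(S)$. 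Minimizing over $S$ yields $\lambda_2 \le 2\phi^\vol(G)$.

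For the hard direction $\phi^\vol(G) \le \sqrt{2\lambda_2}$, let $f$ attain the minimum above. Apply the standard Cheeger reduction: pick a weighted median $c$ of $f$, split $f-c$ into its positive and negative parts $g_+, g_-$ supported on disjoint sets each of weight at most $w(V)/2$. The numerator only shrinks under this split (cross-support edges contribute $(g_+(u)+g_-(v))^2 \ge g_+(u)^2 + g_-(v)^2$) and the denominator only grows (because $\sum_v w(v) f(v) = 0$ gives $\sum_v w(v)(f(v)-c)^2 \ge \sum_v w(v) f(v)^2$), so at least one of $g_+, g_-$, call it $g$, has Rayleigh quotient $\le \lambda_2$ and support of volume $\le w(V)/2$. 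Ordering vertices by decreasing $g$-value, the sweep sets $S_t := \{v : g(v) > t\}$ all lie inside $\mathrm{supp}(g)$; setting $\phi^* := \min_t \phi^\vol_G(S_t)$ we have $w(S_t, V\setminus S_t) \ge \phi^* \cdot w(S_t)$ for every $t$.

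The technical heart is Cheeger's Cauchy--Schwarz trick,
\[
\Bigl(\sum_{(u,v)} w(u,v) |g(u)^2 - g(v)^2|\Bigr)^{\!2} \le \Bigl(\sum_{(u,v)} w(u,v)(g(u)-g(v))^2\Bigr)\Bigl(\sum_{(u,v)} w(u,v)(g(u)+g(v))^2\Bigr),
\]
where the second factor on the right is bounded by $2\sum_v w(v) g(v)^2$ via $(a+b)^2 \le 2(a^2+b^2)$. A co-area identity then rewrites $\sum_{(u,v)} w(u,v) |g(u)^2 - g(v)^2| = \int_0^\infty w(S_t, V\setminus S_t)\, d(t^2)$ and $\sum_v w(v) g(v)^2 = \int_0^\infty w(S_t)\, d(t^2)$. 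Combining with $w(S_t, V\setminus S_t) \ge \phi^* w(S_t)$ and integrating yields $\sum_{(u,v)} w(u,v) |g(u)^2 - g(v)^2| \ge \phi^* \sum_v w(v) g(v)^2$. Plugging this into the Cauchy--Schwarz bound and dividing by $(\sum_v w(v) g(v)^2)^2$ gives $(\phi^*)^2 \le 2\lambda_2$, i.e., $\phi^\vol(G) \le \phi^* \le \sqrt{2\lambda_2}$. The main obstacle is precisely this Cauchy--Schwarz-plus-co-area step: it is the unique place where the square root emerges (reflecting the quadratic-versus-linear interplay between the $(g(u)-g(v))^2$ Dirichlet form and the $|g(u)^2-g(v)^2|$ boundary flux), and the bookkeeping of the median reduction --- making sure the resulting nonnegative $g$ has support of volume at most $w(V)/2$ so that every sweep cut is a legal candidate for $\phi^\vol$ --- is the most delicate piece.
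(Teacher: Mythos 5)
Your proposal is correct: both the test-function argument for $\lambda_2/2 \le \phi^\vol(G)$ and the median-split, sweep-cut, Cauchy--Schwarz-plus-co-area argument for $\phi^\vol(G) \le \sqrt{2\lambda_2}$ are the standard proof of Cheeger's inequality, with the right constants and the volume bookkeeping handled properly. The paper itself gives no proof of Theorem \ref{thm:cheeger} --- it is quoted as a classical result from the cited references --- and your argument is essentially the one found there, so there is nothing to reconcile.
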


\junk{
Let $\rho_G^\vol(k)$ denote the minimum value of the maximum conductance over any possible $k$ disjoint nonempty subsets. That is,
\begin{displaymath}
    \rho_G^\vol(k)
        :=
    \min_{\textrm{disjoint $S_1,\dots, S_k$}}\max_{1\le i\le k}\phi_G^\vol(S_i)
        \enspace.
\end{displaymath}

Note that $\rho_G^\vol(2) = \phi^\vol(G)$. Lee et al.\ \cite{LOT12:high} prove the following higher-order Cheeger's inequality.
\begin{theorem}[\cite{LOT12:high}]
\label{thm:highcheeger-weighted}
For any weighted undirected graph $G$ and any $k \ge 2$, it holds that
\begin{displaymath}
    \lambda_k/2
        \le
    \rho_G^\vol(k)
        \le
    c_{\ref{thm:highcheeger}} k^2 \sqrt{\lambda_k}
        \enspace,
\end{displaymath}
where $c_{\ref{thm:highcheeger}}$ is some universal constant.
\end{theorem}
}


\section{On distribution testers: Proof of Lemma \ref{lem:collision}}
\label{subsec:proof-on-distribution}

For the sake of completeness, we give here a proof of Lemma \ref{lem:collision}.

\begin{proof}[Proof of Lemma~\ref{lem:collision}]
The description of the algorithm \textbf{$l_2^2$-norm tester} for testing if $\norm{\p_v^t}_2^2\le \sigma/4$ or $\norm{\p_v^t}_2^2>\sigma$ is very simple: \begin{enumerate}
\item let $Z_v$ denote the number of pairwise self-collisions of the $r$ samples from $\p_v^t$;
\item reject if and only if $Z_v\ge \frac12 \binom{r}{2} \sigma$.
\end{enumerate}

The performance of the above algorithm is guaranteed by the first paragraph of the proof of Lemma 4.2 in \cite{CS10:expansion} (that in turn is built on Lemma 1 in \cite{GR00:expansion}) by setting $\varepsilon = \frac12$ there. It is proven that if $r \ge 16\sqrt{n}$, with probability at least $1-\frac{16\sqrt{n}}{r}$, $\frac12 \binom{r}{2} \norm{\p_v^t}_2^2 \le Z_v \le \frac32 \binom{r}{2} \norm{\p_v^t}_2^2$. Therefore, with probability at least $1-\frac{16\sqrt{n}}{r}$, if $\norm{\p_v^t}_2^2\le \frac{\sigma}{4}$, then $Z_v\le \frac32 \binom{r}{2}\frac{\sigma}{4}<\frac12 \binom{r}{2} \sigma$ and the tester will accept; and if $\norm{\p_v^t}_2^2> \sigma$, then $Z_v\ge \frac12 \binom{r}{2}\sigma$ and the test will reject.
\end{proof}


\section{On tightness of Lemma \ref{lem:clusterable-eigenvector}}
\label{subsec:evidence-tight}

We prove the following lemma to show that Lemma \ref{lem:clusterable-eigenvector} is essentially tight for $k=2$ and constant $\phi_{in}$.

\begin{lemma}
\label{lem:counterexample}
Let $G = (V,E)$ be a $d$-regular graph composed of two parts $A$ and $B$, each of size $n/2$. Let $\phi_G(A) = \phi_G(B) = \phi_{out} \le \frac{1}{4d}$. Let $f := \vv_2$ be the second eigenvector with unit-length of the Laplacian matrix $\LL$ of $G$. Then
\begin{displaymath}
    \max\left\{
        \frac{1}{|A|} \sum_{u,v\in A}(f_u-f_v)^2,
        \frac{1}{|B|} \sum_{u,v\in B}(f_u-f_v)^2
    \right\}
        \ge
    \frac{\phi_{out}}{24d^3}
    \enspace.
\end{displaymath}
\end{lemma}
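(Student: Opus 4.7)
My plan is to reduce the bound on the within-cluster variance of $f := \vv_2$ to bounding the projection deficit $1 - \langle \vv_2, u^\star\rangle^2$, where $u^\star := (\1_A - \1_B)/\sqrt{n}$ is the natural unit ``sign vector'' distinguishing $A$ from $B$, and then exploit the fact that the cross-degrees $d_{AB}(v)$ are integers to force this deficit to be large.

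First, since $f \perp \1$ and $|A| = |B|$, the within-cluster means satisfy $\bar{f}_A = -\bar{f}_B =: a$. Writing $f = a(\1_A - \1_B) + h$ with $h \perp \1_A$ and $h \perp \1_B$, a routine calculation gives $\sum_{v\in A}(f_v - \bar{f}_A)^2 + \sum_{v\in B}(f_v - \bar{f}_B)^2 = \|h\|^2$, while $\|f\|^2 = 1$ gives $\|h\|^2 = 1 - na^2 = 1 - c_2^2$, where I have set $c_2 := \langle u^\star, f\rangle = a\sqrt{n}$. Combining this with the elementary identity $\sum_{u,v\in C}(f_u-f_v)^2 = 2|C|\sum_{u\in C}(f_u - \bar{f}_C)^2$, and using $\max\{2V_A, 2V_B\} \ge V_A + V_B$, the claim reduces to proving
\[
1 - c_2^2 \;\ge\; \frac{\phi_{out}}{24 d^3}.
\]

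The next step is to read off two spectral quantities for $u^\star$. Direct expansion using $\LL = I - \A/d$ and $d$-regularity gives $(\LL u^\star)_v = \frac{2\sigma(v) d_{AB}(v)}{d\sqrt{n}}$, where $\sigma(v) = \pm 1$ indicates the side of $v$. Summing yields the Rayleigh quotient
\[
\langle u^\star, \LL u^\star\rangle = \frac{2}{dn}\sum_v d_{AB}(v) = \frac{4|E_{AB}|}{dn} = 2\phi_{out},
\]
which already forces $\lambda_2 \le 2\phi_{out}$. For the squared norm I use that $d_{AB}(v) \in \mathbb{Z}_{\ge 0}$, so $d_{AB}(v)^2 \ge d_{AB}(v)$, and hence
\[
\|\LL u^\star\|_2^2 = \frac{4}{d^2 n}\sum_v d_{AB}(v)^2 \;\ge\; \frac{4\phi_{out}}{d}.
\]
To conclude, I decompose $u^\star = \sum_{i\ge 2} c_i \vv_i$ in the eigenbasis of $\LL$. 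Since every eigenvalue lies in $[0,2]$, we obtain
\[
\|\LL u^\star\|_2^2 \;=\; \sum_{i\ge 2} c_i^2 \lambda_i^2 \;\le\; c_2^2\lambda_2^2 + 4(1-c_2^2),
\]
which rearranges to $1 - c_2^2 \ge (\|\LL u^\star\|_2^2 - \lambda_2^2)/4 \ge \phi_{out}/d - \phi_{out}^2$. The assumption $\phi_{out} \le 1/(4d)$ then gives $1 - c_2^2 \ge 3\phi_{out}/(4d) \ge \phi_{out}/(24d^3)$, which closes the argument via Step~1.

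The main conceptual obstacle is the integrality used in the lower bound on $\|\LL u^\star\|_2^2$. If $d_{AB}(v)$ were allowed to be an arbitrary real number equal to $d\phi_{out}$ at every vertex, then $u^\star$ itself would be an eigenvector of $\LL$ with eigenvalue $2\phi_{out}$, forcing $\vv_2 = u^\star$ and $V_A = V_B = 0$; the claim would be false. The hypothesis $\phi_{out} \le 1/(4d)$, which enforces $d\phi_{out} < 1$, is precisely what prevents this degeneracy: it creates a provable gap between $\|\LL u^\star\|_2^2$ and $\langle u^\star, \LL u^\star\rangle^2$, and that gap propagates through the eigenvalue-spreading inequality to force $\vv_2$ away from $u^\star$.
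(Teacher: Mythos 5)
Your proof is correct, and it takes a genuinely different route from the paper's. The paper argues by contradiction: assuming both within-cluster potentials are small, it shows almost all vertices lie close to their cluster centers $\Delta_A,\Delta_B$, locates a crossing edge $(u,v)$ whose endpoints are both such ``good'' vertices, and then contradicts the eigenvector identity $\sum_{w\sim u} f_w = d(1-\lambda_2)f_u$ at $u$, using $\lambda_2\le 2\phi_{out}$ and the counting bound $e(A,B)=\phi_{out}d|A|$ exceeding the number of edges touching bad vertices. You instead argue directly: the reduction $\max\ge V_A+V_B = 1-\langle f,u^\star\rangle^2$ is a correct exploitation of $|A|=|B|$ and $f\perp\1$ (the same implicit convention $\vv_1\propto\1$ that the paper uses when it asserts $\sum_u f_u=0$), the computations $\langle u^\star,\LL u^\star\rangle=2\phi_{out}$ and $\norm{\LL u^\star}_2^2=\frac{4}{d^2n}\sum_v d_{AB}(v)^2\ge \frac{4\phi_{out}}{d}$ are right (the latter being the integrality step), and the spreading inequality $\sum_i c_i^2\lambda_i^2\le c_2^2\lambda_2^2+4(1-c_2^2)$ uses only $\lambda_i\in[0,2]$, so the chain $1-c_2^2\ge \frac{\phi_{out}}{d}-\phi_{out}^2\ge\frac{3\phi_{out}}{4d}$ is valid under $\phi_{out}\le\frac{1}{4d}$ (and $\lambda_2\le 2\phi_{out}$ also follows immediately from the paper's Theorem \ref{thm:cheeger}, independently of the eigenbasis convention). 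What your approach buys is a shorter, non-contradiction argument and a strictly stronger bound, $\frac{3\phi_{out}}{4d}$ instead of $\frac{\phi_{out}}{24d^3}$, improving the $d$-dependence; what the paper's approach buys is a more local/combinatorial picture (the obstruction is visible at a single crossing edge), which is the same discreteness your integrality step encodes globally --- your closing remark correctly identifies that some such use of integer cross-degrees is unavoidable, since a fractionally spread cut makes $u^\star$ an exact eigenvector and kills the statement.
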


\begin{proof}
For any subset $U$, we define the potential of $U$ to be
$$\pot(U):=\frac{1}{|U|}\sum_{u,v\in U}(f_u-f_v)^2.$$
Let $x:=\frac{\phi_{out}}{\phi_{out}+12d^2(d+1)}\ge \frac{\phi_{out}}{24d^3}$. We will show that at least one of $\pot(A),\pot(B)$ is larger than $x$.

Assume on the contrary that $\pot(A),\pot(B)\le x$. We will derive a contradiction to the fact that $f$ is the second eigenvector of $\LL$.

First, for any subset $U$, we define the center of $U$ to be $\Delta_U := \frac{\sum_{u \in U} f_u}{|U|}$. Then we have
\begin{displaymath}
    \pot(U) = 2 \sum_{v \in U}(f_v - \Delta_U)^2
    \enspace.
\end{displaymath}

Now our assumption implies that
\begin{eqnarray}
    \sum_{u \in A}(f_u-\Delta_A)^2
        \le
    \frac{x}{2}
        \enspace,
        \quad
    \sum_{u \in B}(f_u-\Delta_B)^2
        \le
    \frac{x}{2}
    \enspace.
    \label{ineq:assumption}
\end{eqnarray}
Furthermore,
\begin{eqnarray}
    \sum_{u \in A}(f_u - \Delta_A)^2 + \sum_{u \in B}(f_u - \Delta_B)^2
        & = &
    \sum_{u \in V} f_u^2 - 2 \sum_{u \in A} \Delta_A f_u + |A| \Delta_A^2 - 2 \sum_{u \in B} \Delta_B f_u + |B| \Delta_B^2
        \nonumber
        \\
        & = &
    1 - \frac{n}{2} (\Delta_A^2 + \Delta_B^2)
        \nonumber
        \\
        & \le &
    x
        \enspace,
    \label{eqn:center-l2-sum}
\end{eqnarray}
where the penultimate equation follows from the fact that $\sum_{u\in A} f_u = |A|\Delta_A$, $|A|=|B|=n/2$ and $\sum_{u}f_u^2=1$ since $f$ is a unit vector.

On the other hand, since $f$ is the second eigenvector of $\LL$, then $\sum_{u}f_u=0$. Furthermore,
\begin{eqnarray}
    \Delta_A + \Delta_B
        =
    \frac{2}{n} (\sum_{u \in A} f_u + \sum_{u \in B} f_u)
        =
    0
    \enspace.
    \label{eqn:centersum}
\end{eqnarray}

Therefore, by inequality (\ref{eqn:center-l2-sum}) and equation (\ref{eqn:centersum}), we have that at least one of $\Delta_A,\Delta_B$ is positive. Wlog., we assume that $\Delta_A>0$. This further implies that $\Delta_A\ge\sqrt{\frac{1-x}{n}}$, and $\Delta_B\le-\sqrt{\frac{1-x}{n}}$.

Let $0<y\le 1$ that will be specified later. Let $A_1:=\{u\in A: (f_u-\Delta_A)^2> \frac{x}{2y|A|}\}$ and let $B_1:=\{u\in B: (f_u-\Delta_B)^2> \frac{x}{2y|B|}\}$. Then by our assumption of inequalities~(\ref{ineq:assumption}), we know that $|A_1|\le y|A|$ and $|B_1|\le y|B|$. We further define $A_2$ to be the subset in $A\setminus A_1$ such that for any $v\in A_2$, at least one of its neighbors is contained in $A_1$ or $B_1$. We define $B_2$ similarly. Since the maximum degree of vertices in $G$ is at most $d$, we know that $|A_2|+|B_2|\le d(|A_1|+|B_1|)$.

We call a vertex $v$ bad if $v$ belongs to $(A_1\cup A_2)\cup(B_1\cup B_2)$. Otherwise, we call $v$ good. Note that the number of bad vertices is equal to $|A_1\cup A_2|+|B_1\cup B_2|\le (d+1)(|A_1|+|B_1|)\le (d+1)y(|A|+|B|)=(d+1)yn$. Also, the number of edges involving any bad vertices is at most $d(d+1)yn$.

Now we let $y=\frac{\phi_{out}}{3(d+1)}$. Since the number of edges between $A$ and $B$ is $e(A,B)=\phi_{out}d|A|>d(d+1)yn$, there exists at least one edge, say $(u,v)\in E$, such that $u\in A$ and $v\in B$ and both $u,v$ are good.

Since $u$ is good, we know that all of its neighbors are in $A\setminus A_1$ or $B\setminus B_1$. Let $d_A, d_B$ denote the number of neighbors of $u$ belonging to $A\setminus A_1, B\setminus B_1$, respectively. By the fact that there exists at least one crossing edge $(u,v)$, we know that $d_B\ge 1$. Note that for any vertex $w \in A \setminus A_1$, $|f_w - \Delta_A| \le \sqrt{\frac{x}{2y|A|}} = \sqrt{\frac{x}{yn}}$, and for any vertex $w \in B \setminus B_1$, $| f_w - \Delta_B| \le \sqrt{\frac{x}{2y|A|}} = \sqrt{\frac{x}{yn}}$. We have that
\begin{eqnarray*}
    \sum_{w: (w,u) \in E} f_w
        & \le &
    (d-1)(\Delta_A + \sqrt{\frac{x}{yn}}) + \Delta_B + \sqrt{\frac{x}{yn}}
        \\
        & = &
    (d-2) \Delta_A + d \sqrt{\frac{x}{yn}}
        \\
        & \le &
    (d-\frac32) \Delta_A
        \\
        & < &
    d (1 - 2 \phi_{out}) (\Delta_A - \sqrt{\frac{x}{yn}})
        \\
        & \le &
    d(1-\lambda_2)f_u
        \enspace,
\end{eqnarray*}
where the second inequality follows by our choices of $x$ and $y$ (since we set $x = \frac{\phi_{out}}{\phi_{out} + 12 d^2 (d+1)}$, we obtain $2 d \sqrt{\frac{x}{yn}} = \sqrt{\frac{1-x}{n}} \le \Delta_A$), the third inequality follows by our assumption that $\phi_{out} \le \frac{1}{4d}$, and in the last inequality we use the fact that $\lambda_2 \le 2 \phi(G) \le 2 \phi_{out}$.

Now since $f$ is the second eigenvector of $\LL$, that is, $f\LL=\lambda_2f$, we know that for each vertex $u$, $\sum_{w:(w,u)\in E}f_w=d(1-\lambda_2)f_u.$ This is a contradiction.
\end{proof}

\begin{remark}
Note that Lemma \ref{lem:counterexample} implies that if a graph is connected by two large clusters $A,B$, each of size $n/2$ and outer conductance $\phi_{out}$, then for at least one cluster, say $A$, the average value of $(\vv_2(u)-\vv_2(v))^2$ over all vertex pairs in $A$ is large. More precisely,
\begin{displaymath}
    \frac{1}{|A|^2}\sum_{u,v\in A}(\vv_2(u)-\vv_2(v))^2
        =
    \Omega\left(\frac{\phi_{out}}{d^3|A|}\right)
        =
    \Omega\left(\frac{\phi_{out}}{nd^3}\right)
    \enspace.
\end{displaymath}

Furthermore, if the inner conductance of each cluster is at least $\phi_{in}$ such that $\phi_{out}=\Theta(\frac{\phi_{in}^2}{\log n})$, then for any $t\le \Theta(\frac{\log n}{\phi_{in}^2})<\frac{1}{10\phi_{out}}$, we have
\begin{eqnarray*}
    \frac{1}{|A|^2}\sum_{u,v\in A}\norm{\p_v^t - \p_u^t}_2^2
        & = &
    \frac{1}{|A|^2} \sum_{u, v \in A} \sum_{i=1}^n (\vv_i(u)-\vv_i(v))^2 \left(1-\frac{\lambda_i}{2}\right)^{2t}
        \\
        & \ge &
    \frac{1}{|A|^2}\sum_{u,v \in A}(\vv_2(u)-\vv_2(v))^2 \left(1-\frac{\lambda_2}{2}\right)^{2t}
        \\
        & \ge &
    \Omega\left(\frac{\phi_{out}}{nd^3}\right) \cdot \left(1-\phi_{out}\right)^{2t}
        \\
        & = &
    \Omega\left(\frac{\phi_{out}}{nd^3}\right)
    \enspace,
\end{eqnarray*}
where the penultimate inequality follows from the inequality that $\lambda_2\le 2\phi_{out}$ and the last inequality follows from our choice of $t$.

Therefore, the average value of $\norm{\p_v^t - \p_u^t}_2^2$ over all vertex pairs $u,v$ in the cluster $A$ is $\Omega(\frac{\phi_{out}}{nd^3})$.
\end{remark}


\end{document}